\providecommand{\keywords}[1]{\textbf{\textit{Keywords: }} #1}
\newcommand{\overlap}{\between}
\newcommand{\mc}{\mathcal}
\newcommand{\lca}{\ensuremath{\operatorname{lca}}}
\newcommand{\Gen}{\ensuremath{\mathbb{G}}}
\newcommand{\Virr}{\ensuremath{V^{\times}_\mathrm{irr}}}
\newcommand{\Vgirr}{\ensuremath{V^{\times}_{g,\mathrm{irr}}}}
\newcommand{\Xirr}{\ensuremath{X^{\times}_\mathrm{irr}}}
\newcommand{\M}{\ensuremath{\mathbb{M}}}
\newcommand{\Mmax}{\ensuremath{\mathbb{M}_{\max}}}
\newcommand{\Ms}{\ensuremath{\mathbb{M}_{\mathrm{str}}}}
\newcommand{\rev}{\ensuremath{\mathrm{rev}}}
\newcommand{\ora}{\overrightarrow}
\newcommand{\vp}{\varphi}
\DeclareFixedFont{\textnew}{OT1}{pzc}{bx}{n}{10pt}
\DeclareFixedFont{\textnewsmall}{OT1}{pzc}{bx}{n}{10.5pt}
\DeclareFixedFont{\textnewlarge}{OT1}{pzc}{bx}{n}{15pt}
\newcommand{\unp}{{\textnew{unp}}}
\newcommand{\captionunp}{{\textnewsmall{unp}}}
\newcommand{\secunp}{{\textnewlarge{unp}}}
\newtheorem{theorem}{Theorem}[section]
\newtheorem{lemma}{Lemma}[section]
\newtheorem{proposition}{Proposition}[section]
\newtheorem{corollary}{Corollary}
\newtheorem{remark}{Remark}
\newtheorem{problem}{Problem}
\newtheorem{definition}{Definition}
\begin{document}

\title{The Mathematics of Xenology: Di-cographs, Symbolic Ultrametrics, 2-structures and 
  Tree-representable Systems of Binary Relations}

\author{Marc Hellmuth\\ University of Greifswald \\ 
			Dpt.\ of Mathematics and Computer Science \\ Walther-
           Rathenau-Strasse 47,  D-17487 Greifswald, Germany,  \\and \\
				Saarland University, Center for Bioinformatics \\ Building E 2.1, 
				P.O.\ Box 151150,  D-66041 Saarbr{\"u}cken, Germany \\
			 Email: \texttt{mhellmuth@mailbox.org} 
				\and  
		Peter F.\ Stadler \\
        Bioinformatics Group, Department of Computer Science; and\\
        Interdisciplinary Center of Bioinformatics, University of Leipzig, \\
        H{\"a}rtelstra{\ss}e 16-18, D-04107 Leipzig, and  \\
        Max-Planck-Institute for Mathematics in the Sciences, \\
        Inselstra{\ss}e 22, D-04103 Leipzig, and \\
        Inst.\ f.\ Theoretical Chemistry, University of Vienna, \\
        W{\"a}hringerstra{\ss}e 17, A-1090 Wien, Austria, and \\
        Santa Fe Institute, 1399 Hyde Park Rd., Santa Fe,\\
        NM 87501, USA, \\
        Email: \texttt{studla@bioinf.uni-leipzig.de}
		\and
		Nicolas Wieseke \\
	Leipzig University\\ Parallel Computing and Complex Systems Group \\ Dpt.\ of 
     Computer Science \\ Augustusplatz 10, D-04109 Leipzig, Germany \\
    Email: \texttt	{wieseke@informatik.uni-leipzig.de}
	}

\date{\ }

\maketitle

\abstract{ 
The concepts of orthology, paralogy, and xenology play a key role in
  molecular evolution. Orthology and paralogy distinguish whether a pair of
  genes originated by speciation or duplication. The corresponding binary
  relations on a set of genes form complementary cographs. Allowing
  more than two types of ancestral event types leads to symmetric symbolic
  ultrametrics. Horizontal gene transfer, which leads to xenologous gene
  pairs, however, is inherent asymmetric since one offspring copy ``jumps''
  into another genome, while the other continues to be inherited
  vertically. We therefore explore here the mathematical structure of the
  non-symmetric generalization of symbolic ultrametrics. Our main results
  tie non-symmetric ultrametrics together with di-cographs (the directed
  generalization of cographs), so-called uniformly non-prime 2-structures,
  and hierarchical structures on the set of strong modules. This yields a
  characterization of relation structures that can be explained in terms of
  trees and types of ancestral events. This framework accomodates 
  a horizontal-transfer relation in terms of an ancestral event and
  thus, is slightly different from the the most commonly used
  definition of xenology.
}

\smallskip
\noindent
\keywords{xenologs; paralogs; orthologs; gene tree; 
  2-structures; uniformly non-prime decomposition;
  di-cograph; symbolic ultrametric}

\sloppy

\section{Introduction}

The current flood of genome sequencing data poses new challenges for
comparative genomics and phylogenetics. An important topic in this context
is the reconstruction of large families of homologous proteins, RNAs, and
other genetic elements. The distinction between orthologs, paralogs, and
xenologs is a key step in any research program of this type. The
distinction between orthologous and paralogous gene pairs dates back to the
1970s:  pairs of genes whose last common ancestor in the ``gene tree''
corresponds to a speciation are orthologs; if the last common ancestor was
a duplication event, the genes are paralogs \cite{Fitch:70}. The importance
of this distinction is two-fold: first it is informative in genome
annotation. Orthologs usually fulfill corresponding functions in related
organism. Paralogs, in contrast, are expected to have similar but distinct
functions \cite{Koonin:05}. Secondly, the orthology (or paralogy) relation
conveys information about the events corresponding to internal nodes of the
gene tree \cite{HernandezRosales:12a} and about the underlying species tree
\cite{Hellmuth:13a,Hellmuth:15a}. 

Based on a theory of symbolic ultrametrics \cite{BD98} is was shown in
\cite{Hellmuth:13a} that the orthology and paralogy relations are
necessarily complementary cographs provided the genetic repertoire evolved
only by means of speciation, gene duplication, and gene loss. However,
horizontal gene transfer (HGT), i.e., the incorporation of genes or other
DNA elements from a source different than the parent(s), cannot be
neglected under many circumstances. In fact, HGT plays an important role
not only in the evolution of procaryotes \cite{Koonin:01} but also in
eukaryotes \cite{Keeling:08}. This begs the question whether the
combinatorial theory of orthology/paralogy can be extended to incorporate
xenologs, i.e., pairs of genes that are separated in the gene tree by
HGT events. 

In contrast to orthology and paralogy, the definition of xenology is less
well established and by no means consistent in the biological
literature. The most commonly used definition stipulates that two genes are
\emph{xenologs} if their history since their common ancestor involves
horizontal transfer of at least one of them \cite{Fitch:00,Jensen:01}. In
this setting the HGT event itself is treated as gene duplication
event. Every homolog is still either ortholog or a paralog. Both
orthologs and paralogs may at them same time be xenologs \cite{Jensen:01}.

\begin{figure}[tbp]
  \begin{minipage}{.45\linewidth}
    \begin{itemize} 
    \item[] \scriptsize $R_{o} = \{dv\mid v\in \Gen\setminus\{d\}\}\cup
       \{ab_1, ac_2, b_1c_2, b_2c_2\}$
    \item[] $R_{x} = \{(c_1,b_3)\}$
    \item[] $R_{p} =\Gen^{\times}_\mathrm{irr}  \setminus 
       (R_o\cup R_x\cup \{(b_3,c_1)\})$ 
    \item[] $xy\in R_{\star}$ means that  $(x,y)(y,x)\in R_{\star}$, 
      with $\star \in \{o,p\}$ 
    \end{itemize}

    \vspace*{1mm}
    \centering
    \includegraphics[scale = 0.35]{./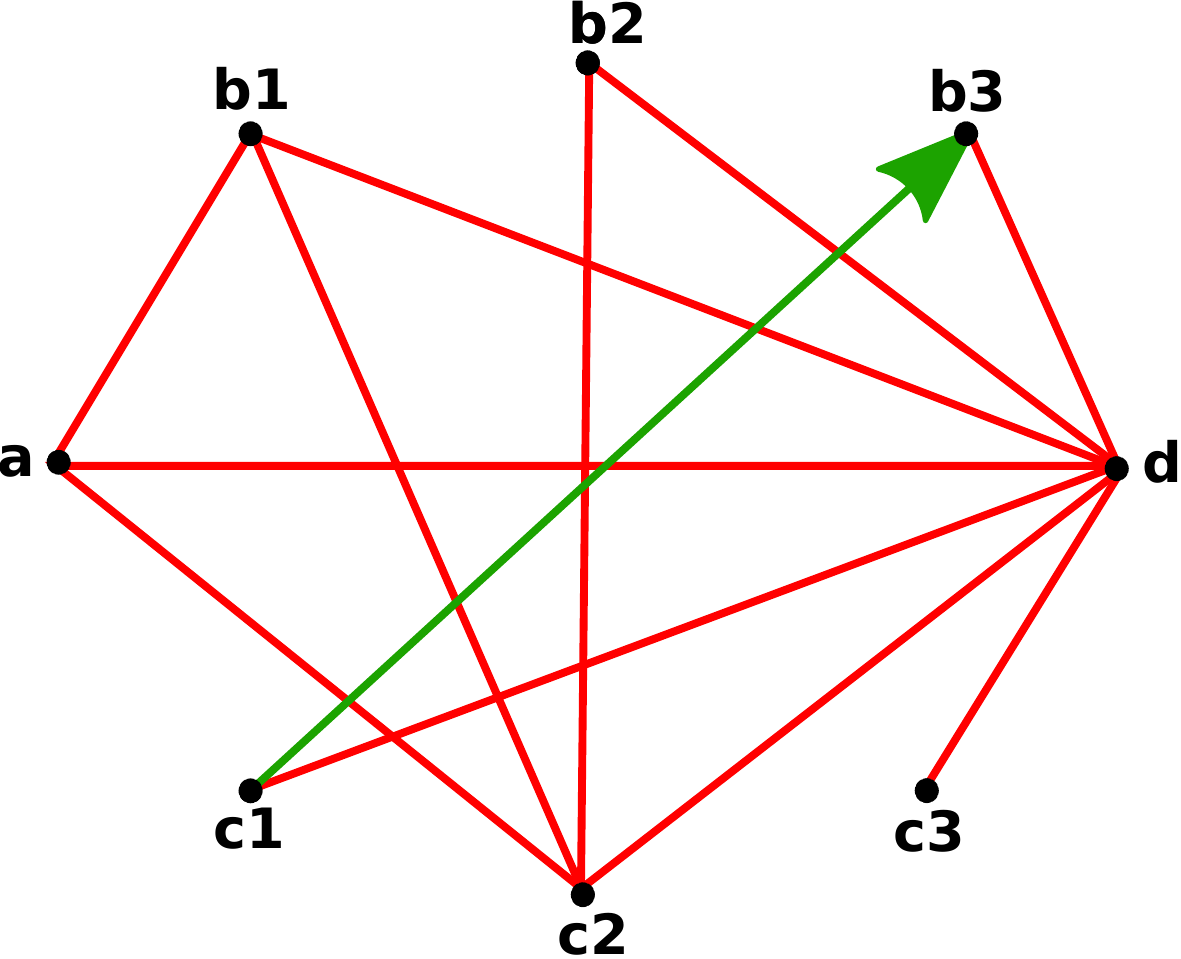}
    \\[0.5cm]
  \end{minipage}
  \begin{minipage}{.4\linewidth}
    \includegraphics[scale = 0.45]{./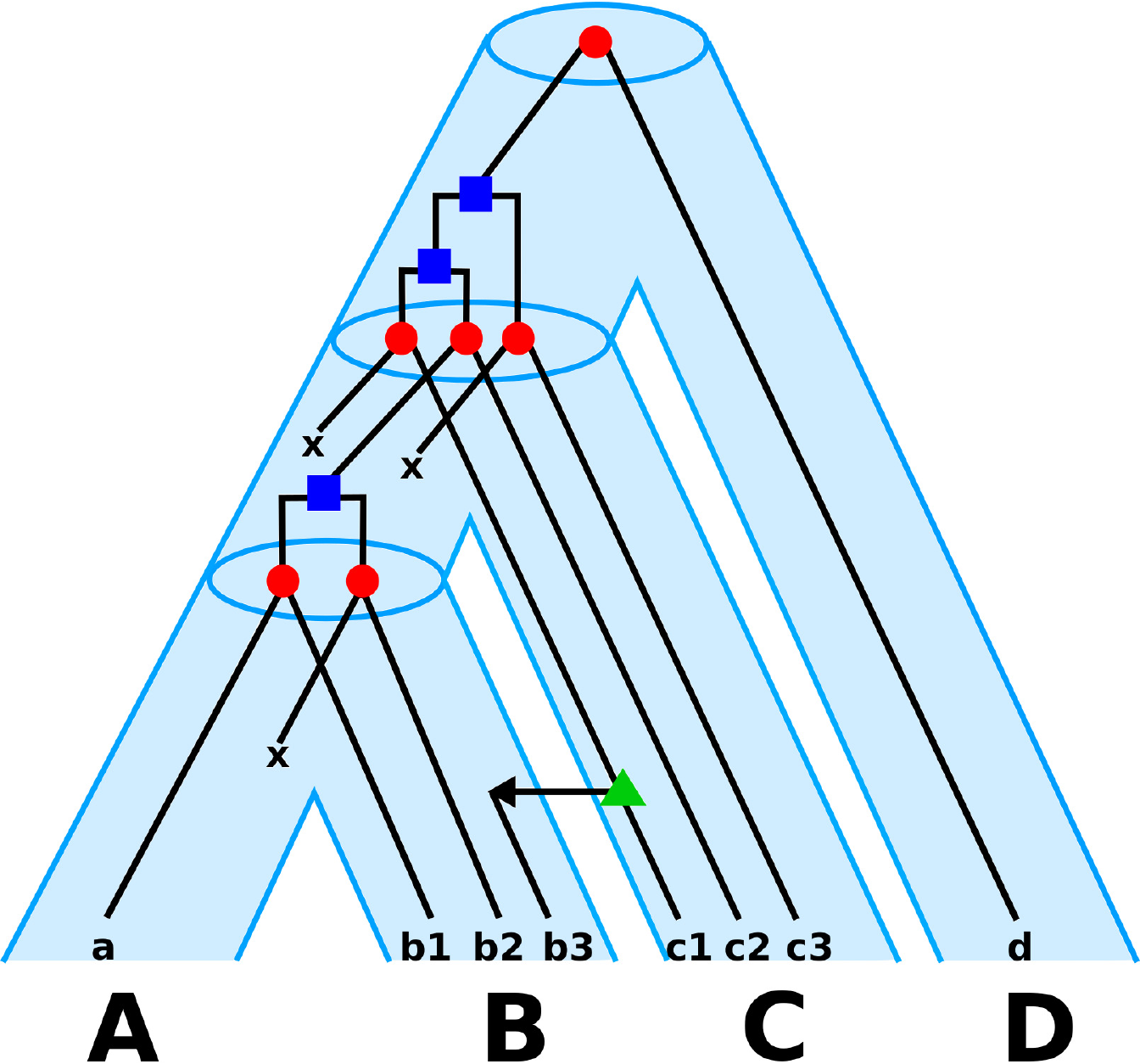}  
  \end{minipage}
  \caption{Example of an evolutionary scenario showing the ``true''
    evolution of a gene family evolving along the species tree (shown as
    blue tube-like tree). The corresponding gene tree $T$ appears embedded
    in the species tree $S$.  The speciation vertices in the gene tree (red
    circuits) appear on the vertices of the species tree (blue ovals),
    while the duplication vertices (blue squares) and the HGT-vertices
    (green triangles) are located on the edges of the species tree.  Gene
    losses are represented with ``{\textsf{x}}''.  The gene-tree $T$
    uniquely determines the relationships between the genes by means of the
    event at the least common ancestor $\lca_T(x,y)$ of distinct genes
    $x,y\in \Gen$. There is a clear distinction between orthologs
    (comprised in $R_o$ and indicated via red edges), paralogs (comprised
    in $R_p$ and indicated via non-drawn edges), as well as xenologs, that
    are neither orthologs nor paralogs (comprised in $R_x$ and indicated by
    green directed arcs).  }
  \label{fig:trueHist}
\end{figure}

The mathematical framework for orthology relations in terms of symbolic
ultrametrics \cite{BD98,Hellmuth:13a}, on the other hand, naturally
accommodates more than two types of events associated with the internal nodes of
the gene tree \cite{HW:16b}. It is appealing, therefore, to think of a HGT event as
different from both speciation and duplication, in line with \cite{Gray:83}
where the term ``xenologous'' was originally introduced, see Figure
\ref{fig:trueHist} for an illustrative example. The inherently asymmetric
nature of HGT events, with their unambiguous distinction between the
vertically transmitted ``original'' and horizontally transmitted ``copy''
furthermore suggests to relax the symmetry assumption and explore a
generalization to directed graphs and symbolic ``quasi-metrics''.  From the
mathematical point of view it seems natural to ask which systems of binary
relations on a set $V$ (of genes) can be represented by a (phylogenetic)
tree with leaf set $V$ and a suitable labeling (of event types) on the
internal nodes of $T$. To this end the theory of 2-structures
\cite{ER1:90,ER2:90} provides an interesting starting point.

This contribution is organized as follows.  In Section \ref{sec:prelim}, we
present the basic and relevant concepts used in this paper. In particular,
we briefly survey existing results concerning di-cographs, symbolic
ultrametrics and 2-structures.  In Section \ref{sec:charact}, we establish
the characterization of 2-structures that have a particular
tree-representation, so-called uniformly non-prime (\unp) 2-structures, in
terms of symbolic ultrametrics, di-cographs and so-called 1-clusters that
are obtained from the tree-representation of the respective
di-cographs. These results are summarized in Theorem \ref{thm:charactALL}.
In Section \ref{sec:alg}, we use the characterization of \unp\ 2-structure
to design a conceptual quite simple quadratic-time algorithm to recognize
whether a 2-structure is \unp, and in the positive case, to construct the
respective tree-representation (Theorem \ref{thm:algALL}). Furthermore, we
are concerned with editing problems to obtain a \unp\ 2-structure, showing
the NP-completeness of the underlying decision problems (Theorem
\ref{thm:npc}) and describe integer linear programming formulations to
solve them.

\section{Preliminaries}
\label{sec:prelim}

\subsection{Basic Notation} 

Throughout this contribution all sets are finite. We say that two sets $A$
and $B$ \emph{overlap}, in symbols $A \overlap B$, if $A\cap B\neq
\emptyset$ and neither $A\subseteq B$ nor $B\subseteq A$. Given a set $V$
we identify binary relations $R\subseteq V\times V$ with the \emph{directed
  graphs} (\emph{di-graphs} for short) $G=(V,R)$ with vertex set $V$ and
\emph{arc} set $R$. Throughout, we are concerned with irreflexive relations
or, equivalently, loop-free digraphs, i.e., $R\subseteq \Virr \coloneqq
V\times V\setminus \{(v,v)|v\in V\}$. For an arc $e=(x,y)\in \Virr$ we
write $e^{-1}$ to designate the reverse arc $(y,x)$. \emph{(Undirected)
  graphs} are modeled by edge sets $E\subseteq \binom{V}{2}$ taken from
the set of unordered pairs of vertices.

An undirected graph $G=(V,E)$ is \emph{connected} if for any two vertices
$x,y\in V$ there is a sequence of vertices $(x,v_1,\dots,v_n,y)$, called
\emph{walk}, s.t.\ $\{x,v_1\},\{v_n,y\}$ and $\{v_i,v_{i+1}\}$, $1\leq
i\leq n-1$ are contained in $E$. A di-graph $G=(V,E)$ is \emph{(weakly)
  connected} if the undirected graph $G_u=(V,E_u)$ with $E_u=\{\{x,y\}\mid
(x,y)\in E\}$ is connected.  We say that a sequence of vertices
$S=(x,v_1,\dots,v_n,y)$ is a \emph{walk} in the di-graph $G=(V,E)$, if $S$
is a walk in the underlying undirected graph $G_u$.  A graph $H=(W,F)$ is a
\emph{subgraph} of $G=(V,E)$ if $F\subseteq W\times W$ for di-graphs or
$F\subseteq \binom{W}{2}$ for undirected graphs, $W\subseteq V$ and
$F\subseteq E$.  We will write $H\subseteq G$, if $H$ is a subgraph of $G$.
The subgraph $H=(W,F)$ is an \emph{induced} di-graph if in addition
$(x,y)\in W\times W$ and $(x,y)\in E$ implies $(x,y)\in F$.  The
corresponding condition in the undirected case reads
$\{x,y\}\in\binom{W}{2}$ and $\{x,y\}\in E$ implies $\{x,y\}\in F$.  A
\emph{connected component} of a \mbox{(di-)graph} is a connected subgraph
that is maximal w.r.t.\ inclusion. A di-graph $G=(V,E)$ is \emph{complete}
if $E=\Virr$,  and it is \emph{arc-labeled} if there is a map $\vp\colon E
\to \Upsilon$ that assigns to each arc a label $i\in \Upsilon$.

A \emph{tree} is a connected undirected graph that does not contain cycles.
A \emph{rooted} tree $T=(V,E)$ is a tree with one distinguished vertex
$\rho\in V$ called \emph{root}. The leaf set $L\subseteq V$ comprises all
vertices that are distinct from the root and have degree $1$. All vertices
that are contained in $V^0\coloneqq V\setminus L$ are called \emph{inner}
vertices. The first inner vertex $\lca(x,y)$ that lies on both unique paths
from two vertices $x$, resp., $y$ to the root, is called \emph{lowest
common ancestor} of $x$ and $y$. We write $L(v)$ for the set of leaves in
the subtree below a fixed vertex $v$, i.e., $L(v)$ is the set of all leaves
for which $v$ is located on the unique path from $x\in L(v)$ to the root of
$T$. The \emph{children} of an inner vertex $v$ are its direct descendants,
i.e., vertices $w$ with $\{v,w\}\in E(T)$ s.t. that $w$ is further away
from the root than $v$. An \emph{ordered} tree is a rooted tree in which an
ordering is specified for the children of each vertex. Hence, ordered trees
particularly imply a linear order $\leq$ of the leaves in $L$, and we say
that $x$ is left from $y$ iff $x<y$.

Two rooted trees $T_1$ and $T_2$ on the same leaf set $L$ are said to be
\emph{isomorphic} if there is a bijection $\psi:V(T_1)\to V(T_2)$ that
induces a graph isomorphism from $T_1$ to $T_2$ which is the identity on
$L$ and maps the root of $T_1$ to the root of $T_2$.

It is well-known that there is a one-to-one correspondence between
(isomorphism classes of) rooted trees on $V$ and hierarchies on $V$. A
\emph{hierarchy on $V$} is a subset $\mc{C}\subseteq 2^V$ such that (i)
$V\in\mc{C}$, (ii) $\{x\}\in\mc{C}$ for all $x\in V$, and (iii) $p\cap q\in
\{p,q,\emptyset\}$ for all $p,q\in\mc{C}$.  Condition (iii) states that no
two members of $\mc{C}$ overlap. Members of $\mc{C}$ are called
\emph{clusters}. The number of clusters in a hierarchy is bounded
\cite{Hellmuth:15a} and there is a well-known bijection between hierarchies
and trees \cite{sem-ste-03a}:
\begin{theorem}
  Let $\mc{C}$ be a collection of non-empty subsets of $V$.  Then, there is
  a rooted tree $T=(W,E)$ on $V$ with $\mc{C} = \{L(v)\mid v\in W\}$ if and
  only if $\mc{C}$ is a hierarchy on $V$.
  Moreover, the number of clusters $|\mc{C}|$ in a hierarchy $\mc{C}$ on $V$
  is bounded by $2|V| - 1$.	  
\label{A:thm:hierarchy}
\end{theorem}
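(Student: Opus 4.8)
The plan is to prove the two implications of the equivalence separately and then obtain the bound by a counting argument on the associated tree.

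For the ``only if'' direction, suppose $T=(W,E)$ is a rooted tree on leaf set $V$ with root $\rho$, and set $\mc{C}=\{L(v)\mid v\in W\}$. I would verify the three defining properties directly from the tree structure. Condition (i) holds because every leaf lies in the subtree rooted at $\rho$, so $L(\rho)=V\in\mc{C}$. Condition (ii) holds because each leaf $x\in V$ satisfies $L(x)=\{x\}$. For condition (iii), given $u,w\in W$ I would distinguish whether one of them lies on the path from the other to $\rho$: if $u$ is an ancestor of $w$ then $L(w)\subseteq L(u)$ (and symmetrically), while otherwise the subtrees below $u$ and $w$ are vertex-disjoint, so $L(u)\cap L(w)=\emptyset$. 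In every case $L(u)\cap L(w)\in\{L(u),L(w),\emptyset\}$, hence no two clusters overlap.

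For the ``if'' direction, given a hierarchy $\mc{C}$, I would build a tree whose vertex set is $\mc{C}$ itself, ordered by inclusion. The root is $V$ (the unique maximal cluster, by (i)), the leaves are the singletons $\{x\}$ (by (ii)), and I declare $p$ to be a child of $q$ whenever $p\subsetneq q$ and no cluster $r$ satisfies $p\subsetneq r\subsetneq q$. The crucial point is that this parent relation is well defined: for any non-root cluster $p$, all clusters properly containing $p$ are pairwise comparable, since two such clusters both contain $p$, hence intersect, hence by (iii) one contains the other; so they form a chain with a unique minimal element, which becomes the parent of $p$. This turns the Hasse diagram of $(\mc{C},\subseteq)$ into a rooted tree $T$, and an easy induction up the chain of ancestors shows $L(q)=q$ for every vertex $q\in\mc{C}$, so that $\{L(q)\mid q\in\mc{C}\}=\mc{C}$ as required.

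For the bound, I would argue that in the tree $T$ just constructed every inner vertex has at least two children. Fixing an inner vertex $q$, its children are pairwise disjoint by (iii), and since every singleton $\{a\}$ with $a\in q$ is a cluster contained in $q$ and hence lies in some child, the children partition $q$; were there a single child $p$ we would get $p=q$, a contradiction. With $n=|V|$ leaves and $I$ inner vertices the tree has $n+I-1$ edges, and summing the number of children over all inner vertices (each non-root vertex being a child of exactly one inner vertex) gives the same total $n+I-1\ge 2I$, whence $I\le n-1$ and $|\mc{C}|=n+I\le 2n-1$. The main obstacle is the well-definedness of the parent map in the ``if'' direction: everything hinges on deducing from the non-overlap axiom (iii) that the proper supersets of a fixed cluster form a chain, which is precisely what guarantees a unique parent and thus a genuine tree rather than a general acyclic digraph.
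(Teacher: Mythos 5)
Your proof is correct. Note that the paper itself contains no proof of this statement: it is quoted as a well-known result, with the tree--hierarchy bijection attributed to the literature (Semple and Steel) and the bound $2|V|-1$ to an earlier paper of the authors; your argument --- reading off the clusters $L(v)$ of a tree via the ancestor/disjointness dichotomy for one direction, building the Hasse diagram of $(\mc{C},\subseteq)$ for the other (where the non-overlap axiom (iii) is exactly what forces the proper supersets of a cluster to form a chain and hence gives a unique parent), and then counting edges using the fact that every inner vertex has at least two children --- is precisely the standard proof of this correspondence, with no gaps beyond the routine induction showing $L(q)=q$, which you correctly identify as easy.
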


\subsection{Di-Cographs}

Di-cographs are a generalization of the better-known undirected
cographs. Cographs are obtained from single vertices by repeated
application of disjoint union (\emph{parallel} composition) and graph join
(in this context often referred to as \emph{series} composition)
\cite{Corneil:81,BLS:99}. In the case of di-cographs, the so-called
\emph{order} composition is added, which amounts to a directed variant of
the join operation. More precisely, let $G_1, \dots , G_k$ be a set of $k$
disjoint digraphs. The disjoint union of the $G_i$s is the digraph whose
connected components are precisely the $G_i$s.  The series composition of
the $G_i$s is the union of these $k$ graphs plus all possible arcs between
vertices of different $G_i$s.  The order composition of the $G_i$s is the
union of these $k$ graphs plus all possible arcs from $G_i$ towards $G_j$,
with $1\leq i < j \leq k$.

We note that restricted to posets, di-cographs coincide with the
series-parallel orders \cite{Valdes:82}. Di-cographs are characterized by
the collection of forbidden induced subgraphs shown Fig.~\ref{fig:forb}
\cite{CP-06}. An undirected graph is a cograph if and only if it does not
contain $P_4$ as an induced subgraph \cite{Corneil:81}.

We emphasize that the results stated here are direct consequences of the
results for 2-structure in Section \ref{sec:2s}. However, since di-cographs
will play a central role for the characterization of certain 2-structures,
we treat them here separately.

Given an arbitrary (di)graph $G=(V,E)$, a \emph{(graph-)module} $M$ is a
subset $M\subset V$ such that for any $x\in M$ and $z\in V\setminus M$ it
holds that $(x,z)\in E$ if and only if $(y,z)\in E$ for all $y\in M$ and
$(z,x)\in E$ if and only if $(z,y)\in E$ for all $y\in M$. The subfamily of
so-called strong modules, i.e., those that do not overlap other modules,
form a hierarchy and is called \emph{modular decomposition} of $G$.  For a
given graph $G$ we denote with $\Ms(G)$ the set of its strong
modules. Since $\Ms(G)$ forms a hierarchy, there is an equivalent (ordered,
rooted) tree, that is well known as the \emph{modular decomposition tree}
of $G$ \cite{Moehring:84}. The (unique) modular decomposition tree of a
di-cograph is known as its \emph{cotree}. Its leaves are identified with
the vertices of the di-cograph and the inner vertices are labeled by the
composition operations. Conversely, any ordered tree with internal vertices
labeled by the operations \emph{parallel}, \emph{series}, or \emph{order},
defines a unique di-cograph on its leaf-set.

Any inner vertex of the modular decomposition tree $T$ of $G$ corresponds
to a strong module $L(v)\in \Ms(G)$. Moreover, each child $u$ of $v$ in $T$
corresponds to a strong module $L(u)\subsetneq L(v)$ so that there is no
other module $M\in \Ms(G)$ with $L(u)\subsetneq M \subsetneq L(v)$
\cite{hellmuth2015techniques}. Therefore, we refer to the module $L(u)$ as
a \emph{the child} of the module $L(v)$ if $u$ is a child of $v$ in the
modular decomposition tree.

\begin{figure}[tbp]
  \begin{center}
    \includegraphics[width=0.95\textwidth]{./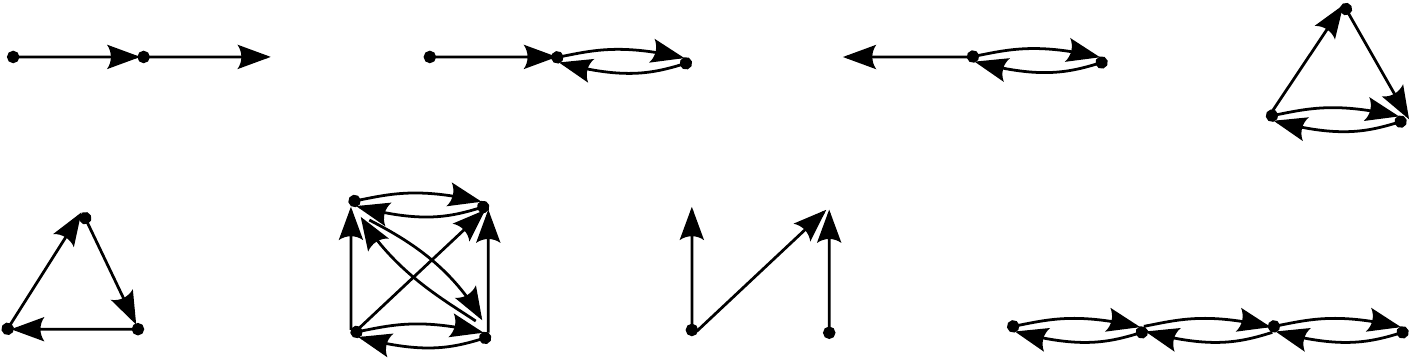}
  \end{center}
  \caption{Forbidden subgraphs for di-cographs. A di-graph $G$ is a
    di-cograph if and only if it does not contain one of these graphs as an
    induced subgraph. Following \cite{EHPR:96} we denote them from left to
    right $D_3, A, B, \overline{D_3}$ (in the 1st line), and $C_3,
    \overline{N}, N, P_4 $ (in the 2nd line).  A similar picture appeared
    in \cite{CP-06}.}
  \label{fig:forb}
\end{figure}

For simplicity, we use for a di-cograph $G$ and its respective cotree $T$
the labeling function $t:V^0(T)\to \{0,1,\overrightarrow{1}\}$ defined by
\begin{align*}
  t(\lca(x,y)) = \begin{cases}
    0,    & \text{ if } (x,y)(y,x)\notin E(G) \text{ (``parallel'')}   \\
    1,     & \text{ if } (x,y)(y,x)\in E(G)   \text{ (``series'')}  \\
    \overrightarrow{1}, &\text{ else } 
    \text{ (``order'')} 
 \end{cases}
\end{align*}
throughout this contribution. Since the vertices in the cotree $T$ are
ordered, the label $\overrightarrow{1}$ on some $\lca(x,y)$ of two distinct
leaves $x,y\in L$ means that there is an arc $(x,y) \in E(G)$, while $(y,x)
\notin E(G)$, whenever $x$ is placed to the left of $y$ in $T$.  For a
given cotree $T$ and inner vertex $v$, we will also call the strong module
$L(v)$ of a di-cograph parallel, series, or order, if it is labeled $0$,
$1$ and $\overrightarrow{1}$, respectively. The modular decomposition of a
digraph that is not a cograph also contains strong modules that are neither
parallel, nor series, nor order. Such modules are called \emph{prime}.

\subsection{Symbolic Ultrametrics}

Let $V$ and $\Upsilon$ be non-empty sets and let $\delta:\Virr \to
\Upsilon$, $(x,y)\mapsto \delta(xy)$ be a map that assigns to each pair
$(x,y)\in \Virr$ the unique label $\delta(xy) \in \Upsilon$.  Note that we
write $\delta(xy)$ instead of $\delta((x,y))$ to somewhat simplify the
notation. For two distinct vertices $x,y \in V$ we denote by
$D_{xy}\coloneqq\{\delta(xy),\delta(yx)\}$ the set of labels assigned to
the pairs $(x,y)$ and $(y,x)$. By construction $D_{xy}=D_{yx}$ and
$|D_{xy}|=1$ iff $\delta(xy) = \delta(yx)$. Given a map $\delta:\Virr \to
\Upsilon$ we define $G_i(\delta) = (V,E_i)$ with arc set $E_i = \{(x,y)\in
\Virr \mid \delta(xy) = i\}$ for all $i\in\Upsilon$. Finally, we define
$D_{xyz} \coloneqq \left\{ D_{xy}, D_{xz}, D_{yz} \right\}$ for any three
vertices $x,y,z \in V$. Note that $D_{xyz}$ is the set of distinct label
pairs assigned to the constituent unordered pairs of the 3-tuple, not the
set of distinct colors assigned to the six underlying ordered pairs.

\begin{definition} 
A map $\delta:\Virr\to\Upsilon$ is a \emph{symbolic ultrametric} on $V$ 
if it satisfies 
\begin{itemize}
\item[(U1)] $G_{i}(\delta)$ is a di-cograph for all $i\in \Upsilon$.
\item[(U2)] $|D_{xyz}|\leq 2$ for all $x,y,z \in V$.   
\end{itemize}
A symbolic ultrametric is called \emph{symmetric} if it satisfies 
\begin{itemize}
\item[(U3)] $\delta(xy) = \delta(yx)$ for all distinct $x,y \in V$.
\end{itemize}
\end{definition}
We will refer to axiom (U2) as the ``$\Delta(xyz)$-Condition'' or simply 
``Triangle-Condition''. 

If $\delta$ is symmetric, than (U1) and (U2) can be replaced by the
equivalent conditions:
\begin{itemize}
\item[(U1')] there exists no subset $\{x,y,u,v\}\in \binom{X}{4}$ 
  such that  $\delta(xy)=\delta(yu)=\delta(uv) \neq 
  \delta(yv)= \delta(xv)=\delta(xu)$.
\item[(U2')] $|\{\delta(xy), \delta(xz),\delta(yz)\}| \le 2$
  for all $x,y,z \in X$;\\
\end{itemize}
Condition (U1') identifies a pair of ``mono-chromatic paths'' $x-y-u-v$ and
$y-v-x-u$ as forbidden configuration. Equivalently, each of the graphs
$G_i(\delta)$ does not have an induced $P_4$, and thus it is an undirected
cograph.  Symmetric ultrametrics have been introduced in \cite{BD98} and
were studied subsequently, e.g., in \cite{Hellmuth:13a,HW:15,HW:16}.

An \emph{ultrametric} $d$ on $X$ is a real-valued symmetric map $d: X
\times X\to \mathbb{R}$ that (i) vanishes exactly on the diagonal and (ii)
satisfies $d(x,z) \leq \max\{d(x,y), d(y,z)\}$ for all $x,y,z\in
X$. Reading the real-valued distances as labels, we observe that $d$ is
also a (symmetric) symbolic ultrametric because, as shown in, e.g. 
\cite{sem-ste-03a},
the two larger distances coincide as a consequence of condition (ii). 

\subsection{2-Structures}
\label{sec:2s}

2-structures were introduced in \cite{ER1:90,ER2:90}. We refer to
\cite{Ehrenfeucht1995,ehrenfeucht1999theory,EHPR:96} for excellent
additional surveys. We follow the original terminology where possible. We
will, however, deviate at times to remain consistent with the literature on
co-graphs and symbolic ultrametrics.
\begin{definition}
  A (labeled) \emph{2-structure} is a triple $g = (V,\Upsilon,\vp)$ where
  $V$ and $\Upsilon$ are nonempty sets and  $\vp\colon \Virr \to \Upsilon$
  is a map.
\end{definition}
We refer to $V$ as the vertices and $\Upsilon$ as the labels. The
function $\vp$ maps each pair $(x,y)$, called an \emph{arc} of $g$ to a
unique label $\vp(xy)\coloneqq \vp((x,y))\in \Upsilon$. We will sometimes
write $V_g, \Upsilon_g$ and $\vp_g$ to emphasize that the vertex set,
label set, and the labeling function, resp., belong to the 2-structure $g$.

\emph{Isomorphic} 2-structures $g=(V,\Upsilon,\vp)$ and
$h=(V,\Upsilon',\vp')$, in symbols $g\simeq h$, differ only by a bijection
$\alpha:\Upsilon\to\Upsilon'$ of their labels, i.e.,
$\vp'(e)=\alpha(\vp(e))$ and $\vp(e)=\alpha^{-1}(\vp'(e))$ for all
$e\in\Virr$.

2-structures can be considered as arc-labeled complete graphs.
Conversely, every directed or undirected graph $G$ with vertex set $V$ has
a representation as a 2-structure by labeling the edges of the complete
graph by $0$ or $1$ depending on whether the arc is absent or present in
$G$. Thus we can interpret 2-structures as a natural generalizations of
(di-)graphs. Moreover, 2-structures are equivalent to a sets of disjoint
binary relations $R_1,\dots,R_k$ where each tuple $(x,y)$ has label $i$ iff
$(x,y)\in R_i $ or label $0$ if $(x,y)$ is not present in any of these
relations.

Extending the definition for symbolic ultrametrics above, we define for a
given 2-structure $g = (V,\Upsilon,\vp)$ and each $i\in \Upsilon$ the graph
$G_i(g) = (V,E_i)$ with arc set $E_i = \{(x,y)\in \Virr \mid \vp(xy) = i\}$.

Given a subset $X\subseteq V$ the \emph{substructure of} $g =
(V,\Upsilon,\vp)$ \emph{induced by} $X$ has vertex set $X$ and all arcs
$(a,b)\in\Xirr$ retain the color $\vp(ab)$, i.e., $g[X]\coloneqq
(X,\Upsilon, \vp' = \vp_{|\Xirr})$. A 2-structure $h$ is a substructure of
the 2-structure $g$ iff there is a subset $X\subseteq V_g$ so that
$h\simeq g[X]$.

\begin{definition}
  A \emph{module} (or clan) of a 2-structure is a subset $M\subseteq V$,
  such that $\vp(xz) = \vp(yz)$ and $\vp(zx) = \vp(zy)$ holds for all
  $x,y \in M$ and $z\in V\setminus M$.
\end{definition}
The empty set $\emptyset$, the complete vertex set $V_g$, and the
singletons $\{v\}$ are always modules. They are called the \emph{trivial}
modules of $g$. We will assume from here on, that a module is non-empty
unless otherwise indicated. The set of all modules of the 2-structure $g$
will be denoted by $\M(g)$.

\begin{lemma}
  A module of a 2-structure $g=(V,\Upsilon, \vp)$ is also a graph-module of
  $G_i(g)$ for all $i\in\Upsilon$.
\label{lem:module-graphmodule}
\end{lemma}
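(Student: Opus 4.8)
The plan is to unwind the definitions directly, since this is essentially a matter of checking that the module condition for the 2-structure $g$ restricts to the graph-module condition for each individual color class $G_i(g)$. Let $M\subseteq V$ be a module of $g=(V,\Upsilon,\vp)$, fix a label $i\in\Upsilon$, and recall that $G_i(g)=(V,E_i)$ has arc set $E_i=\{(x,y)\in\Virr\mid \vp(xy)=i\}$. I want to show that $M$ is a graph-module of $G_i(g)$, i.e., that for all $x,y\in M$ and $z\in V\setminus M$ we have $(x,z)\in E_i \iff (y,z)\in E_i$, and likewise $(z,x)\in E_i \iff (z,y)\in E_i$.

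First I would translate each membership statement about $E_i$ back into a statement about $\vp$. By definition of $E_i$, for $x\in M$ and $z\in V\setminus M$ the condition $(x,z)\in E_i$ is precisely the condition $\vp(xz)=i$, and similarly $(y,z)\in E_i$ means $\vp(yz)=i$. Now the hypothesis that $M$ is a module of $g$ gives us $\vp(xz)=\vp(yz)$ for all $x,y\in M$ and $z\in V\setminus M$. Hence $\vp(xz)=i$ holds if and only if $\vp(yz)=i$ holds, which is exactly $(x,z)\in E_i \iff (y,z)\in E_i$. The argument for the incoming arcs is symmetric: the module condition also supplies $\vp(zx)=\vp(zy)$, so $\vp(zx)=i \iff \vp(zy)=i$, that is $(z,x)\in E_i \iff (z,y)\in E_i$.

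Putting these two equivalences together shows that $M$ satisfies the defining condition of a graph-module of $G_i(g)$, and since $i\in\Upsilon$ was arbitrary, $M$ is simultaneously a graph-module of $G_i(g)$ for every $i\in\Upsilon$, as claimed. There is no real obstacle here; the only thing worth being careful about is keeping the outgoing and incoming conditions separate, since the definition of a graph-module for a directed graph imposes both, and correspondingly the 2-structure module definition supplies both $\vp(xz)=\vp(yz)$ and $\vp(zx)=\vp(zy)$. The proof is essentially a one-line observation that equality of labels forces agreement on the preimage of each fixed label $i$.
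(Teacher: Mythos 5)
Your proof is correct and follows essentially the same route as the paper's: both arguments simply translate membership in $E_i$ into the condition $\vp(\cdot\,\cdot)=i$ and then invoke the module property $\vp(xz)=\vp(yz)$ and $\vp(zx)=\vp(zy)$ to get the two required equivalences for outgoing and incoming arcs. Nothing is missing; this is a clean definitional check.
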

\begin{proof}
  Let $M\in \M(g)$ be an arbitrary module of $g$ and $i\in \Upsilon$ be
  some label. For $x\in M$ and $z\in V\setminus M$ we have $(x,z)\in
  E(G_i(g))$ if and only if $\vp(xz)=i$.  Since $M$ is a module in $g$, we
  then also have $\vp(yz)=i$ for all $y\in M$ and hence, 
  $(x,z) \in E(G_i(g))$ if and only if $(y,z)\in E(G_i(g))$ for all $y\in M$.
  Analogously, $(z,x)\in E(G_i(g))$ if and only if
  $(z,y)\in E(G_i(g))$ for all $y\in M$. Thus, $M$ is a module in $G_i(g)$.
  \qed
\end{proof}

The converse of Lemma \ref{lem:module-graphmodule} is not true in general,
Consider, for example, a 2-structure $g$ with $V_g=\{x,y,z\}$ and
$\vp_g(xy)=\vp_g(yx)=1$, $\vp_g(xz)=\vp_g(zx)=2$, and
$\vp_g(zy)=\vp_g(yz)=3$. One easily observes that $G_1(g)$ contains the
module $M=\{x,y\}$, since none of the edges $(x,z)$, $(z,x)$, $(y,z)$,
$(z,y)$ are contained in $G_1(g)$. However, since $\vp_g(xz)\neq \vp_g(yz)$
the set $M$ is not a module of $g$.
 
A very useful property of modules is summarized by
\begin{lemma}[\cite{ER1:90}, Lemma 4.11]
  Let $X, Y \in \M(g)$ be two disjoint modules of $g=(V,\Upsilon, \vp)$.
  Then there are labels $i,j\in \Upsilon$ such that $\vp(xy) = i$ and
  $\vp(yx) = j$ for all $x \in X$ and $y \in Y$.
  \label{lem:arcs-modules}
\end{lemma}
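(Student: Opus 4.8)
The plan is to exploit the defining property of modules directly, showing that the labels on arcs crossing between two disjoint modules cannot vary. Let $X,Y\in\M(g)$ be disjoint. First I would fix a reference pair $x_0\in X$ and $y_0\in Y$ and set $i\coloneqq\vp(x_0y_0)$ and $j\coloneqq\vp(y_0x_0)$. The goal is to prove that $\vp(xy)=i$ and $\vp(yx)=j$ for \emph{every} $x\in X$ and $y\in Y$, and the natural strategy is to move from $(x_0,y_0)$ to an arbitrary $(x,y)$ in two steps: first vary the endpoint in $Y$, then vary the endpoint in $X$ (or vice versa).

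The key observation is that because $X$ and $Y$ are disjoint, any single vertex of one module is an ``outside'' vertex relative to the other module. Concretely, fix $x_0\in X$; since $X$ and $Y$ are disjoint, $x_0\in V\setminus Y$, so the module property of $Y$ applied with this external vertex $x_0$ gives $\vp(x_0 y)=\vp(x_0 y_0)=i$ and $\vp(y x_0)=\vp(y_0 x_0)=j$ for all $y\in Y$. This handles the case where we vary the $Y$-endpoint while keeping $x_0$ fixed. Next, fix an arbitrary $y\in Y$; since $y\in V\setminus X$, the module property of $X$ applied with the external vertex $y$ yields $\vp(xy)=\vp(x_0 y)$ and $\vp(yx)=\vp(yx_0)$ for all $x\in X$. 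Chaining these two applications gives $\vp(xy)=\vp(x_0y)=i$ and $\vp(yx)=\vp(yx_0)=j$ for all $x\in X$, $y\in Y$, which is exactly the claim.

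I expect this to be essentially a direct two-step unfolding of the module definition, so there is no serious obstacle; the only point requiring care is to state explicitly \emph{why} a vertex of one module counts as external to the other, namely the hypothesis that $X\cap Y=\emptyset$, which guarantees $x_0\in V\setminus Y$ and $y\in V\setminus X$. Without disjointness the argument breaks, since the module property only constrains arcs to vertices \emph{outside} the module. A minor stylistic choice is whether to present the two applications symmetrically or sequentially; I would present them sequentially, first pinning down the labels on all arcs of the form $(x_0,y)$ and $(y,x_0)$, then propagating to all $(x,y)$ and $(y,x)$ by treating each $y\in Y$ as an external vertex of $X$. The labels $i$ and $j$ need not be equal and need not be distinct from the labels internal to $X$ or $Y$, so no further case analysis is needed.
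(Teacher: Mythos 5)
Your proof is correct. Note that the paper itself gives no proof of this statement---it is quoted directly from \cite{ER1:90} (Lemma 4.11)---so there is nothing in-paper to compare against; your two-step argument (first using $x_0\in V\setminus Y$ to fix the labels on all arcs $(x_0,y)$ and $(y,x_0)$, then using each $y\in V\setminus X$ to propagate to all of $X$) is exactly the standard direct unfolding of the module definition that the cited source carries out, and your explicit remark that disjointness is what makes each vertex of one module ``external'' to the other is precisely the point on which the argument hinges.
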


2-structures $g$ come in different types: 
\begin{enumerate} 
\item $g$ is \emph{prime} if $\M(g)$ consists of trivial modules only.
\item $g$ is \emph{complete} if for all $e,e'\in \Virr$, $\vp(e) = \vp(e')$
\item $g$ is  \emph{linear} if there are two distinct labels 
  $i, j \in \Upsilon$ such that the relations $<_i, <_j$ defined by
  \[ x <_i y \text{ iff } \vp(xy)=i, \text{ and } 
     x <_j y \text{ iff } \vp(xy)=j\]
  are linear orders of the vertex set $V_g$.
\end{enumerate} 
In particular, if $g$ is linear then there is a linear order $<$ of $V$
s.t.\ $x<y$ if and only if $\vp(xy)=i$ and $\vp(yx)=j$.  Clearly, if
$|V_g|=2$ all modules are trivial, and hence $g$ is prime. On the other
hand $|V_g|=2$ also implies that $g$ is either linear or complete.  For
$|V_g|\geq3$, however, the tree types of 2-structures are disjoint.

Not all 2-structures necessarily fall into one of these three types.  For
example, the 2-structure $g$ with $V_g=\{x,y,z\}$, $\vp_g(xy)=\vp_g(yx)=1$,
and $\vp_g(xz)=\vp_g(zx)=\vp_g(zy)=\vp_g(yz)=2$ is neither prime, nor
linear, nor complete.  We finally note that our notion of prime is called
``primitive'' in \cite{EHPR:96}.

A key concept for this contribution is
\begin{definition} 
A 2-structure is \emph{uniformly non-prime} (\unp) if it does not 
have a prime substructure $h$ of size $|V_h|\geq 3$.
\end{definition}

\begin{definition} 
  A module $M$ of $g$ is \emph{strong} if $M$ does not overlap with any
  other module of $g$.
\end{definition}
This notion was termed ``prime'' in \cite{EHPR:96}. We write
$\Ms(g)\subseteq \M(g)$ for the set of all strong modules of $g$.

While there may be exponentially many modules, the size of the set of
strong modules is $O(|V|)$ \cite{EHMS:94}.  For example, the 2-structure
$g=(V,\Upsilon,\vp)$ with $\vp(xy)=\vp(ab)$ for all $(x,y),(a,b)\in \Virr$
has $2^{|V|}$ modules, however, the $|V|+1$ strong modules are $V$ and the
singletons $\{v\}$, $v\in V$.

Since $V$ and the singletons $\{v\}$ are strong modules and strong modules
do not overlap by definition, we see immediately that $\Ms(g)$ forms a
hierarchy and by Thm.~\ref{A:thm:hierarchy} gives rise to a unique tree
representation $T_g$ of $g$, also called \emph{inclusion tree}. The
vertices of $T_g$ are (identified with) the elements of $\Ms(g)$.
Adjacency in $T_g$ is defined by the maximal proper inclusion relation,
that is, there is an edge $\{M,M'\}$ between $M,M'\in \Ms(g)$ iff
$M\subsetneq M'$ and there is no $M''\in \Ms(g)$ s.t.\ $M\subsetneq M''
\subsetneq M'$.  The root of $T_g$ is $V$ and the leaves are the singletons
$\{v\}$, $v\in V$.  Although $\Ms(g) \subseteq \M(g)$ does not represent
all modules, any module $M \in \M(G)$ is the union of children of the
strong modules in the tree $T_g$ \cite{Moehring:84,ER2:90}.  Thus, $T_g$
represents at least implicitly all modules of $g$.

The hierarchical structure of $\Ms(g)$ implies that there is a unique
partition $\Mmax(g) = \{M_1,\dots, M_k\}$ of $V_g$ into maximal (w.r.t.\
inclusion) strong modules $M_j\ne V_g$ of $g$ \cite{ER1:90,ER2:90}.  Since
$V_g\notin \Mmax(g)$ the set $\Mmax(g)$ consists of $k\ge 2$ strong
modules, whenever $|V_g|>1$.

In order to infer $g$ from $T_g$ we need to determine the color $\vp(xy)$
of all pairs of distinct leaves $x,y$ of $T_g$ and thus of $V_g$.  Hence,
we need to define a labeling function $t_g$ that assigns the ``missing
information'' to the inner vertex of $T_g$. To this end, we will need to
understand the quotient $g/\Mmax(g)$, i.e., the 2-structure $(\Mmax(g),
\Upsilon, \vp')$ with $\vp'(M_i,M_j)=\vp(xy)$ for some $x\in M_i$ and $y\in
M_j$. Thus a quotient $g/\Mmax(g)$ is obtained from $g$ by contracting each
module in $M \in \Mmax(g)$ into a single node, and then inheriting the edge
classes from $g$.  By Lemma \ref{lem:arcs-modules}, the quotient
$g/\Mmax(g)$ is well-defined.  Although 2-structures are not necessarily
prime, linear or complete, their quotients $g/\Mmax(g)$ are always of one
of these types.

\begin{lemma}[\cite{ER2:90,EHPR:96}]
  Let $g$ be a 2-structure. Then the quotient $g/\Mmax(g)$ is either
  linear, or complete, or prime.  If $g$ is \unp, then $g/\Mmax(g)$ is
  either linear, or complete.
\end{lemma}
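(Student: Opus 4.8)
The plan is to study the quotient $Q\coloneqq g/\Mmax(g)$ through the correspondence between its modules and those modules of $g$ that are unions of maximal strong modules, and then to reduce the trichotomy to the single assertion that $Q$ possesses only trivial strong modules. First I would record that, since each $M_i\in\Mmax(g)$ is strong and hence overlaps no module, every $P\in\M(g)$ satisfies, for each $i$, one of $P\subseteq M_i$, $M_i\subseteq P$, or $P\cap M_i=\emptyset$. Consequently every module of $g$ is either contained in a single $M_i$ or is a union of several $M_i$, and by Lemma~\ref{lem:arcs-modules} the latter correspond exactly to the modules of $Q$: a set $\mc N\subseteq\Mmax(g)$ is a module of $Q$ if and only if $\bigcup_{M\in\mc N}M$ is a module of $g$.

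Next I would show that $Q$ has no nontrivial strong module. Suppose $\mc N$ were strong in $Q$ with $2\le|\mc N|<|\Mmax(g)|$, and put $N\coloneqq\bigcup_{M\in\mc N}M$. If $N$ overlapped some $P\in\M(g)$, then $P$ is comparable with each $M_i$ by the previous paragraph; if $P\subseteq M_i$ then $P$ and $N$ are nested or disjoint, while if $P$ is a union of maximal strong modules then an overlap of $N$ and $P$ would force $\mc N$ to overlap the corresponding module of $Q$, contradicting strongness of $\mc N$. Hence $N$ would be a strong module of $g$ with $M_i\subsetneq N\subsetneq V_g$ for any $M_i\subseteq N$, contradicting maximality of $M_i$. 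Therefore the only strong modules of $Q$ are $V_Q$ and the singletons, so either $Q$ is prime, or $Q$ has a nontrivial module which, not being strong, must overlap a further module.

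It then remains to prove that a $2$-structure $Q$ all of whose strong modules are trivial, but which admits a nontrivial module, is complete or linear. Here I would take a nontrivial module and enlarge it to a maximal proper module $N$ (so $|N|\ge2$); not being strong, $N$ overlaps a module $N'$, and maximality forces $N\cup N'=V_Q$. Writing $A=N\cap N'$, $B=N\setminus N'$, $C=N'\setminus N$ (all nonempty), the module conditions for $N$ and $N'$ together with Lemma~\ref{lem:arcs-modules} force the cross-arcs between the three parts to be homogeneous: there are labels $\alpha,\beta$ with $\vp(ba)=\vp(bc)=\vp(ac)=\alpha$ and $\vp(ab)=\vp(cb)=\vp(ca)=\beta$ for all $a\in A$, $b\in B$, $c\in C$. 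If $\alpha=\beta$ this is the complete pattern, and if $\alpha\neq\beta$ it is the linear pattern $B<A<C$. Since $A$, $B$, $C$ are themselves modules (apply the standard closure of the module family under the operations $N\cup N'$, $N\cap N'$, $N\setminus N'$, $N\symdiff N'$ for overlapping members), I would recurse on $Q[A]$, $Q[B]$, $Q[C]$ and glue the results into one complete or linear structure on $V_Q$. The main obstacle is precisely this globalisation: a nontrivial module of $Q$ contained in $A$ need not overlap a module lying inside $A$, so $Q[A]$ may acquire strong modules that $Q$ does not have; one must verify, again via the closure of the module family under union, intersection and (symmetric) difference of overlapping members, that the label dichotomy $\{\alpha,\beta\}$ is inherited consistently by the parts, so that the local homogeneity extends to all arcs of $Q$.

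Finally, for the \unp\ refinement I would eliminate the prime alternative directly. Choosing one representative $x_i\in M_i$ from each maximal strong module yields $X=\{x_1,\dots,x_k\}$ with $g[X]\simeq Q$, since $\vp(x_ix_j)$ equals the quotient label on $(M_i,M_j)$. If $Q$ were prime with $|V_Q|=k\ge 3$, then $g[X]$ would be a prime substructure of $g$ of size at least $3$, contradicting that $g$ is \unp; and if $k=2$ then $Q$ is automatically linear or complete. In either case $Q=g/\Mmax(g)$ is linear or complete, which completes the argument.
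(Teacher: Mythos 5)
The paper offers no proof of this lemma at all---it is imported verbatim from \cite{ER2:90,EHPR:96}---so your attempt has to stand entirely on its own. Much of it does. The correspondence between modules of $Q=g/\Mmax(g)$ and those modules of $g$ that are unions of members of $\Mmax(g)$ is correct; your deduction that $Q$ has only trivial strong modules (a nontrivial strong module of $Q$ would lift to a strong module of $g$ strictly between some $M_i$ and $V_g$, contradicting maximality) is correct; the computation that the cross-arcs between $A=N\cap N'$, $B=N\setminus N'$, $C=N'\setminus N$ carry only two labels $\alpha,\beta$ in the pattern $B<A<C$ is correct; and the closing paragraph for the \unp\ case is fine, since a system of representatives $x_i\in M_i$ reproduces the quotient as a substructure, so a prime quotient on $k\geq 3$ vertices would directly violate the definition of \unp.

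The genuine gap is your third step, which is precisely the mathematical core of the cited theorem: that a $2$-structure with only trivial strong modules and at least one nontrivial module must be complete or linear. You prove only the local $\{\alpha,\beta\}$-pattern on the three blocks and then defer the extension to all of $V_Q$ to ``one must verify,'' while the recursion you propose cannot supply it, for exactly the reason you name yourself: although the modules of $Q[A]$ are precisely the modules of $Q$ contained in $A$ (because $A$ is a module of $Q$), a strong module of $Q[A]$ need not be strong in $Q$---all modules of $Q$ overlapping it may stick out of $A$---so the hypothesis ``only trivial strong modules'' is not inherited by $Q[A]$, $Q[B]$, $Q[C]$, and the induction collapses at its first step. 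A correct completion needs a different mechanism, not a patch of the recursion. For instance, when $\alpha=\beta$, let $H$ be the graph on $V_Q$ whose edges are the pairs $\{x,y\}$ with $\vp(xy)\neq\alpha$ or $\vp(yx)\neq\alpha$. Every pair between a connected component $K$ of $H$ and its complement carries $\alpha$ in both directions, so $K$ is a module; and if a module $P$ overlapped $K$, connectivity of $K$ would give an $H$-edge $\{u,v\}$ with $u\in K\cap P$, $v\in K\setminus P$, and then any $z\in P\setminus K$ forces $\vp(uv)=\vp(zv)=\alpha$ and $\vp(vu)=\vp(vz)=\alpha$, contradicting that $\{u,v\}$ is an $H$-edge. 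Hence every component of $H$ is a strong module, hence trivial; since $H$ has no edges between $A$ and $B\cup C$, no component is $V_Q$, so $H$ is edgeless and $Q$ is complete. The case $\alpha\neq\beta$ needs a further orientation argument (components of the analogous ``non-$\alpha\beta$-pair'' graph need not even be modules, because orientations toward an external vertex can disagree). Until something of this kind is carried out, the trichotomy---and with it the lemma---has not actually been proved.
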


We shall say that an inner vertex $v$ of $T_g$ (or, equivalently, the
module $L(v)$) is linear, complete, or prime if the quotient
$g[L(v)]/\Mmax(g[(L(v)])$ is linear, complete, or prime, respectively.
In order to recover $g$ from $T_g$ one defines a labeling function $\sigma$
that assigns the quotient of $g[L(v)]$ to each inner vertex $v$,
i.e., 
\[\sigma(v)= g[L(v)]/\Mmax(g[L(v)]).\] 
This type of labeled tree representation is called $\mathrm{shape}(g)$ or
(strong) module decomposition of $g$ \cite{ER1:90,ER2:90,EHPR:96}. If we
restrict ourselves to \unp\ structures, the strong modules are ``generic''
in the sense that they are completely determined by the cardinalities of
their domains and the ordering of the vertices in $T_g$.

We can therefore define a simplified labeling function $t_g$ for \unp\
structures $g$. If $M\in \Mmax(g)$ is a complete module and 
$M_1,\dots, M_l$ are the children of $M$, then there is an $i\in \Upsilon$
s.t.\ for all vertices $x\in M_r, y\in M_s$, $r\neq s$ 
we have $\vp(xy)=\vp(yx)=i$.
Therefore, we can set  $t_g(M) = (i,i)$, implying that for all vertices 
$x,y$ with $\lca(x,y)=M$ it holds that $\vp(xy)=\vp(yx)=i$.
If $M\in \Mmax(g)$ is a linear module, then 
we can assume that the children of $M$ are ordered $M_1,\dots, M_l$ s.t.
$\vp(xy)=i$ and $\vp(yx)=j$ for some $i,j\in \Upsilon$
 if and only if $x\in M_r, y\in M_s$ and $1\leq r< s\leq l$.
Therefore, we can set  $t_g(M) = (i,j)$, implying that for all vertices 
$x,y$ with $\lca(x,y)=M$ and $x$ is left of $y$ in $T_g$ it 
holds that $\vp(xy)=i$ and $\vp(yx)=j$.

\begin{definition}
  The \emph{tree-representation} $(T_g,t_g)$ of a \unp\ 2-Structure $g$ is
  an ordered inclusion tree $T_g$ of the hierarchy $\Ms(g)$ together with a
  labeling $t_g:\Virr \to \Upsilon$ s.t.\ for all vertices $x,y \in V_g$ of
  the \unp\ 2-structure $g$ it holds that
  \[t_g(\lca(x,y)) = (i,j), \] 
  where $i=j$ if and only if $\vp_g(xy)=\vp_g(yx)=i$; and 
  $x$ is to the left of $y$ in $T_g$ if and only if 
  $\vp_g(xy)=i$, $\vp_g(yx)=j$, and $i\neq j$.
\end{definition}
Figure \ref{fig:unp2} shows an illustrative example.

We call two tree-representations $(T,t)$ and $(T',t')$ of a 2-structure $g$
\emph{isomorphic} if $T$ and $T'$ are isomorphic via a map $\psi:V(T)\to
V(T')$ such that $t'(\psi(v))= t(v)$ holds for all $v\in V(T)$. In the
latter we write $(T,t)\simeq (T',t')$

\begin{lemma}[\cite{ER2:90,EHPR:96}]
  For any 2-structures $h, g$ we have $(T_g,t_g) \simeq (T_h,t_h) $ if and
  only if $h\simeq g$.
\end{lemma}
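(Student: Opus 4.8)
The plan is to show that the assignment $g\mapsto (T_g,t_g)$ is a faithful, invertible encoding, so that the two 2-structures agree up to relabeling exactly when their tree-representations do. I would isolate two ingredients. The first is \emph{reconstruction}: for a \unp\ structure $g$ the labeling $\varphi_g$ is fully recoverable from $(T_g,t_g)$, since for distinct $x,y\in V_g$ one computes $v=\lca_{T_g}(x,y)$, reads $t_g(v)=(i,j)$, and by the definition of the tree-representation obtains $\varphi_g(xy)=i$, $\varphi_g(yx)=j$ when $x$ lies to the left of $y$ (with $i=j$ exactly when the pair is symmetric); that $t_g$ is well defined in the first place rests on Lemma \ref{lem:arcs-modules}. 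The second is \emph{invariance of modules under relabeling}: if $\varphi_h=\alpha\circ\varphi_g$ for a bijection $\alpha$ of labels, then since $\alpha$ is injective the module-defining equalities $\varphi(xz)=\varphi(yz)$ and $\varphi(zx)=\varphi(zy)$ hold in $g$ iff they hold in $h$, whence $\M(g)=\M(h)$, so $\Ms(g)=\Ms(h)$, and by Theorem \ref{A:thm:hierarchy} the inclusion trees $T_g$ and $T_h$ coincide as trees on $V$.

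For the direction $h\simeq g\Rightarrow (T_g,t_g)\simeq (T_h,t_h)$ I would take the label bijection $\alpha$ witnessing $h\simeq g$, use the invariance step to identify $T_h$ with $T_g$ via $\psi=\mathrm{id}$, and endow $T_h$ with the same child-ordering as $T_g$ (admissible because a linear module of $g$ ordered by $<_i$ is a linear module of $h$ ordered by $<_{\alpha(i)}$). Comparing labels at $v=\lca(x,y)$ with $t_g(v)=(i,j)$ gives $\varphi_h(xy)=\alpha(\varphi_g(xy))=\alpha(i)$ and $\varphi_h(yx)=\alpha(j)$, hence $t_h(v)=(\alpha(i),\alpha(j))$; thus $(T_g,t_g)$ and $(T_h,t_h)$ are isomorphic via $\psi$ together with the label bijection $\alpha$.

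For the converse $(T_g,t_g)\simeq (T_h,t_h)\Rightarrow h\simeq g$ I would let $(\psi,\beta)$ be the witnessing isomorphism, where $\beta$ is the accompanying bijection of labels. As $\psi$ is a rooted-tree isomorphism that is the identity on the leaves, the leaf sets below matching vertices agree, $L_h(\psi(v))=L_g(v)$, so $\psi$ commutes with $\lca$, i.e.\ $\psi(\lca_{T_g}(x,y))=\lca_{T_h}(x,y)$. Fixing distinct $x,y$ and writing $v=\lca_{T_g}(x,y)$, $t_g(v)=(i,j)$, I would apply the reconstruction step in both trees: in $T_h$ the relevant label is $t_h(\psi(v))=(\beta(i),\beta(j))$, so once the left-right order of $x,y$ is matched one reads off $\varphi_h(xy)=\beta(\varphi_g(xy))$ and $\varphi_h(yx)=\beta(\varphi_g(yx))$. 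Since $x,y$ are arbitrary this yields $\varphi_h=\beta\circ\varphi_g$, i.e.\ $h\simeq g$.

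The genuine obstacle is not the module theory but the orientation bookkeeping at linear modules. At a \emph{complete} module the label is symmetric, $t=(i,i)$, so the ordering of children is immaterial; at a \emph{linear} module the order of children and the ordered pair $(i,j)$ are coupled, a reversal of the order forcing the transposition $(i,j)\mapsto(j,i)$, so the ordered tree-representation is canonical only up to these admissible reorderings. The step needing real care is therefore to verify that reconstruction is \emph{well defined} on these equivalence classes---independent of the permitted child-orderings---so that in the converse direction the phrase ``once the left-right order is matched'' is justified and the two applications of reconstruction are mutually consistent even when $\psi$ does not literally preserve the chosen order. Once that consistency is established, the remainder reduces to the injectivity of $\alpha$ and $\beta$ and to the uniqueness of the inclusion tree furnished by Theorem \ref{A:thm:hierarchy}.
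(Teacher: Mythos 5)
The paper never proves this lemma itself (it is cited from \cite{ER2:90,EHPR:96}), so your proof stands on its own; its skeleton --- reconstruction of $\varphi$ from $(T,t)$ via $\lca$, plus invariance of $\M$ and hence of $\Ms$ and the inclusion tree under label bijections --- is the natural route. However, the issue you flag in your last paragraph and then defer (``once that consistency is established\dots'') is not a routine bookkeeping check: it is the entire content of the lemma, and your converse direction fails without it. The paper's definition of isomorphic tree-representations demands only a tree isomorphism $\psi$ (identity on leaves) with $t'(\psi(v))=t(v)$; it imposes \emph{no} compatibility between $\psi$ and the child orderings. Under that reading, take $V=\{a,b,c,d\}$; let $g$ have $\varphi_g(ab)=\varphi_g(cd)=1$, $\varphi_g(ba)=\varphi_g(dc)=2$, and all arcs between $\{a,b\}$ and $\{c,d\}$ labeled $0$; let $h$ agree with $g$ except that $\varphi_h(cd)=2$, $\varphi_h(dc)=1$. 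Both are reversible \unp\ 2-structures with the same strong modules $V$, $\{a,b\}$, $\{c,d\}$ and singletons. Represent $g$ with orders $a<b$ and $c<d$ (both inner labels $(1,2)$, root $(0,0)$), and $h$ with orders $a<b$ and $d<c$ (again both inner labels $(1,2)$, root $(0,0)$). Then $\psi=\mathrm{id}$ is an isomorphism of tree-representations with the labels literally equal at every node, yet $h\not\simeq g$: any bijection $\gamma$ with $\varphi_h=\gamma\circ\varphi_g$ would need $\gamma(1)=1$ (from the arc $(a,b)$) and $\gamma(1)=2$ (from the arc $(c,d)$). So the phrase ``once the left-right order of $x,y$ is matched'' cannot be justified --- here no global matching exists even though an isomorphism in the stated sense does.

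What is missing is a precise isomorphism notion for these ordered, pair-labeled trees under which \emph{both} directions hold: $\psi$ and the label bijection $\beta$ must be coupled so that at every linear node $v$ either $\psi$ preserves the order of the children of $v$ and $t_h(\psi(v))=(\beta(i),\beta(j))$, or $\psi$ reverses that order and $t_h(\psi(v))=(\beta(j),\beta(i))$. (Requiring order preservation outright breaks your forward direction instead, since two valid representations of the very same $g$ can differ by a reversal at a linear node with transposed label.) With this coupled notion, your argument does go through: forward by choosing compatible orders, as you do, and converse because reconstruction then commutes with $(\psi,\beta)$ node by node. This is exactly how the cited sources sidestep the problem: in \cite{ER2:90,EHPR:96} the label of an inner vertex is the quotient $g[L(v)]/\Mmax(g[L(v)])$ itself, an object carrying no artificial orientation, and isomorphism of shapes means isomorphism of these quotients; the ordered-pair labeling $t_g$ is only shorthand for that. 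As written, your proof leaves the decisive well-definedness claim unproven, and in the naive reading it is false, so this is a genuine gap rather than a deferred technicality.
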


The tree representation $(T,t)$ of $g$ contains no prime nodes if and only
if $g$ has no induced substructures of small size that are prime
\cite[Thm.\ 3.6]{EHPR:96}:
\begin{theorem}
  If $g$ is a 2-structure then the following statements are equivalent:
\begin{enumerate}
\item $g$ is \unp.
\item The tree-representation $(T_g,t_g)$ of $g$ has no inner vertex $v$
  labeled prime, i.e., the quotient $g[L(v)]/\Mmax(g[(L(v)])$ is always
  linear or complete.
\item $g$ has no prime substructure of size $3$ or $4$.
\end{enumerate}
\label{thm:char-old}
\end{theorem}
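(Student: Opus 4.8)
I would establish the three equivalences as the cycle $(1)\Rightarrow(3)\Rightarrow(2)\Rightarrow(1)$. The implication $(1)\Rightarrow(3)$ is immediate, since a prime substructure of size $3$ or $4$ is in particular a prime substructure of size $\geq 3$, which a \unp\ structure does not possess by definition. The two self-contained ingredients are a \emph{lifting} tool and a \emph{localization} argument, both of which use only Lemma~\ref{lem:arcs-modules} and the quotient-type lemma (the one asserting that $g/\Mmax(g)$ is linear, complete, or prime). The lifting tool is this: if $N_1,\dots,N_m$ are pairwise disjoint modules of $g$ and $x_r\in N_r$, then by Lemma~\ref{lem:arcs-modules} the label $\vp(x_rx_s)$ with $r\neq s$ is independent of the chosen representatives and coincides with the corresponding quotient label. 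Hence the substructure of $g$ induced by $\{x_1,\dots,x_m\}$ is isomorphic, via $x_r\mapsto N_r$, to the sub-$2$-structure of $g[\bigcup_r N_r]/\{N_1,\dots,N_m\}$ induced by $\{N_1,\dots,N_m\}$. As primality is an isomorphism invariant, a prime sub-$2$-structure of such a quotient lifts to a prime substructure of $g$ of the same size.

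\textbf{$(3)\Rightarrow(2)$.} I argue the contrapositive $\neg(2)\Rightarrow\neg(3)$. Let $v$ be a prime inner vertex of $T_g$, so $Q_v\coloneqq g[L(v)]/\Mmax(g[L(v)])$ is prime; moreover it has $m\geq 3$ vertices, since a quotient on two vertices is simultaneously linear and complete and is therefore never labeled prime. Here I invoke the \emph{minimal-prime} lemma: every prime $2$-structure of size $\geq 3$ contains an induced prime substructure of size $3$ or $4$. Applying it to $Q_v$ and lifting the resulting small prime substructure through representatives yields a prime substructure of $g$ of size $3$ or $4$, i.e.\ $\neg(3)$.

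\textbf{$(2)\Rightarrow(1)$ (localization).} Again by contraposition: I show that a prime substructure $g[X]$ with $|X|\geq 3$ forces a prime inner vertex. Let $L(v)$ be the smallest strong module of $g$ containing $X$ (it exists because the strong modules form the hierarchy $T_g$ with top $V_g$), and let $N_1,\dots,N_m$ be the children of $v$, i.e.\ the elements of $\Mmax(g[L(v)])$. By minimality $X$ lies in no single $N_r$, hence meets at least two of them. Since each $N_r$ is a module of $g$, the set $X\cap N_r$ is a module of $g[X]$: for $a,b\in X\cap N_r$ and $z\in X\setminus N_r\subseteq V_g\setminus N_r$ the module property of $N_r$ gives $\vp(az)=\vp(bz)$ and $\vp(za)=\vp(zb)$. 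Primality of $g[X]$ forces every $X\cap N_r$ to be empty or a singleton, so $X$ is a transversal meeting exactly $|X|\geq 3$ of the $N_r$. By the lifting tool, $g[X]$ is isomorphic to an induced sub-$2$-structure of $Q_v$ of size $\geq 3$. Since induced substructures of linear (resp.\ complete) $2$-structures of size $\geq 3$ are again linear (resp.\ complete) and hence not prime, $Q_v$ is neither linear nor complete; by the quotient-type lemma it is prime, so $v$ is a prime inner vertex and $\neg(2)$ holds.

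\textbf{The main obstacle.} The only step not self-contained within the excerpt is the minimal-prime lemma. Its content is a descent: from any prime $2$-structure of size $n\geq 5$ one can delete one or two vertices and remain prime, and iterating terminates at size $3$ or $4$ (primes of size $3$ exist, e.g.\ the three-label example discussed earlier). The difficult case is a \emph{critically indecomposable} structure, where no single-vertex deletion preserves primality; there one must exhibit a suitable two-vertex deletion. I would close this either by citing the Schmerl--Trotter classification of critically indecomposable binary relational structures, equivalently the reduction results of \cite{ER2:90,EHPR:96}, or by reproving it through the standard analysis of how the modules of $g-x$ arise when $g-x$ decomposes. This combinatorial descent is the crux; the rest is modular-decomposition bookkeeping as above.
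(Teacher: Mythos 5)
The paper never proves Theorem~\ref{thm:char-old}: it is imported wholesale from the literature (Thm.~3.6 of \cite{EHPR:96}), so there is no internal argument to measure you against, and your proposal supplies derivation where the paper supplies only a citation. The parts you actually carry out are sound. The implication $(1)\Rightarrow(3)$ is indeed definitional; the lifting tool is the standard transversal--quotient correspondence and follows from Lemma~\ref{lem:arcs-modules} exactly as you say; and the localization argument for $(2)\Rightarrow(1)$ is correct, including the two points that need care: the children $N_r\in\Mmax(g[L(v)])$ are modules of $g$ itself (a module of $g[M]$, for $M$ a module of $g$, is a module of $g$), and minimality of the strong module $L(v)\supseteq X$ excludes $X\cap N_r=X$, so primality of $g[X]$ really does force $X$ to be a transversal, whence $Q_v$ can be neither linear nor complete and the quotient-type lemma makes it prime. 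The caveat is the one you flag yourself: the ``minimal-prime lemma'' (every prime 2-structure on at least $3$ vertices contains a prime substructure on $3$ or $4$ vertices) is not proved, and it is not a side lemma --- given your own reductions it is essentially equivalent to the hard implication $(3)\Rightarrow(1)$. Deferring it to Ehrenfeucht--Rozenberg's hereditary-primitivity theorem or to the Schmerl--Trotter analysis of critically indecomposable structures is legitimate (those results predate and do not depend on this theorem, so there is no circularity), but it means the combinatorial core is cited rather than proved --- which is exactly what the paper does, one level higher. What your route buys over the paper's bare citation is real: it makes explicit how the tree/quotient formulation $(2)$ follows by modular-decomposition bookkeeping from the finite-obstruction formulation $(3)$, and it isolates precisely which external ingredient is indispensable; what the paper's citation buys is only brevity.
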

In particular, if $g$ is \unp, then every substructure on a subset $X$
with $|X|=3$ or $|X|=4$ has at least one non-trivial module, i.e, a module
$M\subseteq X$ with $|M|\geq 2$.

\begin{figure}[tbp]
  \begin{center}
    \includegraphics[width=\textwidth]{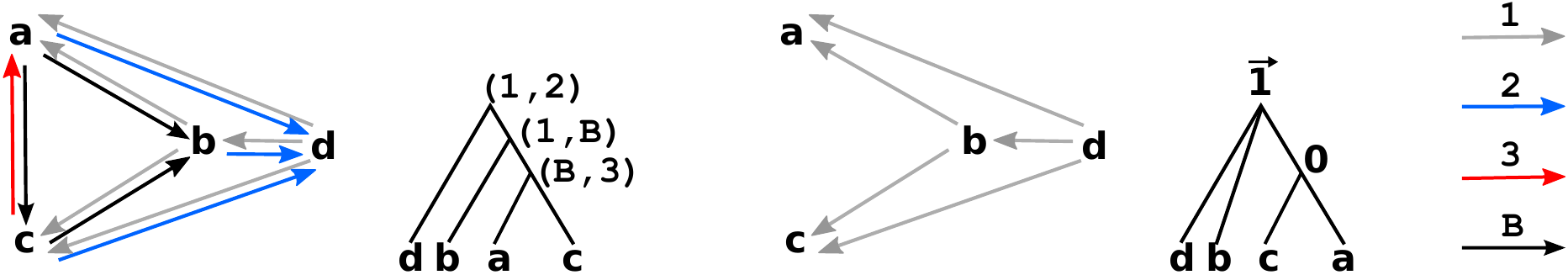}
  \end{center}
  \caption{Example of a \captionunp\ 2-structure $g$ with its
    tree-representation $(T_g,t_g)$ (1st and 2nd from left) and an
    underlying di-cograph $G_1(g)$ with respective cotree (3rd and 4th from
    left).  Colors are labeled with $\texttt{1,2,3,B}$ (right-most).  In
    fact, all underlying di-graphs $G_2(g), G_{3}(g)$ and $G_B(g)$ are
    di-cographs. To see that (U2) is satisfied for $\delta_g$ observe that
    $D_{abc}=\{\{B,1\}, \{B,3\}\}$, $D_{acd}=\{\{B,3\}, \{1,2\}\}$, and
    $D_{abd}=D_{bcd}=\{\{B,1\}, \{1,2\}\}$.}
  \label{fig:unp2}
\end{figure}

We next examine a particular subclass of 2-structures, the so-called
reversible 2-structures. As we shall see, they are simpler to handle than
general 2-structures. Nevertheless, there is no loss of generality as far
as modules are concerned.

\begin{definition}
  A 2-structure  $g=(V,\Upsilon, \vp)$ is \emph{reversible}, 
  if for all $e,f\in \Virr$, $\vp(e)=\vp(f)$ implies that  
  $\vp(e^{-1})=\vp(f^{-1})$. 
\label{def:rev-g}
\end{definition} 	
Equivalently, $g$ is reversible, if for each label $i \in \Upsilon$ there
is a unique label $j\in \Upsilon$ such that $\vp(x,y)= i$ implies
$\vp(y,x)= j$.

The definition of modules simplifies for reversible 2-structures. It
suffices to require that $M$ satisfies $\vp(xz) = \vp(yz)$ for 
all $z\in V\setminus M$ and $x,y\in M$, because $\vp(xz)=\vp(yz)$
and reversibility implies $\vp(zx)=\vp(zy)$.

\begin{definition}
The \emph{reversible refinement} of a 2-structure $g$ is a 
2-structure $\rev(g) = (V_g, \Upsilon_{\rev(g)}, \vp_{\rev(g)})$
where $\vp_{\rev(g)}(e)=h((\vp_{g}(e),\vp_{g}(e^{-1})))$ and 
$h: \Sigma \to \Upsilon_{\rev(g)}$ is an arbitrary 
bijection where $\Sigma = \{(\vp_{g}(e),\vp_{g}(e^{-1}))\mid e\in \Vgirr \}$ 
denotes the set of ordered pairs of colors on each arc $e$ and
its reverse $e^{-1}$.  
\end{definition}
The labels of $\rev(g)$ are most easily understood as pairs of labels of
$g$. Since the label sets are treated as sets without additional structure
in the context of 2-structures we allow an arbitrary relabeling. Since we
can identify isomorphic 2-structures, we can rephrase the definition in the
following form:
\begin{lemma} 
  A 2-structure $h$ is the reversible refinement of the 2-structure $g$ if
  and only if (i) $V_h=V_g$ and (ii) for all $e,f\in\Virr$ holds
  $\vp_{h}(e) = \vp_{h}(f)$ if and only if $\vp_{g}(e) = \vp_{g}(f)$ and
  $\vp_{g}(e^{-1}) = \vp_{g}(f^{-1})$.
\end{lemma}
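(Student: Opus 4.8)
The plan is to prove both implications directly from the definition of $\rev(g)$, reading ``$h$ is the reversible refinement of $g$'' as ``$h\simeq\rev(g)$'', since $\rev(g)$ is only pinned down up to the choice of the relabeling bijection and condition (ii) constrains only the partition of arcs into label-classes, not the label names themselves. To avoid the notational clash with the 2-structure $h$, I would write $\beta\colon\Sigma\to\Upsilon_{\rev(g)}$ for the bijection used in the definition, so that $\vp_{\rev(g)}(e)=\beta(\vp_g(e),\vp_g(e^{-1}))$, and I would adopt the standard convention that each label set equals the range of its labeling function.

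For the forward direction, suppose $h=\rev(g)$. Condition (i) is immediate because $V_{\rev(g)}=V_g$ by definition. For (ii), I would unfold $\vp_h(e)=\vp_h(f)$ into the equality $\beta(\vp_g(e),\vp_g(e^{-1}))=\beta(\vp_g(f),\vp_g(f^{-1}))$ and invoke injectivity of $\beta$ to conclude that this holds if and only if the two ordered pairs coincide, i.e.\ $\vp_g(e)=\vp_g(f)$ and $\vp_g(e^{-1})=\vp_g(f^{-1})$. This is exactly (ii).

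For the converse, suppose $h$ satisfies (i) and (ii). I would fix an arbitrary reversible refinement $\rev(g)$ with bijection $\beta$; by the forward direction it too satisfies (ii). The target is a label bijection $\alpha\colon\Upsilon_h\to\Upsilon_{\rev(g)}$ with $\vp_{\rev(g)}(e)=\alpha(\vp_h(e))$ for all $e\in\Virr$, which is precisely the assertion $h\simeq\rev(g)$. I would define $\alpha$ by choosing, for each label $\lambda\in\Upsilon_h$, an arc $e$ with $\vp_h(e)=\lambda$ and setting $\alpha(\lambda)=\vp_{\rev(g)}(e)$; then verify well-definedness, injectivity, and surjectivity.

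The only real work — and the step I expect to be the main obstacle — is chaining the two instances of condition (ii). Well-definedness uses (ii) for $h$ in the forward sense: if $\vp_h(e)=\vp_h(f)$ then $\vp_g(e)=\vp_g(f)$ and $\vp_g(e^{-1})=\vp_g(f^{-1})$, whence (ii) for $\rev(g)$ yields $\vp_{\rev(g)}(e)=\vp_{\rev(g)}(f)$, so the value of $\alpha(\lambda)$ does not depend on the chosen representative. Injectivity runs the same equivalences backwards, combining (ii) for $\rev(g)$ with (ii) for $h$. Surjectivity is where the range convention matters: every label of $\rev(g)$ is $\beta$ of some pair in $\Sigma$, and by definition of $\Sigma$ every such pair actually occurs on an arc, so every label of $\rev(g)$ lies in the image of $\alpha$. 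With $\alpha$ a bijection satisfying the intertwining identity $\vp_{\rev(g)}=\alpha\circ\vp_h$, the isomorphism $h\simeq\rev(g)$ follows.
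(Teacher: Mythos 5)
Your proof is correct and matches the paper's intent: the paper states this lemma \emph{without} proof, presenting it as an immediate rephrasing of the definition of $\rev(g)$ under the identification of isomorphic 2-structures, and your explicit construction of the label bijection $\alpha$ (well-definedness and injectivity chained through the two instances of condition (ii), surjectivity via the definition of $\Sigma$) is precisely the elaboration of that identification. The subtlety you flag --- that surjectivity requires every label of $h$ to be realized on some arc, i.e.\ the convention that label sets coincide with the range of the labeling function --- is exactly the point the paper glosses over, so your write-up supplies all the missing detail without deviating from the intended route.
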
 

Finally, let us recall some well-established results concerning
(reversible) 2-structures.
\begin{theorem}[\cite{ER1:90}]
  For every 2-structure $g$ the following properties hold:
  \begin{enumerate}
  \item $\rev(g)$ is reversible, i.e., $\rev(\rev(g))=\rev(g)$.
  \item $g$ is reversible iff $g =\rev(g)$ .
  \item $\M(g) = \M(\rev(g))$. 
  \item A 2-structure $h$ is a substructure of $g$ iff 
    $\rev(h)$ is a substructure of $\rev(g)$. 
    \label{item:same-modules}
  \end{enumerate}
  \label{thm:points}
\end{theorem}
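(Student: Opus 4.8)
The plan is to reduce all four statements to the ``pairs'' characterization of the reversible refinement recorded just above: $h=\rev(g)$ holds exactly when $V_h=V_g$ and, for all $e,f\in\Virr$, the equality $\vp_h(e)=\vp_h(f)$ holds if and only if both $\vp_g(e)=\vp_g(f)$ and $\vp_g(e^{-1})=\vp_g(f^{-1})$. Since 2-structures are identified up to relabeling of $\Upsilon$, this is precisely the data of the partition of $\Virr$ into classes of arcs carrying the same ordered pair $(\vp_g(e),\vp_g(e^{-1}))$, together with the arc/reverse-arc involution $e\mapsto e^{-1}$ it induces.

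For item~1, I would take $e,f$ with $\vp_{\rev(g)}(e)=\vp_{\rev(g)}(f)$; by the characterization this means $\vp_g(e)=\vp_g(f)$ and $\vp_g(e^{-1})=\vp_g(f^{-1})$. Since $\vp_{\rev(g)}(e^{-1})$ is determined by the pair $(\vp_g(e^{-1}),\vp_g(e))$, these two equalities give $\vp_{\rev(g)}(e^{-1})=\vp_{\rev(g)}(f^{-1})$, so $\rev(g)$ is reversible. For item~2, the implication $g=\rev(g)\Rightarrow g$ reversible is immediate from item~1. Conversely, if $g$ is reversible then $\vp_g(e)=\vp_g(f)$ already forces $\vp_g(e^{-1})=\vp_g(f^{-1})$, so the defining biconditional of the characterization collapses to the tautology $\vp_g(e)=\vp_g(f)\Leftrightarrow\vp_g(e)=\vp_g(f)$; hence $g$ itself satisfies the characterization of its own refinement, i.e.\ $g=\rev(g)$. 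Feeding the reversible structure $\rev(g)$ into item~2 then yields $\rev(\rev(g))=\rev(g)$, which closes item~1.

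For item~3, the key observation is that, through the bijection defining $\rev$, the single equation $\vp_{\rev(g)}(xz)=\vp_{\rev(g)}(yz)$ is equivalent to the conjunction $\vp_g(xz)=\vp_g(yz)$ and $\vp_g(zx)=\vp_g(zy)$. Thus the full (two-sided) module condition for $g$ at a pair $x,y\in M$, $z\in V\setminus M$, is captured by the single-sided module condition for $\rev(g)$, and symmetrically. This gives $M\in\M(g)\Leftrightarrow M\in\M(\rev(g))$ and hence $\M(g)=\M(\rev(g))$.

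For item~4, the engine is that reversible refinement commutes with taking induced substructures: for $X\subseteq V_g$ one has $\rev(g[X])\simeq\rev(g)[X]$, since both sides label an arc $e\in\Xirr$ by a relabeling of the pair $(\vp_g(e),\vp_g(e^{-1}))$. The forward direction then follows readily: if $h\simeq g[X]$ then $\rev$ respects the label bijection, so $\rev(h)\simeq\rev(g[X])\simeq\rev(g)[X]$, exhibiting $\rev(h)$ as a substructure of $\rev(g)$. The converse is where I expect the real difficulty. From $\rev(h)$ being a substructure of $\rev(g)$ we obtain $X$ with $\rev(h)\simeq\rev(g)[X]\simeq\rev(g[X])$, and the task is to upgrade this to $h\simeq g[X]$. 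This is delicate precisely because $\rev$ retains only the ordered pairs $(\vp(e),\vp(e^{-1}))$ and forgets how the individual forward labels of $g[X]$ coincide; the argument must therefore reconstruct the original labeling from the conjugate-pair structure of $\rev$, reading off the forward label of each arc from the appropriate coordinate of its refined label and using the arc/reverse-arc involution to keep the two coordinates synchronized. Verifying that this produces a well-defined relabeling realizing $h\simeq g[X]$ is the crux of the theorem and the step I expect to consume most of the effort.
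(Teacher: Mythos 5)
The paper itself offers no proof of this theorem --- it is quoted verbatim from \cite{ER1:90} --- so there is no in-paper argument to compare against and your proposal must stand on its own. Items 1--3 of your proposal do stand: deriving reversibility of $\rev(g)$ directly from the pairs characterization, collapsing the biconditional of that characterization when $g$ is reversible to get item~2, then feeding the (already proven reversible) structure $\rev(g)$ back into item~2 to obtain $\rev(\rev(g))=\rev(g)$ without circularity, and observing that the one-sided equality $\vp_{\rev(g)}(xz)=\vp_{\rev(g)}(yz)$ unpacks into exactly the two-sided module condition for $g$, are all correct and are the natural reductions.

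The genuine gap is the converse of item~4, which you do not prove but only defer to a hoped-for ``reconstruction of the original labeling from the conjugate-pair structure.'' This step is not merely hard; under the paper's own conventions it is impossible, because $\rev$ irretrievably forgets \emph{which forward labels coincide across distinct arc/reverse-arc pairs}, and substructures are defined only up to label bijection. Concretely, let $V=\{x,y,z\}$, let $g$ have $\vp_g(xy)=\vp_g(xz)=1$, $\vp_g(yx)=2$, $\vp_g(zx)=3$, $\vp_g(yz)=4$, $\vp_g(zy)=5$, and let $h$ be identical except $\vp_h(xz)=6$. In both structures the six ordered pairs $(\vp(e),\vp(e^{-1}))$ are pairwise distinct, so $\rev(h)\simeq\rev(g)$ and hence $\rev(h)$ is a substructure of $\rev(g)$ (take $X=V$); but $h$ is not a substructure of $g$: the only candidate is $X=V$, and no label bijection can exist since the arcs $(x,y)$ and $(x,z)$ share a label in $g$ while all six arcs of $h$ carry distinct labels. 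Thus the ``only if'' direction of item~4 is false when $\rev$ is defined up to an arbitrary bijection and substructure means $h\simeq g[X]$. It becomes true --- and essentially trivial --- under the stricter reading presumably intended in \cite{ER1:90}: if the labels of $\rev(g)$ are taken to be literally the pairs $(\vp_g(e),\vp_g(e^{-1}))$ and ``substructure'' means literal restriction $h=g[X]$, then $\rev(h)=\rev(g)[X]$ forces $\vp_h(e)=\vp_g(e)$ by projecting each pair label onto its first coordinate. Your writeup should either adopt that rigid convention explicitly, or weaken the converse to the (correct) statement that every substructure of $\rev(g)$ has the form $\rev(g)[X]\simeq\rev(g[X])$; note that nothing in the rest of the paper is affected, since the text only ever invokes item~3 (the module statement), even where the reference formally points at item~4.
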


\begin{remark}
  For a given 2-structure $g=(V,\Upsilon,\vp)$ we define the map
  $\delta_g\colon\Virr\to\Upsilon$ as $\delta_g(xy) = \vp(xy)$ for all
  distinct $x,y\in V$.  Hence, $G_i(\delta_g) = G_i(g)$ for all $i\in
  \Upsilon$.  Conversely, each symbolic ultrametric
  $\delta\colon\Virr\to\Upsilon$ gives raise to a 2-structure
  $g=(V,\Upsilon, \delta)$.  To simplify the language, we will say that $g$
  satisfies Condition (U1) and (U2) whenever $\delta_g$ satisfies (U1) and
  (U2).
\end{remark}

\section{Characterization of \secunp\ 2-structures}
\label{sec:charact}

Let $\mc{R} = \{R_1,\dots,R_n\}$ be a set of disjoint relations.  Or goal
is to characterize the $\mc{R}$ that are obtained from a common tree $T$ and a
suitable labeling $t$ of the inner vertices of $T$. To this end we need to
understand on the one hand the relationships by symbolic ultrametrics and
event-labeled trees, and on the other hand the connection between symbolic
ultrametrics and 2-structures. Moreover, we will establish the connection
between 2-structures $g$ and certain modules in the modular decomposition
of the underlying graphs $G_i(g)$. The main results of this section are
summarized in Theorem \ref{thm:charactALL}.

\subsection{2-structures and Symbolic Ultrametrics}

We begin this section with the characterization of reversible 2-structures
by means of symbolic ultrametrics, which will be generalized to
arbitrary 2-structures at the end of this subsection.

\begin{remark}
  In what follows, we choose the notion $\Delta(xyz)$ as a shortcut for
  ``\emph{the Condition (U2) must be fulfilled for the set $D_{xyz}$}''.
  Moreover, for any forbidden subgraph $K$ that might occur in the graph
  $G_j(g) = G_j(\delta_g)$ of some 2-structure $g$, we use the symbols
  $K^j(abc)$ and $K^j(abcd)$, resp., to designate the fact that
  $G_j(\delta_g)$ contains the forbidden subgraph $K$ induced by the
  vertices $a,b,c$, resp., $a,b,c,d$ in $G_j(\delta_g)$.
\end{remark}

\begin{proposition}
For every reversible 2-structure $g=(V,\Upsilon, \vp)$ the following 
two statements are equivalent: 
\begin{enumerate}
\item[(1)] $g$ is \unp.
\item[(2)] $\delta_g$ is a symbolic ultrametric. 
\end{enumerate}
\label{prop:charact}
\end{proposition}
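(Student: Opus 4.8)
The plan is to reduce statement~(1) to a purely local condition via Theorem~\ref{thm:char-old}: $g$ is \unp\ if and only if it has no prime substructure on $3$ or $4$ vertices. I will then match, size by size, the prime substructures of $g$ with violations of (U1) and (U2). Two bridges make this possible. First, each forbidden di-cograph of Figure~\ref{fig:forb} is itself a \emph{prime} digraph, i.e.\ it admits only trivial graph-modules; hence if some $G_i(g)[X]=G_i(\delta_g)[X]$ equals one of them, then by Lemma~\ref{lem:module-graphmodule} every module of the $2$-structure $g[X]$ is a trivial graph-module of $G_i(g)[X]$, so $g[X]$ is prime with $3\le|X|\le 4$. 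This already yields $(1)\Rightarrow$ (U1). Second, reversibility lets me translate the module structure of a triple into coincidences among the sets $D_{\bullet\bullet}$: for distinct $x,y,z$ the pair $\{x,y\}$ is a module of $g[\{x,y,z\}]$ if and only if $\vp(xz)=\vp(yz)$ (the reverse equality $\vp(zx)=\vp(zy)$ being automatic by reversibility), and in that case $D_{xz}=D_{yz}$, so $|D_{xyz}|\le 2$.

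The size-$3$ analysis is the clean core. If $|D_{xyz}|=3$ then no two of the three pairs can form a module, so $g[\{x,y,z\}]$ is prime; thus $(1)\Rightarrow$ (U2). Conversely, suppose $g[\{x,y,z\}]$ is prime but $|D_{xyz}|\le 2$, say $D_{xz}=D_{yz}$. Since $\{x,y\}$ is not a module, $\vp(xz)\neq\vp(yz)$, and equality of the two two-element sets $D_{xz},D_{yz}$ then forces the ``crossed'' identification $\vp(xz)=\vp(zy)=:a$ and $\vp(yz)=\vp(zx)=:b$ with $a\neq b$. The remaining non-module conditions force $\vp(xy)\neq a$, so reading off the arcs coloured $a$ yields the induced directed path $(x,z),(z,y)$, a $3$-vertex forbidden subgraph of Figure~\ref{fig:forb} (which closes to $C_3$ precisely when $\vp(yx)=a$). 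Hence a prime triple always produces either $|D_{xyz}|=3$ or a forbidden $3$-vertex subgraph in some $G_i(g)$, and conversely each of these produces a prime triple. This gives the exact equivalence: $g$ has no prime substructure on $3$ vertices if and only if (U2) holds and no $G_i(g)$ contains a $3$-vertex forbidden subgraph.

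It remains to treat the size-$4$ case, which I expect to be the main obstacle. Writing (U1) as ``no $G_i(g)$ contains a $3$- or $4$-vertex forbidden subgraph'', the work above reduces the whole proposition to the following claim: \emph{assuming} (U2) \emph{and that no $G_i(g)$ has a $3$-vertex forbidden subgraph, a set $X$ with $|X|=4$ induces a prime substructure if and only if some $G_i(g)[X]$ equals one of the $4$-vertex forbidden subgraphs} $A,B,N,\overline{N},P_4$. The direction ``$\Leftarrow$'' is again immediate from Lemma~\ref{lem:module-graphmodule} and the primeness of those four-vertex forbidden digraphs. For ``$\Rightarrow$'' I would argue as follows. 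Under (U2) and the no-$3$-forbidden hypothesis, the previous paragraph shows that \emph{every} triple inside $X$ is non-prime; equivalently, for each external vertex $c$ the relation ``same colour towards $c$'', $a\sim_c b :\Leftrightarrow \vp(ac)=\vp(bc)$, merges some pair in each triple. A pair $\{a,b\}$ is a module of the whole $g[X]$ exactly when $a\sim_c b$ holds for \emph{both} remaining vertices $c$, so primeness of $g[X]$ means that no pair is merged from both sides. The plan is then a finite case analysis over the partition types of the four equivalences $\sim_1,\dots,\sim_4$ (each a partition of a $3$-element set into all-singletons, one-pair, or all-merged): in every configuration one either finds a pair merged from both sides, contradicting primeness, or two colours that interleave on the four vertices as a $P_4$, $N$, $\overline{N}$, $A$, or $B$, exhibiting a $4$-vertex forbidden subgraph in the corresponding $G_i(g)$. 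Organising this bookkeeping so that every case terminates in one of these two outcomes is the technical heart of the argument; once it is in place, combining the size-$3$ and size-$4$ equivalences with Theorem~\ref{thm:char-old} delivers $(1)\Leftrightarrow(2)$.
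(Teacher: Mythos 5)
Your overall strategy is the same as the paper's: reduce \unp ness via Theorem~\ref{thm:char-old} to the absence of prime substructures on $3$ and $4$ vertices, and match these against violations of (U1) and (U2). The parts you actually carry out are correct, and in places cleaner than the paper: deriving $(1)\Rightarrow$(U1) from primality of the forbidden digraphs of Figure~\ref{fig:forb} combined with Lemma~\ref{lem:module-graphmodule} is a nice shortcut (the paper instead verifies primeness of the offending substructures by hand), $(1)\Rightarrow$(U2) via ``a $2$-element module forces $D_{xz}=D_{yz}$'' is valid and avoids the paper's tree-representation argument, and your size-$3$ converse (the ``crossed identification'' $\vp(xz)=\vp(zy)$, $\vp(yz)=\vp(zx)$ leading to $D_3$ or $C_3$) is exactly right and uses reversibility precisely where it is needed.

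The genuine gap is that the size-$4$ half of the converse --- that (U2) plus the absence of forbidden subgraphs forces every $4$-vertex substructure to be non-prime --- is stated but not proved; you explicitly defer the ``bookkeeping.'' This is precisely where the paper's proof spends essentially all of its effort: a multi-page case analysis (cases (i)--(iv) nested inside (I), (II), \dots) that tracks explicit label identities and repeatedly invokes (U2) in the form $\Delta(acd)$, $\Delta(abd)$ together with reversibility. Moreover, the invariant you propose to enumerate over --- the partition types of the four equivalences $\sim_v$ --- is too coarse to complete the argument: exhibiting a forbidden $P_4$, $N$, or $D_3$ inside a single $G_i(g)$ requires knowing which colours coincide across arcs with different heads and tails (e.g.\ whether $\vp(ab)=\vp(cd)$ or $\vp(db)=\vp(ac)$), and no such cross-pair equality is recorded by the relations $\sim_v$. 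Finally, equating primeness of $g[X]$ with ``no pair merged from both sides'' overlooks $3$-element modules; this is repairable (in a $4$-set all of whose triples are non-prime, any $3$-element module contains a $2$-element module), but that repair is itself a step your sketch does not supply. As it stands, the proposal establishes the proposition only up to its hardest case, so it is an incomplete proof rather than an alternative one.
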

\begin{proof}
  In order to prove Proposition \ref{prop:charact}, we have to show that
  \unp\ 2-structures are characterized by the conditions
  \begin{enumerate}
  \item[(U1)] $G_{i}(\delta_g)$ is a di-cograph for all $i\in \Upsilon$  and 
  \item[(U2)] for all vertices $x,y,z \in V$ it holds 
    $|\left\{ D_{xy}, D_{xz}, D_{yz}  \right\}|\leq 2$. 
  \end{enumerate}

  We will frequently apply the following argument without explicitly
  stating it every time: By definition, if $g$ is reversible then 
  $\vp(e)=\vp(f)$ iff
  $\vp(e^{-1})=\vp(f^{-1})$. Hence, for reversible $g$,  
  $D_{ab}\neq D_{xy}$ implies that $\vp(ab)\neq \vp(xy), \vp(yx)$ and
  $\vp(ba)\neq \vp(xy), \vp(yx)$.

  \smallskip \emph{$\Rightarrow$:} Let $g=(V,\Upsilon, \vp)$ be a
  reversible \unp\ 2-structure. If $|V|<3$ then (U1) and (U2) are trivially
  satisfied. Thus we assume w.l.o.g.\ that $|V|\geq 3$. Furthermore,
  suppose there is a label $i\in\Upsilon$ such that $G_i(\delta_g)$ is not
  a di-cograph, i.e., $G_i(\delta_g)$ contains one of the forbidden
  subgraphs. Since $g$ is reversible, the forbidden subgraphs
  $A,B,\overline{D_3}$, and $\overline{N}$ cannot occur.

  Now let $h$ be a substructure of $g$ with $|V_h|=3$ containing $D_3$ or
  $C_3$, or $|V_h|=4$ containing $P_4$ or $N$, respectively. It is not hard
  to check that for each of these four graphs and any two distinct vertices
  $a,b \in V_h$ there is always a vertex $v\in V_h\setminus\{a,b\}$ so that
  $\vp(av)\neq \vp{(bv)}$.  Therefore, $\{a,b\}$ cannot form a module in
  $h$.  For $P_4$ and $N$ one checks that for any three distinct vertices
  $a,b,c \in V_h$ and $v\in V_h\setminus\{a,b,c\}$ we always have
  $\vp(av)\neq \vp{(bv)}$, or $\vp(av)\neq \vp{(cv)}$, or $\vp(bv)\neq
  \vp{(cv)}$, so that $\{a,b,c\}$ cannot form a module in $h$. Thus, $h$
  contains only trivial modules and, hence, is prime. This contradiction
  implies that (U1) must be fulfilled.

  Since $g=(V,\Upsilon,\vp)$ has a tree representation without prime nodes,
  and since three distinct leaves can have at most two distinct least
  common ancestors, Condition (U2) must hold as well.

  \smallskip\emph{$\Leftarrow$:} Now assume that $\delta_g$ is a symbolic
  ultrametric, i.e., condition (U1) and (U2) are fulfilled for a reversible
  2-structure $g$. In order to show that $g$ is \unp\ we have to
  demonstrate that all substructures $h$ of $g$ with $|V_h|=3$ and
  $|V_h|=4$ are non-prime (cf.\ Theorem \ref{thm:char-old}).

  \smallskip\noindent
  \emph{CASE: $h$ is a substructure of $g$ with $V_h=\{a,b,c\}$.}\\
  Since $\Delta(abc)$ we may assume that $D_{ab}=D_{ac}$, otherwise we
  simply relabel the vertices.  If $|D_{ab}|=1$, then $\{b,c\}$ forms a
  module in $h$.  Assume that $|D_{ab}|=2$. There are two cases, either
  $\vp(ab)=\vp(ac)$, then $\{b,c\}$ is a module in $h$, or
  $\vp(ba)=\vp(ac)=i$. In the latter case, $\vp(bc)=i$ since otherwise
  either or $D^i_3(abc)$ or $C_3^i(abc)$ would occur. Therefore, 
  $\{a,c\}$ forms a module in $h$. 

  Hence, in all cases, a substructure $h$ of $g$ with $V_h=\{a,b,c\}$ forms
  a non-prime structure. 

  \smallskip\noindent
  \emph{CASE: $h$ is a substructure of $g$ with $V_h=\{a,b,c,d\}$.}\\
  Since $\Delta(abc)$ we can assume that $D_{ab}=D_{ac}$, otherwise
  relabel the vertices. 
	
  Assume first that $|D_{ab}|=1$ and thus, 
  $\vp(ab)=\vp(ba)=\vp(ac)=\vp(ca)=i$ for some $i\in \Upsilon$.
  Since $\Delta(acd)$ we have the three distinct cases
  \begin{itemize}
  \item[(i)] $\vp(ad)=\vp(cd)=i$, 
  \item[(ii)] either (A) $\vp(ad)=i$ or (B) $\vp(cd)=i$
  \item[(iii)] neither $\vp(ad)=i$ nor $\vp(cd)=i$.
  \end{itemize}
  In Case (i) and (iiA), $\{b,c,d\}$ is a module in $h$. 
  In Case (iiB), the arc $(bc)$ or $(bd)$ must be labeled
  with $i$ as otherwise there is $P_4^i(abcd)$.
  If $\vp(bc)=i$, then $\{a,b,d\}$ is a module in $h$. 
  If $\vp(bd)=i$, then $\{a,d\}$ is a module in $h$. 

  Consider now Case (iii). Since $\Delta(acd)$, it follows
  that $D_{ad}=D_{cd}$ and in particular, $i\notin D_{ad}=D_{cd}$, 
  since $g$ is reversible.
  Let first $|D_{ad}|=|\{j\}|=1$. Since $\Delta(abd)$, we have that
  either $\vp(bd)=j$, in which case $\{a,b,c\}$ is a module in $h$	
  or $\vp(bd)=i$, which implies that $\vp(bc)=i$, since otherwise
  $P_4^i(abcd)$. In the latter case, $\{a,c,d\}$ forms a module in $h$.
  If  $|D_{ad}|=2$, we have only the case that  $\vp(ad)=\vp(cd)=j$
  for some $j\in \Upsilon$.
  In the two other cases $\vp(ad)=\vp(dc)=j$ or $\vp(da)=\vp(cd)=j$
  we would obtain $D_3^j(adc)$. 
  Since $\Delta(abd)$, we obtain that	either 
  (I) $\vp(bd)=i$, (II) $\vp(bd)=j$ or (III) $\vp(db)=j$. 
  Case (I) implies that $\vp(bc)=i$ as otherwise there is $P_4^i(abcd)$. 
  Hence, $\{a,c,d\}$ form a module in $h$. 
  In Case (II) $\{a,b,c\}$ is a module in $h$ and Case (III)
  cannot occur, as otherwise there is $D_3^j(abd)$.
  
  Now, assume  that $|D_{ab}|=2$. 
  Since $\Delta(abc)$, we can wlog.\ assume that 
  $D_{ab}=D_{ac}$, otherwise we relabel the vertices. 
  Hence, we have 
  either (I) $\vp(ab)=\vp(ac)=i$ or (II) $\vp(ba)=\vp(ac)=i$. 
  Note that in Case (I), $\vp(ba)=\vp(ca)=i'\neq i$ and in Case (II)
  $\vp(ab)=\vp(ca)=i'\neq i$. 
        
  Consider Case (I). 
  Since $\Delta(acd)$, we  have one of the four distinct cases
  \begin{itemize}
  \item[(i)] $D_{ac}=D_{ad}=D_{cd}$
  \item[(ii)] $D_{ac}=D_{ad}\neq D_{cd}$
  \item[(iii)] $D_{ac}=D_{cd}\neq D_{ad}$
  \item[(iv)] $D_{ac}\neq D_{cd}$ and $D_{ac}\neq D_{ad}$
  \end{itemize}
  
  In Case (Ii) it is not possible to have 
  $\vp(cd)=\vp(da)=i$ as otherwise there is 
  $C^i_{3}(acd)$. If $\vp(ad)=i$, then 
  $\{b,c,d\}$ is module in $h$. 
  If $\vp(da)=i$,	then $\vp(db)=\vp(dc)=i$,
  since otherwise there is $D_3^i(abd)$, $D_3^i(acd)$, 
  $C_3^i(abd)$ or $C_3^i(acd)$. In that case, $\{a,b,c\}$
  is a module in $h$. 	

  In Case (Iii) it is not possible to have
  $\vp(da)=i$, since otherwise there is $D_3^i(acd)$. 
  Thus, $\vp(ad)=i$ and therefore, 
  $\{b,c,d\}$ forms a module in $h$. 
  
  In Case (Iiii) it is not possible to have
  $\vp(cd)=i$, since otherwise there is $D_3^i(acd)$. 
  Hence, $\vp(dc)=i$. But then, at least one
  of the remaining arcs $(bc), (cb), (bd), (db)$
  must have label $i$, since otherwise there 
  is $N^i(abcd)$. 
  If $\vp(bc)=i$, then $\{a,b,d\}$ is a module in $h$.
  If $\vp(cb)=i$, then $\vp(db)=i$ as otherwise there
  is $D_3^i(bcd)$ or $C_3^i(bcd)$. Hence, $\{a,c,d\}$
  is a module in $h$. 
  The case $\vp(bd)=i$ is not possible, since then
  there is $D_3^i(abd)$.
  If $\vp(db)=i$, then $\{a,d\}$ is a module in $h$.
  
  In Case (Iiv) and since $\Delta(acd)$, we have 		
  $D_{ad}=D_{cd}$. 
  If $D_{ad}=\{j\}$ and thus, $|D_{ad}|=1$, then $\Delta(abd)$ implies that
  either $\vp(bd)=\vp(db)=j\neq i$, or
  $\vp(bd)=i$, or $\vp(db)=i$. 
  If $\vp(bd)=j\neq i$, then $\{a,b,c\}$ is a module
  in $h$. 
  The case $\vp(bd)=i$ cannot happen, since otherwise
  there is $D^i_3(abd)$. If $\vp(db)=i$, then either
  $\vp(bc)=i$ or $\vp(cb)=i$, otherwise there is 
  $N^i(abcd)$. The case $\vp(bc)=i$ is not possible, 
  otherwise there is $D_3^i(bcd)$. If  $\vp(cb)=i$, 
  then $\{a,c,d\}$ is a module in $h$. 
  
  Assume now that in Case (Iiv) we have $|D_{ad}|=2$. 
  Again, since $\Delta(acd)$, we have 		
  $D_{ad}=D_{cd}$. Assume that $j\in D_{ad}$.
  There are two case, either $\vp(ad)=\vp(cd)=j\neq i$
  or $\vp(ad)=\vp(dc)=j\neq i$. However, the latter
  case is not possible, otherwise there is $D_3^j(acd)$. 
  Hence, let $\vp(ad)=\vp(cd)=j\neq i$.
  Since $\Delta(abd)$ we can conclude that
  either $\vp(bd)=i$, or $\vp(db)=i$, or	
  $\vp(bd)=j$, or $\vp(db)=j$. 
  The cases  $\vp(bd)=i$ and $\vp(db)=j$ are not
  possible, otherwise there is
  $D^i_3(abd)$ and $D^j_3(abd)$, respectively. 
  If $\vp(db)=i$, then $\vp(bc)=i$ or $\vp(cb)=i$, otherwise there is 
  $N^i(abcd)$. This case can be treated as in the previous
  step and we obtain the module $\{a,c,d\}$ in $h$. 
  If $\vp(bd)=j$, then $\{a,b,c\}$ is a module in $h$. 
  
  Consider now Case (II) $\vp(ba)=\vp(ac)=i$, 
  and $\vp(ab)=\vp(ca)=i'\neq i$. 
  Hence, $\vp(bc)=i$, otherwise there is 
  $D_3^i(abc)$ or $C_3^i(abc)$.	
  Again, since $\Delta(acd)$, we have one of the four distinct cases
  (i), (ii), (iii) or (iv), as in Case (I). 
  
  Consider the Case (IIi). If $\vp(dc)=i$, then 
  $\{a,b,d\}$ is a module in $h$.
  Thus, assume $\vp(cd)=i$. The case $\vp(da)=i$ is
  not possible, since then there is $C_3^i(acd)$. 
  If $\vp(ad)=i$, then $\vp(bd)=i$, otherwise
  there is  $D_3^i(abd)$ or  $C_3^i(abd)$.
  Now, $\{a,c,d\}$ is a module in $h$.
  
  Now, Case (IIii). 
  The case $\vp(da)=i$ is not possible, otherwise there
  is $D_3^i(acd)$ and thus, $\vp(ad)=i$.  
  Then $\vp(bd)=i$, otherwise there is 
  $D^i_3(abd)$ or $C^i_3(abd)$. Therefore,
  $\{a,c,d\}$ is a module in $h$. 

  Consider the Case (IIiii).
  The case $\vp(cd)=i$ is not possible, otherwise there
  is $D_3^i(acd)$. Thus, $\vp(dc)=i$ and therefore, 
  $\{a,b,d\}$ is a module in $h$.
  
  In Case (IIiv) and since $\Delta(acd)$, we have 		
  $D_{ad}=D_{cd}$. 
  If $D_{ad}=\{j\}$ and thus, $|D_{ad}|=1$, then $\Delta(abd)$ implies
  that either $\vp(bd)=j\neq i$, or $\vp(bd)=i$, 
  or $\vp(db)=i$. 
  If $\vp(bd)=j\neq i$, then $\{a,b,c\}$ is a module
  in $h$. If $\vp(bd)=i$, then
  $\{a,c,d\}$ is a module in $h$.
  The case $\vp(db)=i$ cannot happen, otherwise there
  is $D_3^i(bcd)$.
  
  If $|D_{ad}|=2$ and $j\in D_{ad}$, then there are two cases
  either $\vp(ad)=\vp(cd)=j\neq i$
  or $\vp(ad)=\vp(dc)=j\neq i$.
  However, the latter
  case is not possible, otherwise there is $D_3^j(acd)$. 
  Hence, let 	$\vp(ad)=\vp(cd)=j\neq i$.
  Since $\Delta(abd)$ we can conclude that
  either $\vp(bd)=i$, or $\vp(db)=i$, or	
  $\vp(bd)=j$, or $\vp(db)=j$.
  The cases  $\vp(db)=i$ and $\vp(db)=j$ are not
  possible, otherwise there is
  $D^i_3(abd)$ and $D^j_3(abd)$, respectively. 
  If $\vp(bd)=i$ or $\vp(bd)=j$, then 
  $\{a,c,d\}$, resp., $\{a,b,c\}$ is a module in 
  $h$. 
  
  In summary, in each of the cases a substructure $h$ of $g$ with
  $3$ or $4$ vertices is non-prime whenever (U1) and (U2) holds. 
  Thus $g$ is \unp.
\qed
\end{proof}

The next step is to generalize Proposition \ref{prop:charact} to arbitrary
2-structures. To this end, we first prove two technical results:
\begin{lemma}
  Let $g=(V,\Upsilon,\vp)$ be a 2-structure.  Then condition (U2) is
  satisfied for $g$ if and only if (U2) is satisfied in $\rev(g)$.
  \label{lem:tri-rev}
\end{lemma}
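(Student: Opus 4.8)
The plan is to show that the labelings $\vp_g$ and $\vp_{\rev(g)}$ induce \emph{exactly the same} pattern of coincidences among the label-pair sets $D_{xy}$, and then to observe that Condition~(U2) is a statement expressed entirely through such coincidences, so that it transfers verbatim. Writing $D_{xy}^g=\{\vp_g(xy),\vp_g(yx)\}$ and $D_{xy}^{\rev(g)}=\{\vp_{\rev(g)}(xy),\vp_{\rev(g)}(yx)\}$ for the two sets attached to an unordered pair in $g$, resp.\ $\rev(g)$, the central claim I would isolate is the following: for any two unordered pairs of distinct vertices $\{x,y\}$ and $\{u,v\}$, one has $D_{xy}^g=D_{uv}^g$ if and only if $D_{xy}^{\rev(g)}=D_{uv}^{\rev(g)}$.

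To establish the claim I would unfold the definition $\vp_{\rev(g)}(xy)=h((\vp_g(xy),\vp_g(yx)))$, where $h$ is a bijection. Then $D_{xy}^{\rev(g)}=\{h((\vp_g(xy),\vp_g(yx))),\,h((\vp_g(yx),\vp_g(xy)))\}$, and since $h$ is injective, the equality $D_{xy}^{\rev(g)}=D_{uv}^{\rev(g)}$ is equivalent to equality of the corresponding sets of ordered label-pairs. Abbreviating $a=\vp_g(xy)$, $b=\vp_g(yx)$, $p=\vp_g(uv)$, $q=\vp_g(vu)$, the routine step is the tautology $\{(a,b),(b,a)\}=\{(p,q),(q,p)\}$ if and only if $(a=p\text{ and }b=q)$ or $(a=q\text{ and }b=p)$, which is exactly $\{a,b\}=\{p,q\}$, i.e.\ $D_{xy}^g=D_{uv}^g$. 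I would verify the degenerate case $a=b$ (that is, $D_{xy}^g$ a singleton arising from $\vp_g(xy)=\vp_g(yx)$) separately, checking that it forces $p=q=a$ on both sides; injectivity of $h$ is what guarantees that a singleton is neither merged with a genuine two-element pair nor split.

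With the claim in hand the conclusion is immediate. Fix a triple $x,y,z$. Condition~(U2) asserts $|\{D_{xy},D_{xz},D_{yz}\}|\le 2$, which holds precisely when at least two of the three sets coincide, i.e.\ when $D_{xy}=D_{xz}$ or $D_{xy}=D_{yz}$ or $D_{xz}=D_{yz}$. By the claim, each of these three pairwise coincidences holds in $g$ if and only if it holds in $\rev(g)$; hence (U2) holds at $\{x,y,z\}$ in $g$ exactly when it holds in $\rev(g)$. Quantifying over all triples gives the lemma.

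I do not expect a genuine obstacle: the statement really says that $\rev(g)$ records precisely the information ``which ordered arc carries which (label, reverse-label) pair,'' so the relation $D_{xy}=D_{uv}$ is an invariant of passing to the reversible refinement. The only place demanding care is the bookkeeping around singleton $D$-sets, i.e.\ arcs with $\vp_g(xy)=\vp_g(yx)$, where one must confirm that the bijection $h$ respects the singleton/two-element distinction in both directions; as noted, this follows from injectivity of $h$ alone.
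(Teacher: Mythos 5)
Your proof is correct and follows essentially the same route as the paper's: both arguments reduce (U2) to the statement that two of the three sets $D_{xy},D_{xz},D_{yz}$ coincide, and both use the fact that $\vp_{\rev(g)}$ is a bijective encoding of the ordered pairs $(\vp_g(e),\vp_g(e^{-1}))$ to conclude that such coincidences in $g$ and in $\rev(g)$ are equivalent. Your write-up is somewhat more careful about the set-theoretic bookkeeping (ordered versus unordered pairs, singleton $D$-sets) than the paper's chain of equivalences, but the underlying idea is identical.
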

\begin{proof}
  Condition (U2) is satisfied in $g$ if and only if 
  $|D_{abc}| \leq 2$ in $g$ for all $a,b,c\in V$ if and only if
  there are two arcs $e,f$ in this triangle induced by $a,b,c$
  s.t.\ $\vp(e)=\vp(f)$ and $\vp(e^{-1})=\vp(f^{-1})$ if and only if
  $\vp_{\rev(g)}(e)=\vp_{\rev(g)}(f)$ (and thus, by reversibility of $g$, 
  $\vp_{\rev(g)}(e^{-1})=\vp_{\rev(g)}(f^{-1})$) 
  if and only if  $|D_{abc}| \leq 2$ in $\rev(g)$ for all $a,b,c\in V$.
\qed
\end{proof}

\begin{lemma}
  Let $g=(V,\Upsilon,\vp)$ be a 2-structure satisfying (U2). 
  Then $G_i(g)$ is a di-cograph for all $i\in\Upsilon$ if and only if 
  $G_j(\rev(g))$ is a di-cograph for all $j\in\Upsilon_{\rev(g)}$.
  \label{lem:cogr-rev}
\end{lemma}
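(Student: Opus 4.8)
The plan is to reduce everything to the reversible situation already settled in Proposition~\ref{prop:charact}. By Lemma~\ref{lem:tri-rev}, condition (U2) holds for $g$ if and only if it holds for $\rev(g)$, so throughout I may assume (U2) for both. I first record two structural facts about $\rev(g)$. Since $\rev(g)$ is reversible, each of its single-label graphs is \emph{homogeneous}: writing a label of $\rev(g)$ as the pair $(i,i')$ that records $\vp_g(e)=i$ and $\vp_g(e^{-1})=i'$, the graph $G_{(i,i)}(\rev(g))$ is symmetric while $G_{(i,i')}(\rev(g))$ with $i\neq i'$ is oriented (it contains no reciprocal arcs). Consequently the only forbidden subgraph that can occur in a symmetric $G_{(i,i)}(\rev(g))$ is the undirected $P_4$, and the only ones that can occur in an oriented $G_{(i,i')}(\rev(g))$ are $D_3$, $C_3$ and $N$; the configurations $A$, $B$, $\overline{D_3}$ and $\overline{N}$, which mix reciprocal and one-way arcs, are excluded for reversible structures (cf.\ the proof of Proposition~\ref{prop:charact}). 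Second, the arc sets refine as $E_i(g)=\bigsqcup_{i'} E_{(i,i')}(\rev(g))$, so each $G_i(g)$ is the arc-disjoint union of the homogeneous graphs $G_{(i,i')}(\rev(g))$ over all reverse labels $i'$.

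Because di-cograph membership is tested on induced subdigraphs with at most four vertices, it suffices to fix $W\subseteq V$ with $|W|\in\{3,4\}$ and to show, using (U2) only on the triples contained in $W$, that some $G_i(g)[W]$ is a forbidden subgraph if and only if some $G_{(i,i')}(\rev(g))[W]$ is. For the contrapositive direction ``$\rev(g)$ bad $\Rightarrow g$ bad'' I start from a forbidden $P_4$, $D_3$, $C_3$ or $N$ living inside a single homogeneous graph and track it in the larger $G_i(g)[W]$; the only way the obstruction could be destroyed is through the extra $i$-arcs carried by $G_i(g)$ but not by $G_{(i,i')}(\rev(g))$, namely arcs $e$ with $\vp_g(e)=i$ but $\vp_g(e^{-1})\neq i'$, and (U2) on the triples of $W$ is precisely what restricts which such arcs may appear, so that in each case a forbidden subdigraph (possibly of a different, ``mixed'' type $A,B,\overline{D_3},\overline{N}$) persists. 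For the converse ``$g$ bad $\Rightarrow \rev(g)$ bad'' I take a forbidden subgraph $K$ of $G_i(g)[W]$ and use (U2) to force the reverse labels $\vp_g(e^{-1})$ of the $i$-arcs of $K$ to coincide on a large enough subconfiguration, so that a forbidden subgraph already lives in a single homogeneous $G_{(i,i')}(\rev(g))$. For instance, if $G_i(g)[\{a,b,c\}]$ is a directed triangle $C_3$, then (U2) applied to $\{a,b,c\}$ makes two of the three reverse labels equal, placing two consecutive arcs of the triangle in a common $G_{(i,i')}(\rev(g))$, where they form a forbidden $D_3$.

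Together these two directions yield (U1) for $g$ if and only if (U1) for $\rev(g)$, which is the assertion. The main obstacle is that an arc-subgraph of a di-cograph need not be a di-cograph and, dually, an arc-disjoint union of di-cographs need not be one; hence neither implication is formal, and (U2) must be invoked to prevent the refinement $E_i(g)=\bigsqcup_{i'}E_{(i,i')}(\rev(g))$ from either splitting a genuine obstruction across several reverse labels or merging harmless homogeneous graphs into a forbidden $G_i(g)$. Concretely, the delicate part is the forbidden-subgraph-by-forbidden-subgraph verification over the eight configurations of Figure~\ref{fig:forb}, where (U2) is used exactly as in the three- and four-vertex case analysis of Proposition~\ref{prop:charact}; I expect this bookkeeping, rather than any single conceptual step, to be the bulk of the work.
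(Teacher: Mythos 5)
Your setup matches the shape of the paper's proof: reduce to vertex sets $W$ with $|W|\in\{3,4\}$, observe that the label graphs of $\rev(g)$ are symmetric or oriented so that only $D_3$, $C_3$, $N$, $P_4$ can occur there, and run a forbidden-subgraph case analysis over the configurations of Figure~\ref{fig:forb}. Your worked $C_3$ case is essentially correct (with the small slip that when $\vp_g(ac)=k$ the two arcs close up to a $C_3$, not a $D_3$, in $G_{(i,k)}(\rev(g))$; either way a forbidden subgraph arises), and your existential formulation ``some $G_i(g)[W]$ is forbidden iff some $G_{(i,i')}(\rev(g))[W]$ is'' is the right statement to aim for.

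However, your strategy for the direction ``$\rev(g)$ bad $\Rightarrow g$ bad'' contains a step that fails as stated. You claim that (U2) ``is precisely what restricts which such arcs may appear,'' so that a forbidden subgraph (possibly of mixed type) persists in the graph $G_i(g)[W]$ you are tracking. Counterexample: let $G_{(j',k')}(\rev(g))$ contain a $D_3$ on $\{x,y,z\}$, i.e.\ $\vp_g(xy)=\vp_g(yz)=j'$ and $\vp_g(yx)=\vp_g(zy)=k'$ with $j'\neq k'$, and complete the pair $\{x,z\}$ by $\vp_g(xz)=j'$ and $\vp_g(zx)=m\notin\{j',k'\}$. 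Then $D_{xy}=D_{yz}=\{j',k'\}$ and $D_{xz}=\{j',m\}$, so (U2) holds on $\{x,y,z\}$ --- (U2) does not exclude these extra arcs --- and $G_{j'}(g)[\{x,y,z\}]$ is a transitive triangle, i.e.\ no forbidden subgraph persists in the graph you are tracking, nor is it of mixed type. The obstruction has migrated to the partner coordinate: $G_{k'}(g)[\{x,y,z\}]$ is a $D_3$. This is exactly how the paper argues this direction: for $D_3$, $C_3$ and $P_4$ it never invokes (U2) at all, but plays the two coordinate graphs $G_{j'}(g)$ and $G_{k'}(g)$ against each other, and (U2) enters only in the $N$ case. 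So your plan must be amended to track the obstruction in all coordinate label graphs of $g$, not only in $G_i(g)$, and the role you assign to (U2) in this direction is wrong. Beyond this, the ``bookkeeping'' you defer is not a routine repetition of your $C_3$ computation: it constitutes the entire two-page content of the paper's proof (including a full enumeration for $P_4$ and a genuinely delicate (U2) argument for $N$), and different cases need different tools; with only one case executed, the proposal remains an outline rather than a proof.
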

\begin{proof}
  \emph{$\Rightarrow$:}
  Assume that all $G_i(g)$ are di-cographs in $g$ and there is 
  $j\in \Upsilon_{\rev(g)}$ s.t.\ $G_j(\rev(g))$ is not a cograph. 
  Then $G_j(\rev(g))$ contains a forbidden subgraph. 
  Since $g$ is reversible, only the subgraphs $D_3, C_3, N$ and $P_4$ are 
  possible. 
  Moreover, we have
  $G_j(\rev(g)) \subseteq G_k(g)$ for some $k\in \Upsilon$, 
  by construction of $\rev(g)$ and since 
  $\vp_{\rev(g)}(e)=\vp_{\rev(g)}(f)$ implies that
  $\vp(e)=\vp(f)$.
  
  If $G_j(\rev(g))$ contains $D_3$ induced by the vertices $x,y,z$,
  we can wlog.\ assume that the vertices are labeled so that
  $\vp_{\rev(g)}(xy)=\vp_{\rev(g)}(yz)=j\neq\vp_{\rev(g)}(xz)$, 
  $\vp_{\rev(g)}(zy)=\vp_{\rev(g)}(yx)=k\neq\vp_{\rev(g)}(zx)$ and 
  $j\neq k$.
  By construction of $\rev(g)$ we obtain 
  $\vp(xy)=\vp(yz)=j'$, $\vp(zy)=\vp(yx)=k'$ for some distinct 
  $j',k'\in \Upsilon$.
  However, since $g$ does not contain forbidden subgraphs in $G_{j'}(g)$, 
  there must be an arc connecting $x$ and $z$ with color $j'$.
  The possibilities $\vp(zx)=j'\neq \vp(xz)$ and 
  $\vp(zx)=\vp(xz)=j'$ cannot occur, since then $G_{j'}(g)$
  would contain a $C_3$ or $\overline D_3$ as forbidden subgraph. 
  Hence, it must hold that $\vp(xz)=j'$.  Analogously, one shows that
  $\vp(zx)=k'$. By construction of $\rev(g)$, we obtain 
  $\vp_{\rev(g)}(xz)=j$, and  
  $\vp_{\rev(g)}(zx)=k$, a contradiction. 
  
  If $G_j(\rev(g))$ contains a $C_3$ induced by the vertices $x,y,z$,
  we can wlog.\ assume that the vertices are labeled so that
  $\vp_{\rev(g)}(xy)=\vp_{\rev(g)}(yz)=\vp_{\rev(g)}(zx)\neq 
  \vp_{\rev(g)}(yx)=\vp_{\rev(g)}(xz)=\vp_{\rev(g)}(zy)$. 
  Thus, $\vp(xy)=\vp(yz)=\vp(zx)=j'$ and  
  $\vp(yx)=\vp(xz)=\vp(zy)=k'$. We have $j'\neq k'$ as otherwise 
  $\vp_{\rev(g)}(xy)=\vp_{\rev(g)}(yx)$. Therefore, 
  $G_{j'}(g)$ contains the forbidden subgraph $C_3$, a contradiction. 

  If $G_j(\rev(g))$ contains a $P_4$ induced by the vertices $a,b,c,d$,
  we can wlog.\ assume that the vertices are labeled so that
  $\vp_{\rev(g)}(e)=\vp_{\rev(g)}(f)=j$ for all 
  $e,f\in E'=\{(a,b),(b,a),(b,c),(c,b),(c,d),(d,c)\}$.
  For all these arcs $e,f\in E'$ it additionally holds that 
  $\vp(e)=\vp(f)=j'$. Moreover, for all other arcs 
  $e\in \{a,b,c,d\}^{\times}_{\mathrm{irr}}\setminus E'$
  it is not possible that $\vp(e)=\vp(e^{-1})=j'$, 
  as otherwise, $\vp_{\rev(g)}(e)=\vp_{\rev(g)}(e^{-1})=j$
  and the $P_4$ would not be an induced subgraph of 
  $G_j(\rev(g))$. By the latter argument and 
  since $G_{j'}(g)$ does not
  contain an induced $P_4$ there must be at least one  
  arc $e\in \{a,b,c,d\}^{\times}_{\mathrm{irr}}\setminus E'$
  with $\vp(e)=j'$, but $\vp(e^{-1})\neq j'$. 
  Now full enumeration of all possibilities (which we leave to the reader) 
  to set one, two, or three of these arcs to the label $j'$
  yields one of the forbidden subgraphs
  $\overline{D_3}, A, B$ or $\overline{N}$ in $G_{j'}(g)$, a contradiction. 
  
  If $G_j(\rev(g))$ contains an $N$ induced by the vertices $a,b,c,d$,
  we can wlog.\ assume that the vertices are labeled so that
  $\vp_{\rev(g)}(ba)=\vp_{\rev(g)}(bc)=\vp_{\rev(g)}(dc)=
  j\neq\vp_{\rev(g)}(ab)=\vp_{\rev(g)}(cb)=\vp_{\rev(g)}(cd)=k$.
  Thus, $\vp(ba)=\vp(bc)=\vp(dc)=j'\neq\vp(ab)=\vp(cb)=\vp(cd)=k'$. 
  Since $G_{j'}(g)$ is cograph, there must be 
  an arc $e\in E'=\{(a,c), (c,a), (a,d), (d,a), (b,d), (d,b)\}$	
  with $\vp(e)=j'$. Moreover, for this arc $e$ it must hold that 
  $\vp(e^{-1})\neq k'$ as otherwise, 
  $\vp_{\rev(g)}(e)=j$. 	
  The graph $G_k(\rev(g))$ also contains an $N$ induced by the 
  vertices $a,b,c,d$.
  Hence, by analogous arguments there is an  $f \in E'$, $e\neq f$
  with $\vp(f)=k'$ with  $\vp(f^{-1})\neq j'$.
  Assume first that $e$ is $(a,c)$ or $(c,a)$ and thus, 
  $D_{ac}=\{j',j''\}$ where $j''=j'$ is allowed. 	 
  If 	$f$ is $(a,d)$ or $(d,a)$, then 
  $D_{ad}=\{k',k''\}$ where $k''=k'$ is allowed. 	 
  But then $D_{acd} = \{\{k',j'\},\{j',j''\},\{k',k''\}\}$
  with $j'\neq k'$ and thus $|	D_{acd} | =3$ violating
  Condition (U2) in $g$, a contradiction. 
  If 	$f$ is $(b,d)$ or $(d,b)$, then $D_{bd}=\{k',k''\}$ 
  where $k''=k'$ is allowed. 
  Thus, $\{j',j''\}, \{k',j'\}\in D_{acd}$ and 
  $\{k',k''\}, \{k',j'\}\in D_{abd}$. 
  The only way to satisfy $|D_{acd}|=2$			
  and $|D_{abd}|=2$ is achieved by $D_{ad}=\{k',j'\}$. 
  However,  the case $\vp(e)=j'$ and 
  $\vp(e^{-1})=k'$ with $e\in E'$ is not allowed. 
  All other cases, starting with
  $e\in E'\setminus\{(a,c), (c,a)\}$ can be treated
  analogously. 

  \smallskip	
  \emph{$\Leftarrow$:}
  Assume that there is an $i \in\Upsilon$ s.t.\ $G_i(g)$
  is not a di-cograph and for all $j\in \Upsilon_{\rev(g)}$
  the di-graph $G_j(\rev(g))$ is a di-cograph. 
  Hence, $G_i(g)$ contains a forbidden subgraph.
  
  If $G_i(g)$ contains a forbidden subgraph 
  $D_3, A, B$
  then there are arcs $(a,b), (b,c)$ 
  contained
  in these forbidden subgraphs with
  $\vp(ab)=\vp(bc)=i$ but $\vp(ac)\neq i$ and $\vp(ca)\neq i$. 
  Moreover, since  $G_j(\rev(g))$ does not contain these forbidden 
  subgraphs for any 
  $j\in \Upsilon_{\rev(g)}$, we also obtain that $\vp(ab)=\vp(bc)=i$ but
  $\vp(ba)\neq \vp(cb)$. 
  But this implies that
  $|D_{abc}|=3$ in $g$, a contradiction to (U2).
  
  If $G_i(g)$ contains a forbidden subgraph 
  $\overline{D_3}$ or $C_3$, then there are arcs 
  $(a,b), (b,c)$ 	contained
  in these forbidden subgraphs with
  $\vp(ab)=\vp(bc)=i$ and $\vp(ba)\neq i, \vp(cb)\neq i$. 
  If $\vp(ba) = \vp(cb)$ and the case $\overline{D_3}$
  is contained  $G_i(g)$, then  $G_j(\rev(g))$ contains
  the $D_3$ as forbidden subgraph. 
  If $\vp(ba) = \vp(cb)$ and the case $C_3$
  is contained  $G_i(g)$, then $G_j(\rev(g))$ contains
  the $D_3$ or $C_3$ as forbidden subgraph. 
  Hence, $\vp(ba) \neq \vp(cb)$.
  For the case $\overline{D_3}$, we observe that 
  $|D_{abc}|=3$ in $g$, a contradiction to (U2).
  For the case $C_3$, we can conclude by analogous arguments, 
  $\vp(ba) \neq \vp(ca)$ and $\vp(cb) \neq \vp(ca)$
  and again, $|D_{abc}|=3$ in $g$, a contradiction.
  
  Similarly, if $N$ is contained in  $G_i(g)$
  then there are arcs $(b,a), (b,c), (d,c)$ contained in $N$ with
  $\vp(ba)=\vp(bc)=\vp(dc)=i$ and $\vp(e) \neq i$ for all 
  $e \in \{(a,c), (c,a), (b,d), (d,b)\}$.
  Since $G_j(\rev(g))$ does not contain $N$ it holds that
  $\vp(ab) \neq \vp(cb)$ or $\vp(cb) \neq \vp(cd)$.
  If $\vp(ab) \neq \vp(cb)$ then $|D_{abc}|=3$,
  as $\vp(ac) \neq i$ and $\vp(ca) \neq i$,
  a contradiction to (U2).
  On the other hand, if $\vp(cb) \neq \vp(cd)$ then $|D_{bcd}|=3$,
  as $\vp(bd) \neq i$ and $\vp(db) \neq i$,
  again a contradiction to (U2).
  
  The $P_4$ on four vertices $a,b,c,d$ cannot be contained in any 
  $G_i(g)$, since for any two arcs
  $e,f\in E'=\{(a,b),(b,a),(b,c),(c,b),(c,d),(d,c)\}$ of this $P_4$
  it still holds $\vp_{\rev(g)}(e)=\vp_{\rev(g)}(f) = i'$
  and for any arc $e$ not in $E'$, $\vp_{\rev(g)}(e)\neq i'$. 
  Hence, if $G_i(g)$ contains a $P_4$, then  $G_{i'}(\rev(g))$
  contains a $P_4$ as forbidden subgraph, a contradiction. 
  
  If $G_i(g)$ contains the forbidden subgraph $\overline{N}$
  on four vertices $a,b,c,d$, then for the three arcs $e_1,e_2,e_3$
  with $\vp(e_j)=\vp(e_j^{-1})=i$, 
  it still holds, that 
  $\vp_{\rev(g)}(e_j)=\vp_{\rev(g)}(e_j^{-1})=i'$, $1\leq j\leq 3$. 
  However, for the other arcs $f_1,f_2,f_3$ with 
  $\vp(f_j)=i\neq \vp(f_j^{-1})$, we can infer that 
  $\vp_{\rev(g)}(f_j)\neq i'$ and $\vp_{\rev(g)}(f_j^{-1})\neq i'$.
  Thus, $G_{i'}(\rev(g))$ contains a $P_4$ on the three edges $e_1$, 
  $e_2$, $e_3$ as forbidden subgraph, a contradiction. 
  \qed
\end{proof}

It is now easy to establish our first main result:
\begin{theorem}
  The following two statements are equivalent for all 2-structures
  $g=(V,\Upsilon, \vp)$:
  \begin{enumerate}
  \item[(1)] $g$ is \unp.
  \item[(2)] $\delta_g$ is a symbolic ultrametric. 
  \end{enumerate}
\label{thm:charact}
\end{theorem}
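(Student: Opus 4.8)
The plan is to reduce the general statement to the already-established reversible case (Proposition \ref{prop:charact}) by passing to the reversible refinement $\rev(g)$. Since $\rev(g)$ is reversible (Theorem \ref{thm:points}, item 1), Proposition \ref{prop:charact} applies to it and yields: $\rev(g)$ is \unp\ if and only if $\delta_{\rev(g)}$ is a symbolic ultrametric. It therefore suffices to prove two transfer statements, namely that $g$ is \unp\ if and only if $\rev(g)$ is \unp, and that $\delta_g$ is a symbolic ultrametric if and only if $\delta_{\rev(g)}$ is one. Chaining these three equivalences then gives the theorem.

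For the second transfer statement I would argue directly from the two technical lemmas, using that a symbolic ultrametric is exactly a map satisfying (U1) and (U2). Lemma \ref{lem:tri-rev} gives that (U2) holds for $g$ if and only if it holds for $\rev(g)$. Assuming (U2), Lemma \ref{lem:cogr-rev} gives that (U1) holds for $g$ if and only if it holds for $\rev(g)$. Hence, if $\delta_g$ satisfies (U1) and (U2), then (U2) transfers to $\rev(g)$ and, (U2) now being available, (U1) transfers as well, so $\delta_{\rev(g)}$ is a symbolic ultrametric; the converse direction is identical.

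For the first transfer statement I would use Theorem \ref{thm:char-old}, which reduces being \unp\ to the absence of prime substructures on $3$ or $4$ vertices, together with the invariance of modules under reversible refinement. The key observation is that taking substructures commutes with $\rev$: for any $X\subseteq V_g$ one has $\rev(g[X])\simeq \rev(g)[X]$, because in both cases two arcs of $X$ receive the same color exactly when the corresponding pairs $(\vp_g(e),\vp_g(e^{-1}))$ agree. Combined with $\M(g[X])=\M(\rev(g[X]))$ (Theorem \ref{thm:points}, item 3) and the fact that primeness is an isomorphism-invariant property depending only on the module set, this gives, for every fixed $X$, that $g[X]$ is prime if and only if $\rev(g)[X]$ is prime. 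Applying this to all $X$ with $|X|\in\{3,4\}$ and invoking Theorem \ref{thm:char-old} shows that $g$ is \unp\ if and only if $\rev(g)$ is \unp. (Alternatively, the same correspondence can be phrased through the substructure equivalence of Theorem \ref{thm:points}, item 4.)

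I expect the main obstacle to lie in the bookkeeping of this first transfer step, specifically in making the identification $\rev(g[X])\simeq\rev(g)[X]$ precise and in verifying that primeness, which is defined through the module set $\M(\cdot)$, is genuinely preserved both by this isomorphism and by the passage to $\rev$. Once the localization of $\rev$ to induced substructures is pinned down, the remainder is a mechanical combination of Proposition \ref{prop:charact}, Lemma \ref{lem:tri-rev}, Lemma \ref{lem:cogr-rev}, and Theorem \ref{thm:points}.
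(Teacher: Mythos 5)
Your proposal is correct and takes essentially the same route as the paper's proof: both reduce to the reversible case by chaining Proposition~\ref{prop:charact} applied to $\rev(g)$ with two transfer statements, using Lemmas~\ref{lem:tri-rev} and \ref{lem:cogr-rev} (in the same order, with (U2) established first so that Lemma~\ref{lem:cogr-rev} applies) for the symbolic-ultrametric transfer, and the invariance of modules under $\rev$ (Theorem~\ref{thm:points}) for the \unp\ transfer. If anything, your treatment of the \unp\ transfer is more careful than the paper's one-line justification via $\M(g)=\M(\rev(g))$, since you make explicit the localization $\rev(g[X])\simeq\rev(g)[X]$ to substructures of size $3$ and $4$ that Theorem~\ref{thm:char-old} actually requires.
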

\begin{proof}	 
  Lemma \ref{lem:tri-rev} and Lemma \ref{lem:cogr-rev} together imply that
  $\delta_{g}$ is a symbolic ultrametric if and only $\delta_{\rev(g)}$ is
  a symbolic ultrametric.  Proposition~\ref{prop:charact} implies that
  $\delta_{\rev(g)}$ is a symbolic ultrametric if and only if $\rev(g)$ is
  \unp. Now recall that \unp\ 2-structures are defined in terms of their
  modules and that $\M(g)=\M(\rev(g))$ (cf.\ Theorem
  \ref{thm:points}(\ref{item:same-modules})).  Therefore, $\rev(g)$ is a
  \unp\ 2-structure if and only if $g$ is a \unp\ 2-structure. The theorem
  follows immediately.  
  \qed
\end{proof}

\subsection{2-Structures and 1-Clusters}

Assume that $g=(V,\Upsilon,\vp)$ is a 2-structure with the property that
$G_{i}(g)$ is a di-cograph for all $i\in \Upsilon$. Each di-cograph
$G_{i}(g)$ is represented by a unique ordered tree $T_i$, called cotree
\cite{Corneil:81,CP-06}. Recall, in our notation the label of an inner
vertex in the cotree is always one of $0,1,\overrightarrow{1}$.  We say
that a leaf set $L(v)$ is a \emph{1-cluster of $T_i$} if $v$ has a label
distinct from $0$. The set $\mc{C}^1_i$ of 1-clusters of $T_i$ therefore is
a subset of the clusters that form the hierarchy equivalent to $T_i$.
Consider the set
\[\mc{C}^1(g)\coloneqq \cup_{i\in \Upsilon} \mc{C}^1_i \cup 
  \left\{\{v\} \mid v\in V_g\right\}
\] 
comprising the 1-clusters for each $T_i$ and the singletons.

\begin{remark}
  \label{rem:cluster-ser-ord}
  Any two disjoint (graph-)modules $M,M'$ of a di-graph $G$ are either
  adjacent or non-adjacent, i.e., for each vertex of $x\in M$ and each
  vertex of $y\in M'$ there is an arc $(x,y)$ or $(y,x)$ in $G$ or there is
  no edge between any vertex of $M$ and any vertex of $M'$
  \cite{Moehring:85,EHPR:96}.  Now, let $G=(V,E)$ be a di-cograph, $M \in
  \Ms(G)$ a strong module of $G$ and $M_1, \ldots, M_l$ the children of $M$
  in the respective cotree, i.e., the inclusion-maximal elements of
  $\Ms(G[M])$. By construction, module $M$ is a 1-cluster of $G$ if and
  only if $M$ is a series or order module.  Moreover, $M$ is a
  1-cluster if and only if for two vertices $x \in M_i$ and $y \in M_j$,
  with $i \neq j$ there exists at least one of the arcs $(x,y) \in E$ or
  $(y,x) \in E$.
\end{remark}

\begin{figure}[t]
  \begin{center}
    \includegraphics[width=\textwidth]{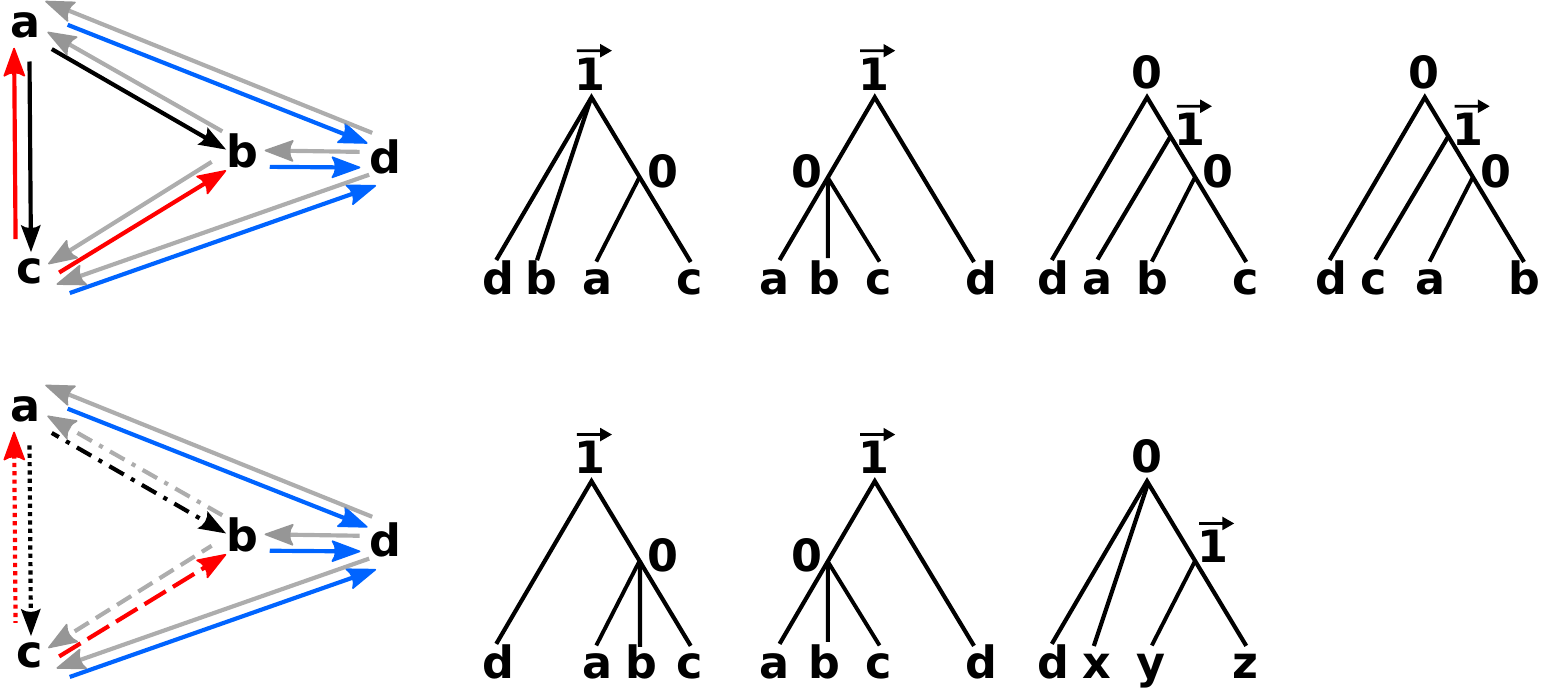}
  \end{center}
  \caption{Example of a non-reversible 2-structure $g$ (upper left part)
    and its reversible refinement $\rev(g)$ (lower left part). Labels are
    indicated by colored arcs.  The triangle-condition (U2) is violated in
    both $g$ and $\rev(g)$. Hence, neither 2-structure has a tree
    representation despite the fact that all $G_i(g)$ and $G_i(\rev(g))$
    are cographs. There are four different cographs, and hence cotrees, for
    $g$ and eight for $\rev(g)$. From these we obtain
    $\mc{C}^1(g)=\{\{a,b,c,d\},\{a,b,c\}, \{a\}, \{b\},
    \{c\},\{d\}\}$. Note that $\mc{C}^1(g)$ is a hierarchy even though $g$
    is not \captionunp.  Moreover, the clusters $\{a,b,c,d\}$, $\{a,b,c\}$ are
    both contained in two different cotrees. Thus the converse of
    Lemma~\ref{lem:unp-1clust_new} does not hold in general for
    non-reversible 2-structures.  The leaves $x,y,z$ of the right-most
    cotree of $\rev(g)$ can be chosen arbitrarily, as long as
    $\{x,y,z\}=\{a,b,c\}$.  Thus, $C^1(\rev(g))$ contains the cluster
    $\{a,b\}$ and $\{b,c\}$ and therefore does not form a hierarchy, cf.\
    Lemma \ref{lem:tri-clust} and Theorem \ref{thm:charactALL}.  
  }
\label{fig:t2}
\end{figure}

\begin{lemma} 
  Let $g=(V,\Upsilon,\vp)$ be a \emph{reversible} \unp\ 2-structure. Then
  $\mc{C}^1(g)$ is a hierarchy and, in particular, $\Ms(g)=\mc{C}^1(g)$.

  Moreover, for each cluster $M\in\mc{C}^1(g)$ there are at most two
  distinct di-cographs $G_i(g),G_j(g)$ s.t.\ $M\in \mc{C}^1_i$ and $M\in
  \mc{C}^1_j$. In other words, each cluster $M\in\mc{C}^1(g)$ appears as a
  1-cluster in at most two different cotrees.
  \label{lem:unp-1clust_new}
\end{lemma}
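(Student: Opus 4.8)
The plan is to reduce the entire statement to the single set-equality $\Ms(g)=\mc{C}^1(g)$. Once this is shown, the hierarchy property is free, since strong modules never overlap and so $\Ms(g)$ is always a hierarchy (cf.\ Theorem~\ref{A:thm:hierarchy}). For the setup I would invoke Theorem~\ref{thm:charact}: as $g$ is \unp, $\delta_g$ is a symbolic ultrametric, so every $G_i(g)$ is a di-cograph (its cotree $T_i$ exists) and (U2) holds for all triples. I would record two facts about $(T_g,t_g)$ that drive everything: (a) since $g$ is \unp, every inner vertex of $T_g$ is complete or linear (no prime vertices); and (b) reversibility implies that the reverse arcs of an $i$-colored arc all share one color $i'$, so the two labels of any linear vertex of $T_g$ form a reverse-pair $\{i,i'\}$. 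Call an inner vertex \emph{$i$-relevant} if it is complete with label $i$ or linear with $i$ among its (reverse-paired) labels. The goal then splits, for each color $i$, into proving that $\mc{C}^1_i$ equals the set of leaf-sets of the $i$-relevant vertices of $T_g$.

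For the inclusion $\mc{C}^1_i\subseteq\Ms(g)$ I would take a $1$-cluster $M\in\mc{C}^1_i$, i.e.\ a series or order strong module of $G_i(g)$ with children $M_1,\dots,M_l$ ($l\ge 2$) in $T_i$, and first show $M\in\M(g)$. Fix $z\notin M$ and $x\in M_r,y\in M_s$ with $r\neq s$; cross-child arcs carry $i$, so $i\in D_{xy}$, and the $G_i$-module property gives $\vp(xz)=i\Leftrightarrow\vp(yz)=i$ (and likewise on reverse arcs). If $i\in D_{xz}$, reversibility already forces $(\vp(xz),\vp(zx))=(\vp(yz),\vp(zy))$. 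If $i\notin D_{xz}$ (equivalently $i\notin D_{yz}$, by the $G_i$-module property), then (U2) on $\{x,y,z\}$ forces $D_{xz}=D_{yz}$, because $D_{xy}$ contains $i$ and hence differs from both. The only escape — that this common label-pair $\{p,q\}$ (a reverse-pair with $p,q\neq i$) is realized with opposite orientations — is excluded by inspecting $G_p(g)$: the arcs $(x,z),(z,y)$ then form a $D_3$, or a $C_3$ if $\vp(yx)=p$, contradicting that $G_p(g)$ is a di-cograph. So $x,y$ relate identically to $z$; for $x,y$ in the same $T_i$-child I bridge through a vertex of another child (using $l\ge 2$). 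Thus $M\in\M(g)$, and since every $g$-module is a $G_i$-module (Lemma~\ref{lem:module-graphmodule}) while $M$ is strong in $G_i(g)$, no $g$-module overlaps $M$; hence $M\in\Ms(g)$. Reading off the type, a series $M$ forces a two-way join across the top-level $T_g$-parts (only a complete-$i$ label), an order $M$ forces a one-way join (only a linear-$\{i,i'\}$ label), so $\mc{C}^1_i\subseteq\{i\text{-relevant vertices}\}$.

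For the converse I would reconstruct the cotree of $G_i(g)$ from $T_g$. Relabel each inner vertex: complete-$i$ becomes \emph{series}, linear-$\{i,i'\}$ becomes \emph{order} (reorienting its child order so the $i$-arcs run left-to-right), and every other vertex becomes \emph{parallel} (its cross-child colors avoid $i$). A comparison of $\lca$-labels shows this ordered, $\{0,1,\ora{1}\}$-labeled tree $\tilde T$ defines precisely $G_i(g)$, and the canonical cotree $T_i$ arises from $\tilde T$ by contracting chains of adjacent same-type degenerate vertices. The crux, which I expect to be the main obstacle, is that no non-parallel vertex is ever contracted: two adjacent series vertices would be a complete-$i$ parent with a complete-$i$ child, and two adjacent order vertices would be two linear vertices sharing the color $i$, hence (by reversibility) carrying the identical reverse-pair $\{i,i'\}$; both parent--child configurations are impossible because $T_g$ is the reduced modular decomposition (a strong module cannot share its quotient type and labels with its parent without violating maximality). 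Series and order vertices neither merge with each other nor with parallel vertices, so every $i$-relevant vertex survives in $T_i$ as a series or order node, giving $\{i\text{-relevant vertices}\}\subseteq\mc{C}^1_i$.

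Combining the two inclusions yields $\mc{C}^1_i=\{L(v)\mid v\ i\text{-relevant}\}$ for every $i$. Taking the union over $i$, adjoining the singletons, and using that every inner vertex of $T_g$ is $i$-relevant for some $i$ (being complete or linear), gives $\mc{C}^1(g)=\Ms(g)$, and the hierarchy claim follows immediately. For the final assertion I would count: a complete vertex with label $i$ is $i$-relevant only for that single $i$ (it is parallel in every $G_j(g)$, $j\neq i$), so its leaf-set is a $1$-cluster in exactly one cotree; a linear vertex with labels $\{i,i'\}$ is relevant precisely to $i$ and $i'$, so its leaf-set is a $1$-cluster in exactly the two cotrees $T_i$ and $T_{i'}$; and singletons are leaves of every cotree, hence $1$-clusters of none. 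In all cases a cluster of $\mc{C}^1(g)$ occurs as a $1$-cluster in at most two distinct $G_i(g)$.
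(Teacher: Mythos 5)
Your proposal is correct, but it takes a genuinely different route from the paper's proof. The paper argues cluster-by-cluster with bare-hands forbidden-subgraph arguments: for $\Ms(g)\subseteq\mc{C}^1(g)$ it shows each strong module $L(v)$ of $g$ is a \emph{strong} module of $G_i(g)$ by deriving $D_3$'s from a hypothetical overlapping $G_i$-module; for $\mc{C}^1(g)\subseteq\Ms(g)$ it takes a walk through the connected subgraph of $G_i(g)$ induced by $L(v)$ and propagates labels along that walk via (U2); and the ``at most two cotrees'' statement then gets its own separate contradiction argument (a third cotree $T_k$ would force all $T_g$-children of $M$ into a single $T_k$-child of $M$). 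You instead prove the sharper per-color identity $\mc{C}^1_i=\{L(v)\mid v \text{ is } i\text{-relevant in } T_g\}$: one inclusion by a triple-based argument ((U2) together with reversibility and a $D_3$/$C_3$ obstruction, bridging same-child pairs through a second child), which is cleaner than the paper's walk; the other by relabelling $T_g$ into a defining tree of $G_i(g)$ and using that the canonical cotree arises by contracting same-type adjacent nodes, none of which are series or order because $T_g$ admits no complete-$i$ child of a complete-$i$ vertex and no linear-$\{i,i'\}$ child of a linear-$\{i,i'\}$ vertex in either orientation. This buys a more conceptual proof in which the hierarchy property, the equality $\mc{C}^1(g)=\Ms(g)$, and the final claim all drop out of one statement, with the refinement that complete vertices appear in exactly one cotree and linear vertices in exactly two (the paper only proves the upper bound). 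The price is reliance on two standard but here-unproved facts --- cotree-uniqueness-via-contraction, and the no-same-type-adjacency property of the modular decomposition tree, where your parenthetical justification ``share its quotient type and labels'' does not literally cover the orientation-reversed linear case, although your unordered reverse-pair formulation is the correct one and that case is indeed excluded by the same overlap/maximality argument --- together with the glossed ``reading off the type'' step, which still needs the short observation that a series (resp.\ order) module of $G_i(g)$ forces the label $(i,i)$ (resp.\ a linear label containing $i$) at the corresponding vertex of $T_g$, since otherwise $G_i(g)[M]$ would be disconnected or would carry arcs contradicting the series/order quotient.
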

\begin{proof}
  We show that $\Ms(g)=\mc{C}^1(g)$. It then follows that $\mc{C}^1(g)$ is a
  hierarchy.

  Since g is \unp\ there is a tree representation $(T_g,t_g)$ of $g$. Let
  $v$ be an inner vertex in $T$ labeled with $(i,j)$.  By construction each
  vertex $v$ of this tree $T_g$ represents a strong module $L(v)$ of $g$.
  Hence, $L(v)\in \Ms(g)$ and thus, we can apply Lemma
  \ref{lem:module-graphmodule} and conclude that $L(v)$ is a module of
  $G_i(g)$.
 
  In fact, $L(v)$ is a strong module of $G_i(g)$. To see this, we first
  observe that all $a,b\in L(v)$ are contained in the same connected
  subgraph of both $G_i(g)$ and $G_j(g)$: Let $v_1,\dots, v_k$ be the
  children of $v$, ordered from left to right. If $i=j$, an ordering is not
  necessary. For leaves $x\in L(v_r)$ and $y\in L(v_s)$ we have now
  $\vp(xy)=i$ and $\vp(yx)=j$ if $r<s$. Hence, $x\in L(v_r)$ and $y\in
  L(v_s)$ and $r<s$ implies that $(x,y)\in E(G_i(g))$ and thus, all
  vertices $L(v)$ are contained in one connected subgraph of
  $G_i(g)$. Analogously, all vertices $L(v)$ are contained in one connected
  subgraph of $G_j(g)$.  Since $\vp(xy)=i$, $\vp(yx)=j$ if $r<s$ 
  and $g$ is reversible, we observe that $\vp(ab)=i$ if and only if
  $\vp(ba)=j$ for all $a,b\in V$.

  Now assume, for contradiction, that the module $L(v)$ is not strong in
  $G_i(g)$. Hence there is a further module $M$ in $G_i(g)$ s.t.\ $M
  \overlap L(v)$. Since $L(v)$ is a strong module in $g$, $M$ cannot be a
  module in $g$.  Since $M\overlap L(v)$, we have $M\cap L(v) \neq
  \emptyset$, $M \not\subseteq L(v)$ and $L(v) \not\subseteq M$. By the
  latter, and since all $a,b\in L(v)$ are contained in one connected
  subgraph of $G_i(g)$ there must be an arc $(u,x)$ or $(x,u)$ in $G_i(g)$,
  for some $x\in L(v)\setminus M$ and $u\in M \cap L(v)$. Wlog.\ assume
  that $(u,x)$ is an arc in $G_i(g)$, since the following arguments can be
  applied analogously for the case that $(x,u)$ is an arc in
  $G_i(g)$. Thus, we have $\vp(ux)=i$ and since $g$ is reversible,
  $\vp(xu)=j$.

  Since $M$ and $L(v)$ are modules in $G_i(g)$ and there is an arc $(u,x)$
  in $G_i(g)$, where particularly $x\in L(v)$ and $u\in M$, we can conclude
  that for all vertices $x'\in L(v)$, there is an arc $(y,x')$ in $G_i(g)$.
  This implies additionally that all $y\in M$ form an arc $(y,u)$, since
  $u$ is also contained in $L(v)$.  However, since $M$ is not a module in
  $g$ and $g$ is reversible, there must be a vertex $y\in M$ s.t.\ $\vp(yz)
  \neq \vp(uz)$ and $\vp(zy) \neq \vp(zu)$ for some $z\in V\setminus
  M$. Since $\vp(yx') = \vp(ux')$ for all $x'\in L(v)$, we can conclude
  that for the latter chosen vertex $z$ we have $z\in V\setminus (M \cup
  L(v))$.

  Since $\vp(yz) \neq \vp(uz)$, $\vp(zy) \neq \vp(zu)$, $g$ is reversible,
  and $M$ is a module in $G_i(g)$ with $u,y \in M$, we conclude that
  $\vp(yz)$, $\vp(zy)$, $\vp(uz)$, and $\vp(zu)$ must all be different from
  $i$. To see this, assume for contradiction that for some $e \in
  \{(y,z),(u,z)\}$ it holds that $\vp(e)=i$.  Since $M$ is a module in
  $G_i(g)$ it follows that $\vp(f)=i$ for $f \in \{(y,z),(u,z)\} \setminus
  \{e\}$, a contradiction to $\vp(yz) \neq \vp(uz)$. The same argument
  applies for $e,f \in \{(z,y),(z,u)\}$.
  
  Thus, assume that $\vp(yz) = l
  \neq \vp(uz) = k$ for some $l,k\in \Upsilon$ distinct from $i$. Since all
  $\vp(yz)$, $\vp(zy)$, $\vp(uz)$, $\vp(zu)$ are distinct from $i$ while
  $\vp(yu)=i$, and since $\Delta(yuz)$ we obtain $\vp(zy)=k$, and
  $\vp(zu)=l$.  Since $g$ is reversible, neither $\vp(uy)=k$, nor
  $\vp(uy)=l$.  But then there is $D^k_3(yuz)$ and $D^l_3(yuz)$ in $G_k(g)$
  and $G_l(g)$, a contradiction to (U1).

  Thus, $L(v)$ is a strong module in $G_i(g)$. All strong modules of
  $G_i(g)$ are represented in the respective cotree $T_i$. As already
  observed, since $t_g(v)=(i,j)$ for all leaves $x\in L(v_r)$ and $y\in
  L(v_s)$ we have $\vp(xy)=i$ if $r<s$. Hence, $(x,y)$ is an arc in
  $G_i(g)$ for all $x\in L(v_r), y\in L(v_s), r<s$.  If $i=j$ then even
  $(y,x)$ is an arc in $G_i(g)$.  Hence, $L(v)$ cannot be labeled with
  ``$0$'' in the cotree, because otherwise it is not possible to have all
  arcs $(x,y)$ with $x \in L(v_1)$ and $y\in L(v_i)$, $1<i\leq k$.  In
  particular, if $i\neq j$, then $L(v)$ must labeled ``$\ora{1}$'' in
  $G_i(g)$; if $i=j$, then the strong module $L(v)$ must labeled ``1'' in
  $G_i(g)$.  Hence, $L(v)$ is contained in $\mc{C}^1_i\subseteq
  \mc{C}^1(g)$. Therefore, $\Ms(g)\subseteq \mc{C}^1(g)$.

  \smallskip
  Now, let $L(v)\in \mc{C}^1_i\subseteq \mc{C}^1(g)$ be a strong module
  with label different from ``$0$'' obtained from the cotree $T_i$.
  Clearly, $|L(v)|>1$, since the singletons $\{v\}$ are by definition not
  contained in $\mc{C}^1_i$, albeit they are by construction contained in $
  \mc{C}^1(g)$.  Assume for contradiction that $L(v)$ is not a module in
  $g$. Since $g$ is reversible, there must be two vertices $a,b\in L(v)$ and
  $c\in V\setminus L(v)$ s.t.\ $\vp(ac)=j\neq \vp(bc)=k$.  In particular,
  $j$ and $k$ must both be distinct from $i$, as otherwise $L(v)$ would not
  be a module in $G_i(g)$.  Since $g$ is reversible, $\vp(ca)\neq \vp(cb)$
  and by analogous arguments as before, neither $\vp(ca)=i$ nor
  $\vp(cb)=i$.  The latter arguments and reversibility of $g$ imply that
  $\vp(xc)\neq i$ and $\vp(cx)\neq i$ for all $x\in L(v)$, as otherwise
  $L(v)$ would not be a module in $G_i(g)$ or $L(v)$ would be a module in 
  $g$, a contradiction. Since $L(v)\in \mc{C}^1_i$ the
  vertex $v$ has either label $1$ or $\ora 1$ in $T_i$. Thus, the subgraph
  in $G_i(g)$ induced by the vertices in $L(v)$ is connected.  Let
  $(a,v_1,\dots,v_n,b)$ be a walk in $G_i(g)$ with $v_i\in L(v)$ for $1\leq
  i \leq n$, that connects the vertices $a$ and $b$.  Since none of the
  labels $\vp(ac),\vp(ca), \vp(v_1c), \vp(cv_1)$ is $i$, and since
  $\Delta(av_1c)$ we can conclude that the label $\vp(ac)=j$ must occur
  on at least one of the arcs $(v_1c)$ or $(cv_1)$.  We distinguish the
  two cases (i) $\vp(v_1c)=j \neq \vp(cv_1)$ and (ii) 
  $\vp(v_1c)=\vp(cv_1)=j$.
  
  \emph{Case (i).} 
  Since none of the labels $\vp(v_2c),\vp(c v_2)$ is $i$, and $\Delta(v_1
  v_2c)$ we obtain that $\vp(v_2c) = j$ or $\vp(cv_2) = j$.  Repeating
  the latter, we obtain $\vp(v_nc) = j$ or $\vp(cv_n)=j$.  Since
  $\Delta(v_nbc)$ and none of the labels of $(v_nc)$ and $(cv_n)$ is
  $i$, but $\vp(bc)=k$, we can conclude that $\vp(cb)=j$ and the labels $j$
  and $k$ must occur on the two arcs $(v_n,c)$ and $(c, v_n)$.  The case
  $\vp(cv_n)=k$ cannot occur, since then there is $D_3^k(bcv_n)$.
  Thus, $\vp(v_nc) = k$ and by reversibility of $g$, $\vp(cv_n) =j$.  By
  analogous arguments, $\vp(cv_{n-1}) =j$ and $\vp(v_{n-1}c) = k$, and,
  iterative, $\vp(cv_{1}) =j$, $\vp(v_{1}c) = k$ and $\vp(ca)=k$.
  Since $g$ is reversible and $\vp(v_1a)=i$ or $\vp(av_1)=i$ we can
  conclude that there are forbidden subgraphs $D_3^j(acv_1)$ and
  $D_3^k(acv_1)$, a contradiction.

  \emph{Case (ii).}
  If $\vp(cv_1)=\vp(v_1c)=j$, then by analogous
  arguments as in Case (i),
  $\vp(cv_n)=\vp(v_nc)=j$. 
  But then $|D_{bcv_n}|=3$, violating (U2)
  and thus, $g$ is not \unp.
  Hence, $L(v)$ is a module of $g$. 
  
  It remains to show that $L(v)\in \mc{C}^1$ is also strong in $g$. Assume
  for contradiction that $L(v)$ is not strong in $g$ and hence, there is a
  module $M$ in $g$ with $|M|>1$ and $L(v)\overlap M$.  Lemma
  \ref{lem:module-graphmodule} implies that $M$ is also a module in
  $G_i(g)$ and hence, $M$ overlaps $L(v)$ in $G_i(g)$, a contradiction,
  since $L(v)$ is strong in $G_i(g)$.  Therefore, $\mc{C}^1(g)\subseteq
  \Ms(g)$.

  Taken together the latter arguments we can conclude that $\mc{C}^1(g) =
  \Ms(g)$ and thus, $\mc{C}^1(g)$ is a hierarchy.

  \smallskip 
  Finally, we show that each cluster $M\in \mc{C}^1(g)$ appears at most in
  two different cotrees.  Let $M\in \mc{C}^1(g)=\Ms(g)$ and assume that
  $t_g(M)=(i,j)$.  Analogously as in the first part of this proof, we can
  conclude that $M$ is a strong module in $G_i(g)$ with label $\ora{1}$, if
  $i\neq j$, and label $1$, if $i=j$.  Thus, $M\in \mc{C}^1_i$ in all cases
  and additionally, $M\in \mc{C}^1_j$, if $i\neq j$.
	
  It remains to show that there is no further $k\in \Upsilon$, $k\neq i,j$
  with $M\in \mc{C}^1_k$. Assume that this is not the case, and thus there is a
  di-cograph $G_k(g)$ s.t.\ $M$ is labeled $1$ or $\ora{1}$, in the
  respective cotree $T_k$.  Let $M_1,\dots,M_r$ be the children of $M$ in
  $T_g$ and $N_1,\dots,N_s$ be the children of $M$ in $T_k$.  Let $x_l$ be
  a vertex contained in $M_l$ for $1\leq l\leq r$.  Since every arc between
  distinct $x_l, x_{l'}$ is labeled $i$ or $j$, and in particular, not $k$,
  and since $x_1,\dots,x_r$ are contained in $M$, we can conclude that
  $x_1,\dots,x_r\in N_m$ for some $m\in \{1,\dots,s\}$.  Otherwise, there
  would be a label $k$ on some arc between some $x_l, x_{l'}$.  Now take a
  further vertex $y\in M_l$.  By analogous arguments for the vertices
  $x_1,\dots,x_{l-1},y,x_{l+1},\dots,x_r$ and since $x_1,\dots,x_r\in N_m$,
  we obtain that $y\in N_m$.  By induction, all vertices in $\cup_{i=1}^r
  M_i$ must be contained in $N_m$. Thus, $M=\cup_{i=1}^r M_i \subseteq N_m
  \subsetneq M$, a contradiction.  
  Therefore, $M\in \mc{C}^1_i, M\in \mc{C}^1_j$ but
  $M\notin \mc{C}^1_k$ for any $k\neq i,j$, from what the statement follows.
  \qed
\end{proof}

\begin{figure}[tbp]
  \begin{center}
    \includegraphics[width=\textwidth]{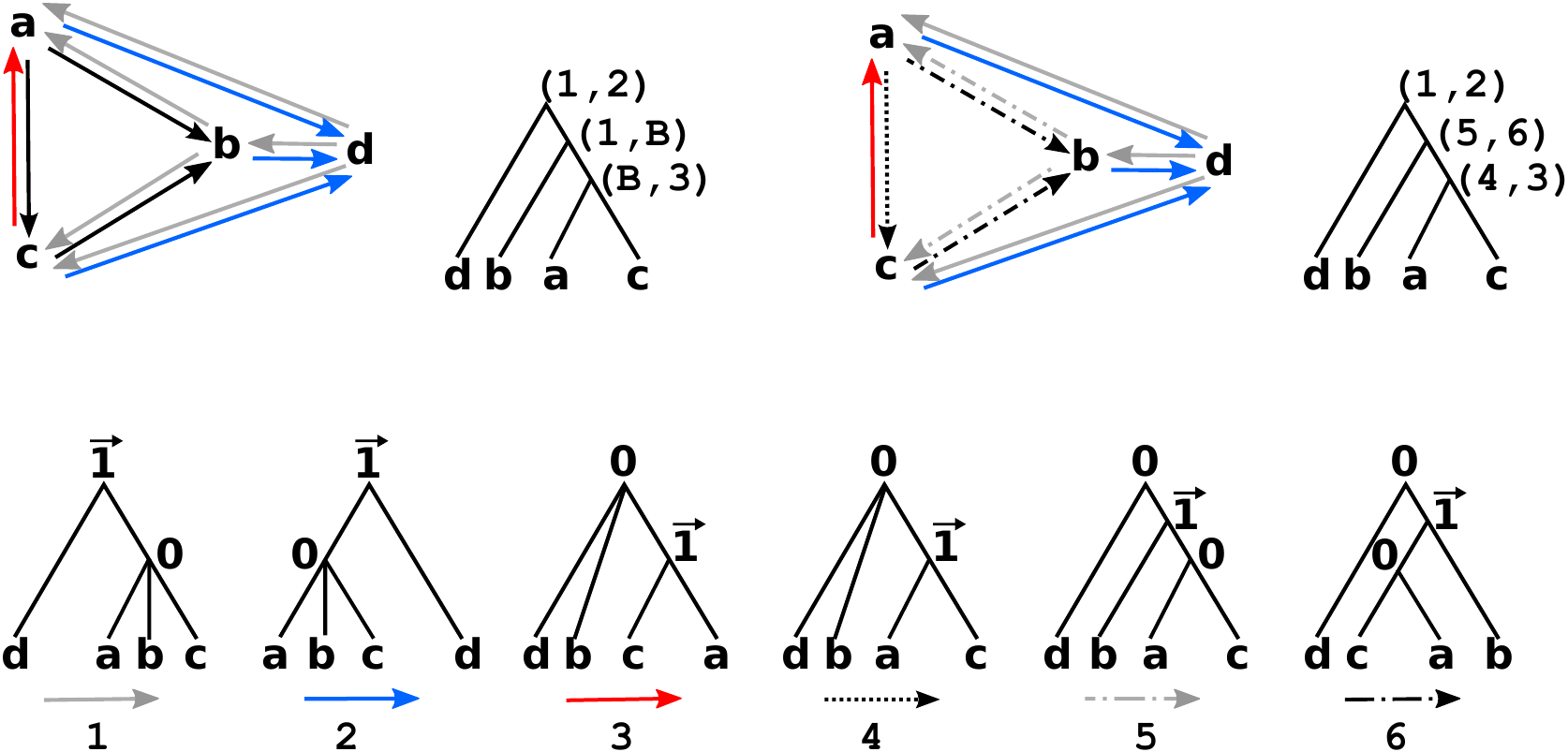}
  \end{center}
  \caption{Example of a non-reversible \captionunp\ 2-structure $g$ 
    with tree-representation $(T_g, t_g)$ (upper left)
    and its reversible refinement $\rev(g)$ with 
    $(T_{\rev(g)}, t_{\rev(g)})$ (upper right). Labels are
    indicated by colors and line-styles.  The label $B$ in
    the right-most tree corresponds to the solid black arc in $g$.  All
    $G_i(g)$ and $G_i(\rev(g))$ are di-cographs and the triangle condition
    is satisfied.  The cotrees corresponding to the $G_i(\rev(g))$ are
    shown in the second row. They generate $\mc{C}^1(\rev(g)) =
    \{\{a,b,c,d\},\{a,b,c\},\{a,c\}, \{a\}, \{b\}, \{c\}, \{d\}\}$, i.e., a
    hierarchy. By Lemma \ref{lem:tri-clust} and Theorem
    \ref{thm:charactALL}, $\mc{C}^1(\rev(g))=\Ms(\rev(g))= \Ms(g)$. The
    trees $T_g$ and $T_{\rev(g)}$ are isomorphic and differ only in
    the labels.}
  \label{fig:t1}
\end{figure}

Although it might be possible to derive a result similar to Lemma
\ref{lem:unp-1clust_new} for non-reversible 2-structures $g$ (with some
more elaborated technical arguments), the fact that $\mc{C}^1(g)$ is a
hierarchy and that each 1-cluster appears in at most 2 cotrees is not
sufficient to conclude that $g$ is \unp. Figure \ref{fig:t2} gives a
counterexample. Surprisingly, however, the triangle-condition (U2) and the
property that $\mc{C}^1(g)$ is a hierarchy, are equivalent for reversible
2-structures that fulfill (U1).

\begin{lemma} 
  Let $g=(V,\Upsilon,\vp)$ be a \emph{reversible} 2-structure s.t.\
  $G_{i}(g)$ is a di-cograph for all $i\in \Upsilon$.  Then the following
  statements are equivalent
  \begin{enumerate}		
  \item The Triangle-Condition $(U2)$ is satisfied for $g$
  \item $\mc{C}^1(g)$ is a hierarchy. In particular, $\Ms(g) =\mc{C}^1(g)$.
  \end{enumerate}
  \label{lem:tri-clust}
\end{lemma}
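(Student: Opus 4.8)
The plan is to treat the two implications asymmetrically: the direction $(1)\Rightarrow(2)$ is essentially a corollary of what has already been established, whereas $(2)\Rightarrow(1)$ requires a direct combinatorial construction and carries the real content.

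For $(1)\Rightarrow(2)$ I would argue as follows. By hypothesis every $G_i(g)$ is a di-cograph, so $g$ satisfies (U1); together with the assumed Triangle-Condition (U2) this says exactly that $\delta_g$ is a symbolic ultrametric. Since $g$ is reversible, Proposition~\ref{prop:charact} converts this into the statement that $g$ is \unp. Lemma~\ref{lem:unp-1clust_new} then applies verbatim and yields that $\mc{C}^1(g)$ is a hierarchy with $\Ms(g)=\mc{C}^1(g)$, which is precisely statement (2) including its ``in particular'' clause.

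For $(2)\Rightarrow(1)$ I would prove the contrapositive: assuming (U2) fails I exhibit two overlapping members of $\mc{C}^1(g)$. Pick $x,y,z$ with $|D_{xyz}|=3$, so the three label pairs $D_{xy},D_{yz},D_{xz}$ are pairwise distinct, and write $a=\vp(xy)$, $b=\vp(yz)$, $c=\vp(xz)$. Because $g$ is reversible, distinctness of the three $D$-sets forces $a\notin D_{yz}\cup D_{xz}$, and symmetrically for $b$ and $c$. Set $A\coloneqq L(\lca_{T_a}(x,y))$ and $B\coloneqq L(\lca_{T_b}(y,z))$. Since $(x,y)$ is an arc of $G_a(g)$, the vertex $\lca_{T_a}(x,y)$ is a series or order vertex, so $A\in\mc{C}^1_a\subseteq\mc{C}^1(g)$, and likewise $B\in\mc{C}^1(g)$. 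The key claim is that $z\notin A$ and $x\notin B$. Indeed, in $G_a(g)$ neither arc between $z$ and $x$ nor between $z$ and $y$ carries the label $a$, by the distinctness above, so $z$ is non-adjacent to both $x$ and $y$ in $G_a(g)$; since $A$ is a series/order module, any two of its children are joined by an arc (Remark~\ref{rem:cluster-ser-ord}), and hence $z$ would have to lie in the common child of both $x$ and $y$ — impossible, as $x,y$ already occupy distinct children of $A$. The same argument in $G_b(g)$ gives $x\notin B$. Consequently $y\in A\cap B$, $x\in A\setminus B$ and $z\in B\setminus A$, so $A\overlap B$ and $\mc{C}^1(g)$ is not a hierarchy.

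The main obstacle, and the step I would treat most carefully, is precisely the claim that the ``third'' vertex of the bad triangle lies outside the minimal $1$-cluster of the other two. This is where reversibility is genuinely used — to guarantee that both off-triangle arcs avoid the label $a$ — and where the series/order classification of $1$-clusters does the work; the counterexample of Figure~\ref{fig:t2} shows that without reversibility the conclusion can fail, so any correct proof must exploit it at exactly this point. Everything else, namely the reduction of the forward direction to Proposition~\ref{prop:charact} and Lemma~\ref{lem:unp-1clust_new} and the final overlap bookkeeping, is routine.
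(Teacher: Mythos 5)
Your proof is correct and follows essentially the same route as the paper's: the forward direction reduces to Proposition~\ref{prop:charact} (the paper cites its generalization, Theorem~\ref{thm:charact}) plus Lemma~\ref{lem:unp-1clust_new}, and the converse is proved by the same contrapositive construction, extracting from a triangle with $|D_{xyz}|=3$ two overlapping $1$-clusters taken as $\lca$-clusters in two different cotrees, with reversibility used exactly as in the paper to exclude the third vertex. The only cosmetic difference is that the paper's two clusters share the common endpoint of the two chosen arcs (containing $\{a,b\}$ and $\{a,c\}$) while yours share the middle vertex $y$; the argument is otherwise identical.
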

\begin{proof}
  If $g$ satisfies (U1) and (U2), then, by Theorem \ref{thm:charact}, $g$
  is \unp.  Now apply Lemma \ref{lem:unp-1clust_new}.

  Suppose $g$ does not satisfy (U2), i.e., there are three vertices
  $a,b,c\in V$ s.t.\ $\vp(ab)=i$ and $|D_{abc}|=3$. 
  Since $g$ is reversible we conclude
  that $\vp(ac)$, $\vp(ca)$, $\vp(bc)$, and $\vp(cb)$ are all distinct from
  $i$.  In the cotree $T_i$ of $G_i(g)$, $\lca(ab)$ must
  be labeled either $1$ or $\ora{1}$. Next we observe that $c$ cannot be
  descendant of $\lca(ab)$ in $T_i$, since otherwise we would have
  $\lca(ac)\in \{1,\ora{1}\}$ or $\lca(bc)\in \{1,\ora{1}\}$. This implies
  that at least one of the arcs $(ac)$, $(ca)$, $(bc)$, $(cb)$ must be present
  in $G_i(g)$, which is only possible iff $\vp$ mapped one of those arcs to
  the label $i$, a contradiction.  Therefore, there is a cluster in $T_i$
  that contains $a$ and $b$ but not $c$, and this cluster is also contained
  in $\mc{C}^1(g)$.  Now, let $\vp(ac)=j\neq i$.  Since $g$ is
  reversible, we know that $\vp(ab)$, $\vp(ba)$, $\vp(bc)$, and $\vp(cb)$
  are all distinct from $j$. Using the same argument as above, one can show
  that in the cotree $T_j$ there is a cluster containing $a$ and $c$ but
  not $b$, which is contained in $\mc{C}^1(g)$.  But then these two
  particular clusters overlap, and hence, $\mc{C}^1(g)$ is not a
  hierarchy. Since $\Ms(g)$ is a hierarchy, we can conclude that 
  $\mc{C}^1(g)\neq \Ms(g)$.  
  \qed
\end{proof}

\begin{corollary}
  Let $g=(V,\Upsilon,\vp)$ be a 2-structure s.t.\ $G_{i}(g)$ is a
  di-cograph for all $i\in \Upsilon$. The following statements are
  equivalent
  \begin{enumerate}		
  \item $g$ satisfies the Triangle-Condition (U2).
  \item $\mc{C}^1(\rev(g))$ is a hierarchy. 
  \item $\mc{C}^1(\rev(g)) = \Ms(g)$.
  \end{enumerate}
	\label{cor:g-rev1}
\end{corollary}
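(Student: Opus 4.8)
The plan is to obtain all three equivalences by applying Lemma~\ref{lem:tri-clust} to the reversible refinement $\rev(g)$ and then transporting the conclusions back to $g$. Three facts make this transport possible: $\rev(g)$ is reversible (Theorem~\ref{thm:points}); the strong modules are preserved, $\Ms(g)=\Ms(\rev(g))$, because $\M(g)=\M(\rev(g))$ (Theorem~\ref{thm:points}); and (U2) holds for $g$ if and only if it holds for $\rev(g)$ (Lemma~\ref{lem:tri-rev}). Thus, once the statements are rephrased in terms of $\rev(g)$, the corollary collapses to the biconditionals already established in Lemma~\ref{lem:tri-clust}.

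Before invoking Lemma~\ref{lem:tri-clust} for $\rev(g)$ I would first secure its standing hypothesis, namely that every $G_j(\rev(g))$, $j\in\Upsilon_{\rev(g)}$, is a di-cograph; this is also exactly what is needed for $\mc{C}^1(\rev(g))$ to be defined at all. When the argument starts from~(1), this property follows from Lemma~\ref{lem:cogr-rev}: $g$ satisfies (U2) by assumption and all $G_i(g)$ are di-cographs by hypothesis, so all $G_j(\rev(g))$ are di-cographs as well. When the argument starts from~(2) or~(3), the object $\mc{C}^1(\rev(g))$ is referenced in the statement itself, which already presupposes that each $G_j(\rev(g))$ is a di-cograph; hence the required property is available in every case in which Lemma~\ref{lem:tri-clust} is invoked.

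With $\rev(g)$ reversible and all $G_j(\rev(g))$ di-cographs, Lemma~\ref{lem:tri-clust} applies to $\rev(g)$ and yields
\[
\text{(U2) holds for }\rev(g)\ \Longleftrightarrow\ \mc{C}^1(\rev(g))\text{ is a hierarchy}\ \Longleftrightarrow\ \mc{C}^1(\rev(g))=\Ms(\rev(g)).
\]
I would then translate each link back to $g$: by Lemma~\ref{lem:tri-rev} the leftmost condition is equivalent to statement~(1); the middle condition is verbatim statement~(2); and, substituting $\Ms(\rev(g))=\Ms(g)$, the rightmost condition is verbatim statement~(3). This closes the cycle (1)$\Leftrightarrow$(2)$\Leftrightarrow$(3). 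The single implication (3)$\Rightarrow$(2) can also be read off directly, since $\Ms(g)$ is always a hierarchy, so $\mc{C}^1(\rev(g))=\Ms(g)$ immediately forces $\mc{C}^1(\rev(g))$ to be a hierarchy.

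I expect the only genuinely delicate point to be the well-definedness of $\mc{C}^1(\rev(g))$, i.e.\ guaranteeing that all $G_j(\rev(g))$ are di-cographs before Lemma~\ref{lem:tri-clust} is applied to $\rev(g)$. This is clean under~(1) via Lemma~\ref{lem:cogr-rev}, but it is worth stressing that the converse of Lemma~\ref{lem:cogr-rev} does \emph{not} hold: as Figure~\ref{fig:t2} shows, all $G_j(\rev(g))$ may be di-cographs even when (U2) fails, so di-cograph-ness of the color graphs of $\rev(g)$ cannot by itself certify~(1). Everything else is a routine transport of the biconditionals of Lemma~\ref{lem:tri-clust} through the label-pair bijection defining $\rev(g)$, combined with the identity $\Ms(g)=\Ms(\rev(g))$.
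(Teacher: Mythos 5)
Your proof is correct and follows essentially the same route as the paper's: reduce to $\rev(g)$ via Lemma~\ref{lem:tri-rev} and Lemma~\ref{lem:cogr-rev}, apply Lemma~\ref{lem:tri-clust} to the reversible 2-structure $\rev(g)$, and transport back using $\Ms(\rev(g))=\Ms(g)$ from Theorem~\ref{thm:points}. If anything, your explicit handling of the well-definedness of $\mc{C}^1(\rev(g))$ under hypotheses (2) and (3) is more careful than the paper's own write-up, which only derives (U1) for $\rev(g)$ under assumption (1).
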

\begin{proof}
  By Lemma \ref{lem:tri-rev}, $g$ satisfies (U2) if and only if $\rev(g)$
  satisfies (U2). Together with Lemma \ref{lem:cogr-rev} this implies that
  $g$ satisfies (U1) if and only if $\rev(g)$ satisfies (U1). Therefore
  $\rev(g)$ satisfies (U1) and (U2), which, by Lemma \ref{lem:tri-clust},
  is equivalent to $\rev(g)$ satisfying (U1) and $\mc{C}^1(\rev(g))$
  being a hierarchy. In this case we have in particular 
  $\mc{C}^1(\rev(g)) = \Ms(\rev(g))$.  By Theorem
  \ref{thm:points}(\ref{item:same-modules}), $\M(\rev(g))=\M(g)$ and hence,
  $\Ms(\rev(g)) = \Ms(g)$.  Therefore, the statement is true.  
  \qed
\end{proof}

Collecting the results derived above we obtain the main result of 
this contribution:
\begin{theorem}[\unp\ 2-Structure Characterization]\ \\
  The following statements are equivalent for every 2-structure
  $g=(V,\Upsilon, \vp)$:
  \begin{enumerate}
  \item $g$ is \unp. \smallskip
  \item $\delta_g$ is a symbolic ultrametric. \smallskip 
  \item $g$ fulfills the following two properties:
    \begin{enumerate}	
    \item $G_{i}(g)$ is a di-cograph for all $i\in \Upsilon$. 
    \item $\mc{C}^1(\rev(g))$ is a hierarchy. 
      In particular, $\mc{C}^1(\rev(g)) = \Ms(g)$.
    \end{enumerate} \smallskip
  \item $\rev(g)$ fulfills the following two properties:
    \label{item:rev-g-unp}
    \begin{enumerate}	
    \item $G_{i}(rev(g))$ is a di-cograph for all $i\in \Upsilon_{\rev(g)}$. 
    \item $\mc{C}^1(\rev(g))$ is a hierarchy. 
    \end{enumerate}\smallskip
  \item $\rev(g)$ is \unp.
  \end{enumerate}
\label{thm:charactALL}
\end{theorem}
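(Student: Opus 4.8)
The plan is to establish all five equivalences as a short chain built almost entirely from results already assembled, so that essentially no new combinatorial work is needed. The five statements split into a group describing $g$ (namely (1)--(3)) and a group describing $\rev(g)$ (namely (4), (5)), and these two groups are joined through the idempotency of the reversible refinement.

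First I would dispatch $(1)\Leftrightarrow(2)$ by simply invoking Theorem~\ref{thm:charact}, which asserts exactly this for arbitrary 2-structures. Next, for $(2)\Leftrightarrow(3)$, I would unfold the definition of a symbolic ultrametric: $\delta_g$ is a symbolic ultrametric precisely when it satisfies (U1), i.e.\ every $G_i(g)$ is a di-cograph, together with (U2), the Triangle-Condition. Property (3a) is nothing but (U1), so it remains to show that, under the standing assumption (U1), condition (U2) is equivalent to ``$\mc{C}^1(\rev(g))$ is a hierarchy'' together with the refinement $\mc{C}^1(\rev(g))=\Ms(g)$. This is precisely Corollary~\ref{cor:g-rev1}, whose hypothesis is exactly that all $G_i(g)$ are di-cographs. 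Hence $(2)\Leftrightarrow(3)$, and the three statements $(1),(2),(3)$ are mutually equivalent.

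For the $\rev(g)$-group I would first record $(1)\Leftrightarrow(5)$: since \unp-ness is defined purely in terms of the module family $\M(\cdot)$, and since $\M(g)=\M(\rev(g))$ by Theorem~\ref{thm:points}(\ref{item:same-modules}), $g$ possesses a prime substructure on at least three vertices if and only if $\rev(g)$ does. This observation is in fact already made inside the proof of Theorem~\ref{thm:charact}, so it can be cited directly. The only slightly less mechanical step is $(4)\Leftrightarrow(5)$, which I would obtain by applying the \emph{already-established} equivalence $(1)\Leftrightarrow(3)$ to the 2-structure $\rev(g)$ in place of $g$. The key fact enabling this substitution is the idempotency $\rev(\rev(g))=\rev(g)$ from Theorem~\ref{thm:points}, whence $\mc{C}^1(\rev(\rev(g)))=\mc{C}^1(\rev(g))$. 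Consequently statement (3) read for $\rev(g)$ says exactly that $G_j(\rev(g))$ is a di-cograph for all $j\in\Upsilon_{\rev(g)}$ and that $\mc{C}^1(\rev(g))$ is a hierarchy, which is verbatim statement (4); and statement (1) read for $\rev(g)$ is ``$\rev(g)$ is \unp'', i.e.\ statement (5).

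Combining the two groups through statement (1) then yields that all five statements are equivalent. I expect the main subtlety to be purely notational bookkeeping rather than any genuine combinatorial obstacle: one must check carefully that the self-referential substitution $g\mapsto\rev(g)$ leaves the clause about $\mc{C}^1(\rev(g))$ unchanged (via idempotency) and that the di-cograph clauses (3a) and (4a) line up correctly after the substitution. No appeal to Lemmas~\ref{lem:tri-rev}, \ref{lem:cogr-rev}, or \ref{lem:tri-clust} is needed here beyond what is already packaged inside Corollary~\ref{cor:g-rev1}.
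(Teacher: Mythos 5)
Your proof is correct, and its first half coincides with the paper's: both dispatch $(1)\Leftrightarrow(2)$ by Theorem~\ref{thm:charact} and $(2)\Leftrightarrow(3)$ by Corollary~\ref{cor:g-rev1}. Where you genuinely diverge is in the treatment of (4) and (5). The paper proves $(3)\Leftrightarrow(4)$ by invoking Lemma~\ref{lem:cogr-rev} together with (U2), and then $(4)\Leftrightarrow(5)$ by combining Lemma~\ref{lem:tri-clust} with Theorem~\ref{thm:charact}; that is, it re-uses the technical lemmas a second time. You instead close the circle through (5): first $(1)\Leftrightarrow(5)$ via invariance of \unp-ness under reversible refinement (which, as you note, is already established inside the proof of Theorem~\ref{thm:charact}; strictly speaking this needs both $\M(g)=\M(\rev(g))$ and the substructure correspondence of Theorem~\ref{thm:points}, since \unp-ness concerns modules of \emph{substructures}, not only of $g$ itself), and then $(5)\Leftrightarrow(4)$ by applying the already-proved equivalence $(1)\Leftrightarrow(3)$ to the 2-structure $\rev(g)$ and exploiting idempotency $\rev(\rev(g))=\rev(g)$. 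This bootstrap is legitimate because $(1)\Leftrightarrow(3)$ holds for every 2-structure. The one bookkeeping point you gloss over: statement (3) read for $\rev(g)$ carries the clause $\mc{C}^1(\rev(g))=\Ms(\rev(g))$, which does not appear verbatim in (4); this is harmless, since Corollary~\ref{cor:g-rev1} applied to $\rev(g)$ (whose monochromatic graphs are di-cographs by (4a)) shows that the hierarchy property already forces this equality, and $\Ms(\rev(g))=\Ms(g)$ by Theorem~\ref{thm:points}. What your route buys is economy — no second appeal to Lemmas~\ref{lem:tri-rev}, \ref{lem:cogr-rev}, or \ref{lem:tri-clust} beyond what Corollary~\ref{cor:g-rev1} packages; what the paper's route buys is a linear chain $(3)\Leftrightarrow(4)\Leftrightarrow(5)$ that makes explicit which lemma underlies each individual link.
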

\begin{proof}
  The equivalence of Item (1.) and (2.) are already given in Theorem
  \ref{thm:charact}. Hence, $g$ satisfies (U1) and (U2).  By Corollary
  \ref{cor:g-rev1}, $\delta_g$ is a symbolic ultrametric if and only if
  (U1) and the condition that $\mc{C}^1(\rev(g))$ is a hierarchy (with
  $\mc{C}^1(\rev(g)) = \Ms(g)$) is satisfied. Thus, Item (2.) and (3.)
  are equivalent. By Lemma \ref{lem:cogr-rev} and since $g$ satisfies (U2),
  $g$ satisfies (U1) if and only if $\rev(g)$ satisfies (U1), and thus Item
  (3.) and (4.) are equivalent. Lemma \ref{lem:tri-clust} implies that the
  statement that $\rev(g)$ satisfies (U1) so that in addition 
  $\mc{C}^1(\rev(g))$ is a hierarchy is equivalent to the property that 
  $\rev(g)$ satisfies (U1) and (U2) and thus, $\delta_{\rev(g)}$ is a 
  symbolic ultrametric.  By Theorem \ref{thm:charact}, $\rev(g)$ is \unp. 
  Hence, Item (4.) and (5.) are equivalent.  
  \qed
\end{proof}

\section{Algorithms and Complexity Results}
\label{sec:alg}
\subsection{Recognition algorithm}

We first consider the problem of recognizing whether a 2-structure
$g=(V,\Upsilon,\vp)$ has a tree-representation $(T_g,t_g)$ and thus,
whether $g$ is \unp.  In what follows, the integer $n$ will always denote
$|V|$.  Based on Theorem \ref{thm:charactALL}(\ref{item:rev-g-unp}), we
will establish an $O(n^2)$ time algorithm to recognize \unp\ 2-structures.
We are aware of the existing $O(n^2)$ time algorithms described in
\cite{EHMS:94} and \cite{McConnell:95} to compute the modular decomposition
of 2-structures.  However, on the one hand, our established results allow
to design a conceptual simple algorithm by means of Theorem
\ref{thm:charactALL} and, on the other hand, the developed method might
provide an interesting starting point for further heuristics for
corresponding NP-complete editing problems, as we shall discuss later.

At first we compute the reversible refinement $rev(g)$ of $g$.
Then, for $rev(g)$ the monochromatic subgraphs $G_i(rev(g))$ are computed
and it is checked whether all $G_i(rev(g))$ are di-cographs or not.
If they are di-cographs then the respective 1-clusters $\mc{C}_i^1$ 
are extracted and it is checked if $\mc{C}^1$ forms a hierarchy.
Finally, if $\mc{C}^1$ is a hierarchy, then a tree can be constructed with
the method described in \cite{McConnell:05}.
By the following Lemma, constructing a tree from a hierarchy can be done in
linear time with respect to the number of elements in the hierarchy,
which, by Theorem \ref{A:thm:hierarchy}, is bounded by $2|V|-1$.
\begin{lemma}[\cite{McConnell:05}]
  \label{lem:inclusion-tree}
  Given a hierarchy $\mc{C}$, it takes $O(|\mc{C}|)$ time to construct its
  inclusion tree.
\end{lemma}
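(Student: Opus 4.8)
The plan is to realize the inclusion tree as the Hasse diagram of the laminar family $(\mc{C},\subsetneq)$ and to build it by a single cardinality-ordered sweep. The starting observation is that, by the non-overlap axiom (iii) of a hierarchy, any two clusters are either disjoint or nested; equivalently, for every $v\in V$ the clusters of $\mc{C}$ that contain $v$ form a chain under inclusion. Consequently each cluster $C\neq\{v\}$ has, for every $v\in C$, a unique \emph{child through $v$}: the largest cluster $D\in\mc{C}$ with $v\in D\subsetneq C$. Since the clusters containing a fixed $v$ form a chain, this largest-by-size cluster is also the inclusion-maximal one, and the distinct such children, taken over all $v\in C$, are exactly the children of $C$ in the inclusion tree. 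These are the only parent/child relations that need to be recorded.

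Next I would exploit the size bound. Since the singletons $\{v\}$, $v\in V$, all lie in $\mc{C}$, we have $|V|\le|\mc{C}|$, and by Theorem~\ref{A:thm:hierarchy} also $|\mc{C}|\le 2|V|-1$; hence $|V|=\Theta(|\mc{C}|)$ and every cluster has cardinality in $\{1,\dots,|V|\}$. This lets me bucket-sort the clusters by cardinality with counting sort into $|V|=O(|\mc{C}|)$ buckets, at total cost $O(|\mc{C}|)$. I then process the clusters in order of non-decreasing cardinality while maintaining, for each element $v\in V$, a pointer $\pi[v]$ to the largest already-processed cluster containing $v$ (initialized to the singleton $\{v\}$). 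When a cluster $C$ is reached, all of its proper subclusters have strictly smaller size and have therefore already been handled, and no distinct cluster of size $|C|$ can contain an element of $C$ without overlapping it; so for each $v\in C$ the pointer $\pi[v]$ is precisely the child of $C$ through $v$. I declare each distinct cluster appearing among these pointers to be a child of $C$, link it to $C$ in the tree, and reset the corresponding pointers to $C$. The correctness of this identification is exactly where the laminarity established above is used: without the non-overlap property a pointer $\pi[v]$ could point to a cluster that is neither contained in $C$ nor a child of it.

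The efficiency argument is the delicate point, and it is what the cited construction of \cite{McConnell:05} contributes. A naive realization rescans the full element list of every $C$ and thus runs in time proportional to $\sum_{C\in\mc{C}}|C|$, which can be superlinear in $|\mc{C}|$ for deeply nested families. The refinement is to touch each cluster only a constant number of times: once a cluster $D$ has been absorbed as a child of some $C$ it is never scanned again, and each $C$ is reconstructed from a single representative element per child rather than from all of its elements, so the total number of elementary pointer operations is proportional to the number of parent/child links, which is $|\mc{C}|-1$. I therefore expect the main obstacle to be purely implementational, namely arranging the representative-keyed bookkeeping (a merging of child lists indexed by representatives) so that no cluster is examined more than a constant number of times; the combinatorial core---that the Hasse diagram is recovered correctly from the cardinality-ordered sweep---follows immediately from the hierarchy axioms and Theorem~\ref{A:thm:hierarchy}.
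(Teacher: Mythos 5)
You should first be aware that the paper contains no proof of Lemma~\ref{lem:inclusion-tree} to compare against: it is imported as a black box from \cite{McConnell:05}, so your attempt has to stand entirely on its own. Its combinatorial half does stand: by laminarity the clusters containing a fixed $v$ form a chain, the ``child through $v$'' is well defined, same-cardinality clusters are disjoint, and therefore the cardinality-ordered sweep with pointers $\pi[v]$ correctly recovers the Hasse diagram. None of that is in dispute.

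The genuine gap is the running-time claim, which is the entire content of the lemma, and it is not ``purely implementational.'' Two concrete problems. First, your refinement is circular: to process $C$ using ``one representative element per child'' you must already know which previously created representatives lie inside $C$, and with clusters given as element lists this membership information can only be obtained by reading the elements of $C$ --- which is exactly the $\sum_{C\in\mc{C}}|C|$ cost you set out to avoid. No precomputed structure is available that answers ``is this representative in $C$?'' in $O(1)$ time without first scanning $C$. Second, and more fundamentally, in the element-list model the claimed bound is unattainable by \emph{any} algorithm: for the nested chain $\{1\}\subset\{1,2\}\subset\dots\subset V$ together with all singletons one has $|\mc{C}|=2|V|-1$ but $\sum_{C\in\mc{C}}|C|=\Theta(|V|^2)$, and an adversary argument shows that deciding which singleton attaches to which chain node forces reading a constant fraction of those $\Theta(|V|^2)$ entries, i.e., $\Omega(|\mc{C}|^2)$ time. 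So the lemma is only meaningful when each cluster is handed to the algorithm as a constant-size object --- e.g., as an interval of a fixed ordering of $V$, which is how strong modules arise in the factorizing-permutation framework of \cite{McConnell:05}; under that representation a counting sort of endpoints followed by a stack-based (balanced-parenthesis) sweep builds the inclusion tree in $O(|\mc{C}|)$ time. That representational assumption is the missing idea: your per-element pointer sweep is intrinsically tied to scanning cluster contents and cannot be repaired into an $O(|\mc{C}|)$ bound without it.
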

Pseudocode for the recognition procedure is given in Algorithm
\ref{alg:recog}.  Furthermore, we give pseudocode for all necessary
subroutines (Algorithms \ref{alg:rev-g} to \ref{alg:hierarchy}).  We omit
the procedure for computing the modular decomposition $\Ms(G)$ of a digraph
$G=(V,E)$ because McConnell and de Montgolfier \cite{McConnell:05} already
presented an $O(|V|+|E|)$ time algorithm for this problem.

We first prove the correctness of Algorithms \ref{alg:cograph},
\ref{alg:cluster}, and \ref{alg:hierarchy}.

\begin{lemma}
\label{lem:correct-cograph}
  Given a digraph $G$ and its modular decomposition $\Ms(G)$,
  Algorithm \ref{alg:cograph} recognizes whether
  $G$ is a di-cograph or not.
\end{lemma}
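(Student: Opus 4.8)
The plan is to reduce the correctness of Algorithm~\ref{alg:cograph} to the structural characterization of di-cographs in terms of their modular decomposition, and then to check that the test the algorithm performs at each inner vertex faithfully decides that characterization. The characterization I would use is: viewing $G$ as the 2-structure $g$ carrying label $1$ on present arcs and $0$ on absent arcs, $G$ is a di-cograph if and only if no inner vertex of its modular decomposition tree is prime, i.e.\ the quotient $g[L(v)]/\Mmax(g[L(v)])$ is complete or linear at every inner vertex $v$. One direction is immediate from the cotree structure, since every inner vertex of the cotree of a di-cograph is parallel, series, or order; for the converse I would invoke Theorem~\ref{thm:char-old}, observing that the eight forbidden configurations of Fig.~\ref{fig:forb} are exactly the prime three- and four-vertex 2-structures over the label set $\{0,1\}$. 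Hence $G$ contains a forbidden induced subgraph precisely when $g$ has a prime substructure of size $3$ or $4$, which by Theorem~\ref{thm:char-old} occurs precisely when some inner vertex of the modular decomposition tree is labeled prime.

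Next I would analyze the local test. For each inner vertex $M\in\Ms(G)$ with children $M_1,\dots,M_k=\Mmax(G[M])$, the quotient $G[M]/\Mmax(G[M])$ is a well-defined 2-structure on the $k$ children: by Lemma~\ref{lem:arcs-modules} all arcs running from $M_r$ to $M_s$ carry a single label, so the test may use arbitrary representatives $x_r\in M_r$ without ambiguity. By the quotient characterization recalled above (every quotient $g/\Mmax(g)$ is linear, complete, or prime), the vertex $M$ is non-prime if and only if this quotient is either complete (the empty digraph, i.e.\ \emph{parallel}, or the complete bidirected digraph, i.e.\ \emph{series}) or linear (a transitive tournament, i.e.\ \emph{order}). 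I must therefore verify that the algorithm's per-node check accepts exactly the complete and linear quotients: completeness by testing that all ordered pairs of representatives receive one common label, and linearity by testing that the present arcs among the representatives form a transitive tournament. Soundness and completeness of these two subtests, together with the three-way dichotomy supplied by the quotient characterization, guarantee that the algorithm flags $M$ as prime exactly when $M$ is prime.

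Finally I would assemble the global statement: the algorithm reports ``di-cograph'' if and only if every inner vertex passes the local test, if and only if no inner vertex is prime, if and only if $G$ is a di-cograph. The step I expect to be the main obstacle is the reduction, namely pinning down rigorously that the configurations in Fig.~\ref{fig:forb} are precisely the prime digraphs on three and four vertices, so that Theorem~\ref{thm:char-old} applies to the binary-labeled $g$; one should note here that $g$ need not be reversible, but this causes no difficulty since Theorem~\ref{thm:char-old}, Lemma~\ref{lem:arcs-modules}, and the quotient characterization all hold for arbitrary 2-structures. Once this correspondence is in hand, the per-vertex argument is routine, the only mild subtlety being the transitivity check that separates the \emph{order} case from a non-transitive (hence prime) tournament.
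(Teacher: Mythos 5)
There is a genuine gap: your proof establishes the correctness of a different algorithm from the one in the paper. You assume that Algorithm~\ref{alg:cograph} performs, at \emph{every} inner vertex $M$, a full classification of the quotient $G[M]/\Mmax(G[M])$ as complete, linear, or prime, and you conclude that ``the algorithm flags $M$ as prime exactly when $M$ is prime.'' That is not what the algorithm does, and the claim is false on a per-node basis. The algorithm samples a \emph{single} pair of children $M',M''$ of $M$ and a single pair of representatives $x\in M'$, $y\in M''$; only if $(x,y)$ or $(y,x)$ is an arc does it go on to test whether $M$ is series or order. If the sampled pair is non-adjacent, the node is skipped entirely, with no local test distinguishing ``parallel'' from ``prime.'' A prime module whose sampled child pair happens to be non-adjacent therefore passes through the loop undetected, and your argument, as written, cannot rule this case out.

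The missing idea is the global arc-counting step, which is the crux of the paper's proof. Every arc of $G$ lies between distinct children of exactly one strong module (the least common ancestor of its endpoints in the inclusion tree), so these arc sets partition $E(G)$. The algorithm accumulates, in the counter $e$, the number of arcs between distinct children of those modules it has verified to be series or order; by Remark~\ref{rem:cluster-ser-ord} (equivalently, Lemma~\ref{lem:arcs-modules}), every series or order module has its sampled pair adjacent, so all of them contribute. A prime module that slipped past the local test must nevertheless have at least one arc between some pair of its children (otherwise its quotient would be parallel, not prime), and those arcs are never added to $e$. Hence $e<|E(G)|$ at the end, and the final check $e\neq|E(G)|$ returns \emph{FALSE}. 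Without this counting argument your proof does not cover the skipped-node case; with it, your reduction to ``no inner vertex is prime'' (which is fine, and is essentially the paper's use of Remark~\ref{rem:cluster-ser-ord}) would go through.
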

\begin{proof}
  At first, Algorithm \ref{alg:cograph} computes the inclusion tree 
  $T$ of $\Ms(G)$
  and then iterates over all strong modules $M \in \Ms(G)$.
  For each strong module $M$ two arbitrary but distinct children 
  $M',M'' \in \Ms(G)$ of $M$ in $T$
  are selected and it is checked if there is an arc between two 
  vertices $x \in M'$ and $y \in M''$.
  If $G$ is a di-cograph and there is an arc $(x,y) \in E$ or $(y,x) \in E$,
  then by Remark \ref{rem:cluster-ser-ord}, $M$ must be either series or order.
  In other words, if we have found an $(x,y) \in E$ or $(y,x) \in E$,
  but $M$ is neither series nor order, it must be prime which
  implies that $G$ was not a di-cograph. 
  However, it might be possible, that the chosen elements $x$ and $y$ 
  do not form an arc $(x,y) \in E$ or $(y,x) \in E$, but then $M$ is 
  either prime or parallel.
  If $M$ is prime there must be arcs $(x',y')$ or $(y',x')$,
  that we might have not observed in the preceding step, where $x'\in M'$, 
  $y'\in M''$ for some children $M',M''$ of $M$, 
  otherwise $M$ would be parallel.
  However, this case is covered by counting the numbers of all arcs between 
  the vertices of maximal strong submodules contained in series or order 
  modules $M$.
  If the accumulated number $e$ of all counted arcs is equal to the number
  of arcs $|E|$ in $G$, then all modules $M' \in \Ms(G)$ which are neither
  series nor order must be parallel. Hence, no prime modules exists and
  therefore $G$ is a di-cograph.  
  \qed
\end{proof}

\begin{lemma}
  \label{lem:correct-cluster}
  Given a di-cograph $G_i$ and its modular decomposition $\Ms(G_i)$,
  Algorithm \ref{alg:cluster} computes the 1-clusters $\mc{C}_i^1$ of $G_i$.
\end{lemma}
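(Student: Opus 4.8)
The plan is to show that Algorithm \ref{alg:cluster} correctly identifies exactly those strong modules of $G_i$ that are 1-clusters, i.e., the strong modules whose cotree-label is different from $0$. By Remark \ref{rem:cluster-ser-ord}, a strong module $M\in\Ms(G_i)$ is a 1-cluster if and only if $M$ is a series or order module, which in turn holds if and only if for some pair of children $M', M''$ of $M$ and some $x\in M'$, $y\in M''$ there is at least one arc $(x,y)\in E$ or $(y,x)\in E$. So the correctness statement reduces to verifying that the algorithm, given $G_i$ together with its modular decomposition $\Ms(G_i)$ (which by assumption is that of a di-cograph and hence contains no prime modules), collects precisely the series and order modules plus the singletons.

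First I would recall the structure of $\Ms(G_i)$ as an inclusion tree $T_i$ whose inner vertices correspond to the non-singleton strong modules, each carrying one of the labels $0$ (parallel), $1$ (series), or $\ora{1}$ (order), since $G_i$ is a di-cograph. I would then walk through Algorithm \ref{alg:cluster} step by step: for each inner strong module $M$ the algorithm picks two distinct children $M',M''$ and tests whether an arc is present between a representative vertex $x\in M'$ and $y\in M''$. The key point is that, because $M', M''$ are \emph{modules} of $G_i$, the presence or absence of an arc between $M'$ and $M''$ does not depend on the choice of representatives $x,y$; by Lemma \ref{lem:arcs-modules} applied to the disjoint modules $M'$ and $M''$, there are fixed labels $p,q$ with $\vp(x'y')=p$ and $\vp(y'x')=q$ for \emph{all} $x'\in M'$, $y'\in M''$. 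Hence a single representative test per child-pair faithfully determines whether $M$ is parallel (no arc, label $0$) or series/order (at least one arc, label $1$ or $\ora{1}$).

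Next I would argue that the algorithm adds $M$ to $\mc{C}_i^1$ exactly when this representative test reveals an arc, and that it additionally inserts all singletons $\{v\}$. Combining this with Remark \ref{rem:cluster-ser-ord}, the modules collected are precisely the series and order strong modules together with the singletons, which is by definition the set $\mc{C}_i^1$ (the $1$-clusters of $T_i$) augmented as prescribed in the construction of $\mc{C}^1(g)$. Since every non-singleton strong module of the di-cograph is classified correctly and no prime modules can arise (as $\Ms(G_i)$ is the modular decomposition of a di-cograph), the output set is exactly $\mc{C}_i^1$.

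The main obstacle I anticipate is making the representative-independence argument airtight: one must be careful that the two children $M', M''$ used in the test are indeed disjoint modules of the \emph{induced} subgraph $G_i[M]$ as well as of $G_i$, so that Lemma \ref{lem:arcs-modules} genuinely applies and a single probe suffices instead of checking all $|M'|\cdot|M''|$ potential arcs. A secondary subtlety is ensuring that, for order modules, the arc between a child-pair may point in either direction depending on the left-to-right ordering of the children in the cotree; but since the $1$-cluster test only asks whether an arc exists in \emph{some} direction, the orientation is immaterial for membership in $\mc{C}_i^1$, and the argument goes through uniformly for both series and order modules.
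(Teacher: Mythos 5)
Your proposal is correct and takes essentially the same route as the paper: both reduce the correctness of the single-representative probe to Remark \ref{rem:cluster-ser-ord}, with representative-independence guaranteed because the two chosen children are disjoint modules of $G_i$ (you invoke Lemma \ref{lem:arcs-modules}, the paper's remark states the same adjacency/non-adjacency fact for digraph modules), and both use that a di-cograph has no prime strong modules, so ``some arc between two children'' is equivalent to ``series or order'' is equivalent to ``1-cluster.''

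One small correction: Algorithm \ref{alg:cluster} does \emph{not} insert the singletons $\{v\}$; those are adjoined only later, in the definition $\mc{C}^1(g) = \cup_{i\in\Upsilon}\mc{C}^1_i \cup \{\{v\}\mid v\in V_g\}$, and singletons are by definition not 1-clusters of $T_i$ since leaves of the cotree carry no label. Hence the set your argument should identify as the output is exactly the series and order strong modules, i.e., $\mc{C}_i^1$ itself with no augmentation --- dropping the singleton clause from your conclusion makes it match both the algorithm and the statement of the lemma.
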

\begin{proof}
  At first, Algorithm \ref{alg:cluster} computes the inclusion tree $T$ of
  $\Ms(G_i)$.  Then, for each strong module $M$ two arbitrary vertices from
  distinct children $M',M'' \in \Ms(G)$ of $M$ in $T$ are selected.  If
  there is an arc $(x,y) \in E$ or $(y,x) \in E$, then by Remark
  \ref{rem:cluster-ser-ord}, $M$ cannot be parallel and hence, $M$ is a
  1-cluster and therefore, has to be added to the set of 1-clusters
  $\mc{C}_i^1$.  
  \qed
\end{proof}

\begin{algorithm}[tbp]
\caption{Recognition of \unp\ 2-Structures} 
\label{alg:recog}
\begin{algorithmic}[1]
\small
\STATE \textbf{INPUT:} 2-structure $g = (V,\Upsilon,\vp)$ with 
        $n=|V|$ vertices and $k=|\Upsilon|$ labels;
\STATE $g'=(V,\Upsilon',\vp') \gets$ \texttt{Compute $\rev(g)$}
\IF{$|\Upsilon'|>2(n-1)$}
	\RETURN \emph{FALSE}
\ENDIF

\STATE set of digraphs $\mathcal G \gets$ \texttt{Compute monochromatic subgraphs $G_i(g')$}
\STATE multiset of clusters $\mc{C} \gets \emptyset$
\FOR{$G_i$ in $\mathcal G$}
	\STATE set of strong modules $\Ms \gets$ \texttt{Compute the modular decomposition of $G_i$} (cf.\ \cite{McConnell:05})
	\IF{\texttt{Check di-cograph property for $G_i$ with modular decomposition $\Ms(G_i)$}}
		\STATE $\mc{C}_i^1 \gets$ \texttt{Get 1-clusters from di-cograph $G_i$ with modular decomposition $\Ms(G_i)$}
	\ELSE
		\RETURN \emph{FALSE}
	\ENDIF
	\STATE $\mc{C} \gets \biguplus_{i \in \Upsilon'} \mc{C}_i^1$
\ENDFOR

\IF{$|\mc{C}|>2(n-1)$}
	\RETURN \emph{FALSE}
\ENDIF
\IF {\texttt{Check hierarchy property for $\mc{C}$}}
	\RETURN \emph{TRUE}
\ELSE
	\RETURN \emph{FALSE}
\ENDIF
\end{algorithmic}
\end{algorithm}

\begin{algorithm}[t]
\caption{Compute rev(g)} 
\label{alg:rev-g}
\begin{algorithmic}[1]
\small
\STATE \textbf{INPUT:} 2-structure $g = (V,\Upsilon,\vp)$ with $n=|V|$ vertices;
\STATE $\Upsilon' \gets \emptyset$
\FOR{$i=1,\dots, n$}
	\FOR{$j=1,\dots, n$}
		\STATE $\vp'(i,j) \gets (\vp(i,j),\vp(j,i))$
		\STATE $\Upsilon' \gets \Upsilon' \cup \{\vp'(i,j)\}$
   	\ENDFOR
\ENDFOR
\RETURN $g'=(V,\Upsilon',\vp')$
\end{algorithmic}
\end{algorithm}

\begin{algorithm}[t]
\caption{Compute monochromatic subgraphs $G_i(g')$} 
\label{alg:subgraphs}
\begin{algorithmic}[1]
\small
\STATE \textbf{INPUT:} 2-structure $g' = (V,\Upsilon',\vp')$ with $n=|V|$ vertices  and $k'=|\Upsilon'|$ labels;
\STATE define bijection $\mu : \Upsilon' \rightarrow {1 \ldots k'}$
\STATE $\mathcal G \gets \emptyset$
\FOR{$i=1,\dots, k'$}
	\STATE $E_i \gets \emptyset$
	\STATE $G_i = (V,E_i)$
\ENDFOR
\FOR{$i=1,\dots, n$}
	\FOR{$j=1,\dots, n$}
		\STATE $E_{\mu(\vp'(i,j))} \gets E_{\mu(\vp'(i,j))} \cup \{(i,j)\}$
   	\ENDFOR
\ENDFOR
\RETURN $\mathcal G = \bigcup_{i=1}^{k'} \{G_i\}$
\end{algorithmic}
\end{algorithm}

\begin{algorithm}[t]
\caption{Check di-cograph property for $G_i$ with modular decomposition $\Ms(G_i)$} 
\label{alg:cograph}
\begin{algorithmic}[1]
\small
\STATE \textbf{INPUT:} digraph $G_i$ and its modular decomposition $\Ms(G_i)$;
\STATE tree $T \gets$ inclusion tree of $\Ms$
\STATE arc counter $e \gets 0$
\FOR{$M \in V(T)$}
	\STATE $M', M'' \gets$ two arbitrary but distinct child vertices of $M$ in $T$
	\STATE $x, y \gets$ two arbitrary elements $x \in M'$ and $y \in M''$
	\IF{$(x,y) \in E(G_i)$ or $(y,x) \in E(G_i)$}
		\IF{$M$ is not series or order}
			\RETURN \emph{FALSE}
		\ELSE
			\STATE increase $e$ by the number of arcs between all elements from distinct children of $M$ in $T$
		\ENDIF
	\ENDIF 
\ENDFOR
\IF{$e \neq  |E(G_i)|$}
	\RETURN \emph{FALSE}
\ENDIF
\RETURN \emph{TRUE}
\end{algorithmic}
\end{algorithm}

\begin{algorithm}[t]
\caption{Get 1-clusters from di-cograph $G_i$ with modular decomposition $\Ms(G_i)$} 
\label{alg:cluster}
\begin{algorithmic}[1]
\small
\STATE \textbf{INPUT:} di-cograph $G_i$ and its modular decomposition $\Ms(G_i)$;
\STATE $\mc{C}_i^1 \gets \emptyset$
\STATE tree $T \gets$ inclusion tree of $\Ms(G_i)$
\FOR{$M \in V(T)$}
	\STATE $M', M'' \gets$ two arbitrary but distinct child vertices of $M$ in $T$
	\STATE $x, y \gets$ two arbitrary elements $x \in M'$ and $y \in M''$
	\IF{$(x,y) \in E(G_i)$ or $(y,x) \in E(G_i)$}
		\STATE $\mc{C}_i^1 \gets \mc{C}_i^1 \cup M$
	\ENDIF
\ENDFOR
\RETURN $\mc{C}_i^1$
\end{algorithmic}
\end{algorithm}

\begin{algorithm}[htbp]
\caption{Check hierarchy property for $\mc{C}$} 
\label{alg:hierarchy}
\begin{algorithmic}[1]
\small
\STATE \textbf{INPUT:} multiset of clusters $\mc{C}$, on the ground set $\{1, \ldots, n\}$;
\STATE for each element in $\mc{C}$ compute a unique identifier $id : \mc{C} \rightarrow \{1, \ldots, |\mc{C}|\}$
\STATE for each element in $\mc{C}$ compute its bit string representation $bsr : \mc{C} \rightarrow \{0,1\}^n$ with $bsr(C_j)[i]=1$ iff $i \in C_j$, $C_j \in \mc{C}$
\STATE sorted list $\mc{C}_\leq \gets$ sort $\mc{C}$ ascending by cardinality of its elements
\FOR{i=1, \ldots, n}
	\IF{$bsr(\mc{C}_\leq(|\mc{C}_\leq|))[i] \neq 1$}
		\RETURN \emph{FALSE}
	\ENDIF
	\STATE $\mathcal L_i \gets \mc{C}_\leq$
	\FOR{$C_j \in \mathcal L_i$}
		\IF{$bsr(C_j)[i]=0$}
			\STATE remove $C_j$ from $\mathcal L_i$
		\ENDIF
	\ENDFOR
\ENDFOR
\WHILE{$\mathcal L_1 \neq \emptyset$}
	\STATE $s \gets$ the smallest $i$ such that $|\mathcal L_i(1)| \leq |\mathcal L_j(1)|$ for all $i \neq j$
	\STATE $L \gets \mathcal L_s(1)$
	\FOR{$t \in L$ with $t \neq s, \mathcal L_t \neq \emptyset$}
		\FOR{$r=1, \ldots, |\mathcal L_s|$}
			\IF{$id(\mathcal L_s(r)) \neq id(\mathcal L_t(r))$}
				\RETURN \emph{FALSE}
			\ENDIF
			\STATE $\mathcal L_t \gets \emptyset$
		\ENDFOR
	\ENDFOR
	\STATE remove $L$ from $ \mathcal L_s$
\ENDWHILE
\RETURN \emph{TRUE}
\end{algorithmic}
\end{algorithm}

The next lemma shows that Algorithm \ref{alg:hierarchy} correctly
recognizes, whether $\mc{C}^1(\rev(g))$ is a hierarchy or not.  However,
due to efficiency and also simplicity of the algorithm, we deal here with
multisets, $\mc{C} =\biguplus_{i \in \Upsilon_{rev(g)}} \mc{C}_i^1$.  The
symbol ``$\biguplus$'' denotes the multiset-union of sets where the
multiplicity of an element $M$ in $\mc{C}$ is given by the number of sets
that contain $M$.

\begin{lemma}
  \label{lem:correct-hirarchy}
  Given a multiset $\mc{C} =\biguplus_{i \in \Upsilon_{rev(g)}} \mc{C}_i^1$
  of the 1-clusters of a set of di-cographs $G_i=(V,E_i)$, Algorithm
  \ref{alg:hierarchy} recognizes whether $\mc{C}^1=\bigcup_{i \in
    \Upsilon_{rev(g)}} \mc{C}_i \cup \{v | v \in V\}$ is a hierarchy or
  not.
\end{lemma}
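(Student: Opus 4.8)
The plan is to prove that Algorithm~\ref{alg:hierarchy} returns \emph{TRUE} if and only if $\mc{C}^1$ is a hierarchy, and I would start from a convenient reformulation of the hierarchy axioms. A collection $\mc{D}\subseteq 2^V$ that contains all singletons is a hierarchy iff (a)~$V\in\mc{D}$ and (b)~for every $x\in V$ the members of $\mc{D}$ containing $x$ are pairwise comparable under inclusion, i.e.\ form a chain. Indeed, axiom (iii) forbids overlaps, and $p\overlap q$ means exactly that $p\cap q\neq\emptyset$ while $p,q$ are incomparable; hence ``no overlaps'' is the same as ``the clusters through each fixed $x$ form a chain''. The singletons $\{v\}$ lie inside every cluster containing $v$ and are disjoint from all others, so they are automatically present in $\mc{C}^1$ and never create an overlap. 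Consequently it suffices to test (a) and the chain condition on the multiset $\mc{C}=\biguplus_i \mc{C}_i^1$ of non-singleton $1$-clusters.

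Next I would read off how these are tested. Line~5 checks, for each vertex $i$, that the maximum-cardinality cluster $\mc{C}_\leq(|\mc{C}_\leq|)$ contains $i$; since the only subset of $V$ of size $|V|$ is $V$ itself, this passes iff that cluster equals $V$, i.e.\ iff $V\in\mc{C}^1$, which is condition (a). Lines~6--8 extract, for each $i$, the sublist $\mathcal{L}_i$ of all clusters containing $i$. Crucially each $\mathcal{L}_i$ is produced by deleting entries from the \emph{one} globally sorted list $\mc{C}_\leq$, so clusters of equal cardinality --- including repeated copies coming from the multiset union --- appear in the \emph{same relative order} in every $\mathcal{L}_i$; this makes the positional comparison by the unique identifier $id$ well defined even in the presence of duplicates. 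It remains to show that the \textbf{while}-loop returns \emph{TRUE} iff every $\mathcal{L}_i$ is a chain.

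For the forward direction I would use a bottom-up invariant. First note that whenever the chosen front cluster $L=\mathcal{L}_s(1)$ contains vertex $1$, the tie-breaking rule forces $s=1$: from $L\in\mathcal{L}_1$ we get $|\mathcal{L}_1(1)|\le|L|$, while minimality of $L$ gives $|\mathcal{L}_1(1)|\ge|L|$, so vertex $1$ attains the minimum and, being the smallest index, is selected. Thus $\mathcal{L}_1$ is only ever shortened by ``remove $L$ from $\mathcal{L}_s$'' and is emptied precisely when $V$ is removed, which guarantees the loop runs until every cluster is processed (fronts are processed in non-decreasing cardinality, so $V$ is last). Assuming now that $\mc{C}^1$ is a hierarchy, I would show that for the globally minimal front $L=\mathcal{L}_s(1)$ every \emph{active} member $t\in L$ satisfies $\mathcal{L}_t=\mathcal{L}_s$: the current front of $t$ must be $L$ itself (any strictly smaller cluster through $t$ would contradict the minimality of $L$ or has already been removed while $t$ stayed a representative), and any cluster above $L$ in either chain contains all of $L$, hence both $s$ and $t$, and so lies in the other chain as well --- with duplicates occurring in the same $\mc{C}_\leq$-order. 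Therefore all positional $id$-comparisons succeed, the member lists are emptied consistently (discarding only copies already retained through $s$), and the loop terminates with \emph{TRUE}.

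The reverse direction is the main obstacle, and I would argue its contrapositive: if $\mc{C}^1$ is not a hierarchy, the algorithm returns \emph{FALSE}. If $V\notin\mc{C}^1$ this is caught at Line~5, so assume $V\in\mc{C}^1$ but some $\mathcal{L}_x$ fails to be a chain, witnessed by an overlapping pair $p\overlap q$ with, say, $|p|\le|q|$. I would then track the contraction performed by the loop and show the overlap cannot be silently destroyed: the representative carrying the vertices of the smaller cluster still records one of $p,q$ in its chain while a sibling member records the other, so at the level where both become active members of a common front cluster the positional comparison (including a length discrepancy of the two chains) must disagree, triggering \emph{FALSE}. The delicate points are (i)~to maintain as an invariant that, after each \emph{consistent} merge, the retained representative chains still record every cluster separating two not-yet-identified vertices, so a genuine overlap survives until it is detected, and (ii)~to keep the multiset duplicates from either masking a real mismatch or producing a spurious one, which is exactly what the fixed $\mc{C}_\leq$-order from Lines~6--8 rules out. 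Propagating this invariant through the dynamic emptying of the lists is the technical heart of the proof; the cardinality-monotone processing order and the forced choice $s=1$ above keep the induction finite and well founded.
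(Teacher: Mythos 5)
Your proposal follows essentially the same route as the paper's proof: the same reformulation of the hierarchy axioms, the same reading of Lines 4--15 (the single globally sorted list $\mc{C}_\leq$ is what makes the positional $id$-comparison sound despite multiset duplicates), and the same smallest-cluster-first loop with a merge invariant. The step you defer as ``the technical heart'' is precisely what the paper settles with its two local observations --- if every active $t\in L$ passes the check, then $s,t\in L'$ for all $L'\in\mathcal{L}_s=\mathcal{L}_t$, hence $L\subseteq L'$ for every remaining cluster meeting $L$, and discarding $\mathcal{L}_t$ when $\mathcal{L}_t=\mathcal{L}_s$ loses no information since every cluster through $t$ stays recorded in $\mathcal{L}_s$ --- so your plan matches the paper's argument in substance.
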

\begin{proof}
  Note that the multiset $\mc{C}$ may contain a cluster $C$ more than once,
  as $C$ can be part of different 1-clusters $\mc{C}_i^1$. Furthermore,
  $\mc{C}$ does not contain the singletons.  However, it is easy to see
  that $\mc{C}^1$ is a hierarchy if and only if the singletons are
  contained in $\mc{C}^1$ (which is satisfied by construction), there is a
  1-cluster equal to $V$ and for all $C', C'' \in \mc{C}$ it holds that $C'
  \cap C'' \in \{C',C'',\emptyset\}$.  The latter is equivalent to the
  following statement.  For all $C', C'' \in \mc{C}$, $|C'| \leq |C''|$ it
  holds that either $C' \cap C'' = \emptyset$ or $C' \subseteq C''$.
  
  In Line 4, a list $\mc{C}_\leq$ is created with all
  $C \in \mc{C}$ being sorted ascending by cardinality.
  Hence, $\mc{C}_\leq(|\mc{C}_\leq|)$ is one of the largest clusters.
  In Line 6, it is checked if this largest cluster contains all elements
  from the ground set $V=\{1, \ldots, n\}$. If not then $V \notin \mc{C}$
  and therefore $\mc{C}^1$ is not a hierarchy.
  In Lines 9 to 14, 	lists $\mathcal L_i$ are created,
  containing all clusters $C \in \mc{C}$ with
  $i \in C$.
  The relative order of clusters in $\mathcal L_i$
  is identical to the relative order of clusters in $\mc{C}_\leq$.
  In each iteration of Lines 16 to 28 the smallest cluster $L$ is selected  
  among all remaining clusters $\bigcup_{i=1}^n \mathcal L_i$.
  For each $i \in L$ obviously $L \in \mathcal L_i$.
  If $s,t \in L$ then it is checked if $\mathcal L_s = \mathcal L_t$.
  This can be done, as $\mathcal L_s$ and $\mathcal L_t$ have the same
  relative order of clusters.
  If $s,t \in L$ and $\mathcal L_s = \mathcal L_t$ then it follows that
  $s,t \in L'$ for all $L' \in \mathcal L_s \cup \mathcal L_t$.
  As this holds for all pairwise distinct $s,t \in L$
  and $|L| \leq |L'|$ for all $L' \in \bigcup_{i=1}^n \mathcal L_i$
  it follows that $L \subseteq L'$ for all 
  $L' \in \bigcup_{i=1}^n \mathcal L_i$ with $L \cap L' \neq \emptyset$.
  As $\mathcal L_s = \mathcal L_t$ it is sufficient to keep only one
  of the lists, e.g., $\mathcal L_s$ (Line 24).
  Finally, $L$ is removed from $\mathcal L_s$ (Line 27) and
  the while-loop is repeated with the next smallest cluster.
\qed\end{proof}

We now show the correctness of Algorithm \ref{alg:recog}.
\begin{lemma}
  \label{lem:correct-recog}
  Given a 2-structure $g = (V,\Upsilon,\vp)$, Algorithm \ref{alg:recog}
  recognizes whether $g$ is \unp\ or not.
\end{lemma}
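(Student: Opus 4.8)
The plan is to reduce the correctness of Algorithm~\ref{alg:recog} to the equivalence of Items~(1) and~(4) in Theorem~\ref{thm:charactALL}: $g$ is \unp\ if and only if $\rev(g)$ satisfies (a) $G_i(\rev(g))$ is a di-cograph for every $i\in\Upsilon_{\rev(g)}$, and (b) $\mc{C}^1(\rev(g))$ is a hierarchy. First I would verify that the algorithm actually manipulates the objects named in this characterization: by the correctness of Algorithm~\ref{alg:rev-g} the 2-structure $g'$ formed in the second line equals $\rev(g)$, and by Algorithm~\ref{alg:subgraphs} the family $\mathcal G$ is precisely $\{G_i(\rev(g))\mid i\in\Upsilon_{\rev(g)}\}$. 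Everything that follows then concerns testing exactly conditions (a) and (b).

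Next I would trace the main loop. For each $G_i$ the algorithm computes its modular decomposition with the linear-time method of \cite{McConnell:05} and applies Algorithm~\ref{alg:cograph}; by Lemma~\ref{lem:correct-cograph} this correctly decides whether $G_i$ is a di-cograph. If some $G_i$ fails the test the algorithm returns \emph{FALSE}, which is correct, since then condition (a) is violated and hence $g$ is not \unp\ by Theorem~\ref{thm:charactALL}. Otherwise Algorithm~\ref{alg:cluster} extracts $\mc{C}_i^1$ (correct by Lemma~\ref{lem:correct-cluster}), and these sets are accumulated into the multiset $\mc{C}=\biguplus_{i}\mc{C}_i^1$. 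After the loop, adjoining the singletons gives exactly $\mc{C}^1(\rev(g))$, and Algorithm~\ref{alg:hierarchy} decides whether this collection is a hierarchy (Lemma~\ref{lem:correct-hirarchy}); note that adding the singletons is harmless since a singleton overlaps nothing. Combining the two verified conditions with Theorem~\ref{thm:charactALL} then shows that, as long as the algorithm does not exit through an early return, it answers \emph{TRUE} exactly when $g$ is \unp.

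The part that needs genuine care — and the main obstacle — is justifying the two size tests (the label bound checked right after $\rev(g)$ is formed, and the cluster-multiset bound checked after the main loop), i.e.\ showing that they reject only non-\unp\ inputs. For the label test I would show that if $g$ is \unp\ then $|\Upsilon_{\rev(g)}|\le 2(n-1)$. Indeed, $\rev(g)$ is then \unp\ and reversible, so it has a tree representation $(T_{\rev(g)},t_{\rev(g)})$ whose inner vertices are the non-singleton strong modules; since every inner vertex has at least two children, Theorem~\ref{A:thm:hierarchy} bounds their number by $n-1$. Each inner vertex carries a label pair $(i,j)$, and by the definition of the reversible refinement every label of $\Upsilon_{\rev(g)}$ occurs on some arc, hence is one of the at most two coordinates of such a pair; therefore $|\Upsilon_{\rev(g)}|\le 2(n-1)$, and the first test never discards a \unp\ instance. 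For the cluster test I would invoke Lemma~\ref{lem:unp-1clust_new}: if $\rev(g)$ is \unp\ then $\mc{C}^1(\rev(g))=\Ms(\rev(g))$ has at most $n-1$ non-singleton clusters, and each of them appears as a $1$-cluster in at most two cotrees, so $|\mc{C}|=\sum_i|\mc{C}_i^1|\le 2(n-1)$; thus the second test likewise rejects only non-\unp\ inputs. Conversely, whenever either bound is violated the \emph{FALSE} answer is correct, because the violation itself certifies that the characterization of Theorem~\ref{thm:charactALL} cannot hold. Together these observations establish that Algorithm~\ref{alg:recog} returns \emph{TRUE} if and only if $g$ is \unp.
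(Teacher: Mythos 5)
Your proposal is correct and follows essentially the same route as the paper's own proof: reduce correctness to Theorem \ref{thm:charactALL} (the $\rev(g)$ characterization), invoke Lemmas \ref{lem:correct-cograph}, \ref{lem:correct-cluster}, and \ref{lem:correct-hirarchy} for the subroutines, bound $|\Upsilon_{\rev(g)}|$ by $2(n-1)$ via the at most $n-1$ inner vertices of the tree representation each carrying a label pair, and justify the cluster-multiset bound via Lemma \ref{lem:unp-1clust_new}. Your treatment of the two early-exit size tests is, if anything, slightly more careful than the paper's.
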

\begin{proof}
  In fact, Algorithm \ref{alg:recog} recognizes, for the  reversible
  refinement $rev(g)$, whether all monochromatic subgraphs 
  $G_i(rev(g))$ are di-cographs and whether in addition the 1-clusters 
  in $\mc{C}^1(rev(g))$ form a hierarchy. By Theorem \ref{thm:charactALL}, 
  this suffices to decide whether $g$ is \unp\ or not.

  It is easy to see that Algorithm \ref{alg:rev-g} computes the reversible
  refinement of $g$ by means of Definition \ref{def:rev-g} with
  $h(\vp_g(e),\vp_g(e^{-1})) = (\vp_g(e),\vp_g(e^{-1}))$.  Hence, in Line 2
  the reversible refinement $g'=\rev(g)$ of $g$ is computed.
	
  If $\rev(g)$ is \unp,
  then there exists a tree representation $(T_{\rev(g)},t_{\rev(g)})$.
  As $T_{\rev(g)}$ has at most $n-1$ inner vertices
  there can be at most $n-1$ different labels $t_{\rev(g)}(lca(x,y))=(i,j)$,
  each composed of at most two distinct labels $i,j \in \Upsilon_{rev(g)}$.
  Assuming that all labels are pairwise distinct leads to 
  $2(n-1) \leq |\Upsilon_{rev(g)}|$ distinct labels in total.
  Hence, if $|\Upsilon_{rev(g)}| > 2(n-1)$ then $rev(g)$ is not \unp.
  It is easy to see that, given the 2-structure $\rev(g)$, 
  Algorithm \ref{alg:subgraphs} (which is called in Line 6) 
  computes the respective
  monochromatic subgraphs $G_i(\rev(g))$.
  By Lemma \ref{lem:correct-cograph}, for each $G_i$ Algorithm \ref{alg:cograph}
  (which is called in Line 10) checks whether $G_i$ is a di-cograph or not, 
  and by Lemma \ref{lem:correct-cluster} in Line 11
  the corresponding 1-clusters $C_i^1$ are returned.
  In Line 15, the 1-clusters $\mc{C}_i^1$ of all di-cographs $G_i$
  are collectively stored in the multiset $\mc{C}$, without removing
  duplicated entries.
 
  Since $T_{\rev(g)}$ has at most most $n-1$ inner vertices and since
  each 1-cluster appears in at most 2 distinct cotrees whenever $\rev(g)$
  is \unp\ (cf.\ Lemma \ref{lem:unp-1clust_new}), we can conclude that $\mc{C}$
  can contain at most $2(n-1)$ elements. Hence, if 
  $|\mc{C}| > 2(n-1)$ then $\rev(g)$ is not \unp, and therefore, 
  $g$ is not \unp\ (Line 17).

  Finally, by Lemma \ref{lem:correct-hirarchy} it is checked in Line 20, if
  the set of 1-clusters $\mc{C}^1$ is a hierarchy.  Hence, \emph{TRUE} is
  returned if $g$ is \unp\ and \emph{FALSE} else.  
  \qed
\end{proof}

Before we show the time complexity of Algorithm \ref{alg:recog} we first
show the time complexity of the two subroutines Algorithm \ref{alg:cograph}
and Algorithm \ref{alg:hierarchy}.

\begin{lemma}
  \label{lem:complex-cograph}
  For a given digraph $G=(V,E)$ and its modular decomposition $\Ms$,
  Algorithm \ref{alg:cograph} runs in time $O(n+m)$ with $n=|V|$ and
  $m=|E|$.
\end{lemma}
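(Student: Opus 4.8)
The plan is to split the running time into three parts---set-up of the auxiliary data structures, the constant-size bookkeeping performed at each node, and the accumulation of arc counts in Line~10---and to bound each by $O(n+m)$.

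First I would fix the preliminaries. By Theorem~\ref{A:thm:hierarchy} the hierarchy $\Ms$ has at most $2n-1$ members, so the inclusion tree $T$ built in Line~2 has $O(n)$ vertices and $O(n)$ edges and, by Lemma~\ref{lem:inclusion-tree}, is constructed in $O(|\Ms|)=O(n)$ time. In one pass over $E$, costing $O(m)$, I would (a)~store the arcs in a structure answering membership queries ``$(x,y)\in E$?'' in $O(1)$ time and (b)~record $m=|E|$; and by a single bottom-up traversal of $T$ in $O(n)$ time I would precompute the cardinality $|M|$ of every module $M\in\Ms$ as the number of leaves of $T$ below $M$. These preparations together cost $O(n+m)$.

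Next I would bound the main loop (Lines~4--13). For each of the $O(n)$ vertices $M$ of $T$, choosing two distinct children and a representative in each is $O(1)$ given child pointers, and the test in Line~6 is a single $O(1)$ membership query; summed over all nodes this is $O(n)$. The only step whose cost is not manifestly constant is Line~10, which by the control flow of the algorithm is reached \emph{only} when $M$ is series or order. In that case the series, respectively order, composition fixes \emph{every} arc between vertices lying in distinct children, so the required number is available in closed form: writing $M_1,\dots,M_l$ for the children of $M$, a series module contributes $|M|^2-\sum_{r=1}^{l}|M_r|^2$ arcs and an order module contributes $\tfrac12\bigl(|M|^2-\sum_{r=1}^{l}|M_r|^2\bigr)$. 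Using the precomputed cardinalities, either value is evaluated in $O(l)$ time, so no individual arc has to be enumerated.

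Finally I would sum up. The cost of Line~10 over the whole tree is $\sum_{M\in V(T)}O(l_M)$, where $l_M$ denotes the number of children of $M$; since every non-root node is a child of exactly one parent, $\sum_{M}l_M=|V(T)|-1=O(n)$, so Line~10 contributes only $O(n)$ in total. The concluding comparison $e\neq|E|$ in Line~14 is $O(1)$ because $m$ was recorded during preprocessing. Adding the three contributions yields the claimed $O(n+m)$ bound. The step I expect to require the most care is Line~10: one has to argue that the arc count can be read off from a closed-form expression rather than obtained by enumeration---this is precisely where the seriality or orderness of $M$ together with the $O(n)$-precomputed cardinalities enter---so that the aggregate cost sums to $O(n)$ instead of the prohibitive $\sum_{M}O(|M|^2)$ that naive counting would incur.
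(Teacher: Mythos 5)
Your accounting omits the step that dominates the paper's own analysis: the test ``$M$ is not series or order'' in Algorithm \ref{alg:cograph}. You claim the only step whose cost is not manifestly constant is the arc-counting line, but the input to the algorithm is just the digraph $G$ and the hierarchy $\Ms$ of strong modules, with no type labels attached, so the type of $M$ has to be determined by inspecting arcs between its children; it is not an $O(1)$ primitive. Moreover, a prime module can easily pass the single sampled membership query (the one pair $x\in M'$, $y\in M''$ having an arc), so the algorithm genuinely has to verify seriality/orderness at that point, and your closed-form counts are correct only \emph{conditional} on that verification having succeeded. Without bounding the total cost of these verifications, the claimed $O(n+m)$ bound does not follow.

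The paper's proof is devoted almost entirely to this point: it implements the type test by scanning pairs of vertices lying in distinct children of $M$, and amortizes via a charging argument --- each pair $\{x,y\}$ is scanned in at most one module (the one whose children separate $x$ from $y$, i.e.\ their $\lca$ in the inclusion tree), and whenever $M$ is series or order every scanned pair carries at least one arc, so all scanned pairs, except $O(1)$ work per module where the scan may abort and return FALSE, are charged to distinct arcs. This bounds the total cost of all type tests, together with the arc counting fused into them, by $O(n+m)$. Your precomputed cardinalities and the formulas $|M|^2-\sum_{r}|M_r|^2$ (series) and $\tfrac12\bigl(|M|^2-\sum_{r}|M_r|^2\bigr)$ (order) are a correct and even elegant way to execute the counting step once the type test has been paid for, but they do not eliminate that test; and once the test is implemented by the pairwise scan, the enumeration you are optimizing away costs nothing extra anyway --- which is exactly the paper's argument.
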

\begin{proof}
  By Lemma \ref{lem:inclusion-tree} computing
  the inclusion tree $T$ of $\Ms(G)$ in Line 2 takes time $O(n)$
  as there are at most $O(n)$ strong modules.
  In the for-loop from Line 4 to 14 for each strong module $M$ it is checked,
  whether or not there is an arc between two arbitrary vertices
  from two distinct children of $M$ in $T$.
  This has to be done for all $O(n)$ strong modules $M \in \Ms(G)$.
  Only if there is an arc it is further checked whether $M$ is
  series or order.
  This can be done by checking all the arcs between vertices $x$ and $y$
  from distinct children of $M$ in $T$. 
  In both cases ($M$ being series and order)
  there is at least one arc $(x,y) \in E$ or $(y,x) \in E$,
  between any pair of vertices $x$ and $y$.
  Furthermore, as only vertices from distinct children of $M$ in $T$
  are considered, every pair $(x,y)$ is checked at most once once.
  Hence, the number of all pairwise checks is bounded by $O(m)$.
  For the same reason, counting the arcs (Line 11)
  can also be done in $O(m)$ time.
  This accounts to a running time of $O(n+m)$ in total. 
  \qed
\end{proof}

\begin{lemma}
  \label{lem:complex-hirarchy}
  For a given multiset of clusters $\mc{C}$ of size $N$
  on the ground set $\{1, \ldots, n\}$, 
  Algorithm \ref{alg:hierarchy} runs in time $O(n N)$.
\end{lemma}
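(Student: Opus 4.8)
The plan is to bound the running time of Algorithm \ref{alg:hierarchy} by splitting the execution into three phases: the preprocessing in Lines 2--4, the construction of the auxiliary lists $\mc{L}_1,\dots,\mc{L}_n$ in Lines 5--14, and the main \texttt{while}-loop in Lines 15--28. For the preprocessing, assigning the identifiers costs $O(N)$, whereas building the $N$ bit strings of length $n$ costs $O(nN)$; the cardinalities needed for the sort can be read off the bit strings within the same bound, and the sort itself is either an $O(N\log N)$ comparison sort or, since all cardinalities lie in $\{1,\dots,n\}$, an $O(N+n)$ counting sort. Thus the preprocessing is $O(nN)$. For the list-construction phase, the outer loop in Line 5 runs $n$ times, and each iteration copies $\mc{C}_\leq$ and filters it by the appropriate bit, which is $O(N)$ per iteration and hence $O(nN)$ in total; the root test in Line 6 adds only $O(1)$ per iteration. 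I would store each $\mc{L}_i$ as a linked list together with a pointer to its head and the cardinality of its first element, so that $\mc{L}_i(1)$ and $|\mc{L}_i(1)|$ are available in constant time in the next phase.

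The heart of the argument, and the main obstacle, is the \texttt{while}-loop: a naive estimate multiplies a crude iteration count by the per-iteration cost of selecting $s$ over $n$ lists and comparing $\mc{L}_s$ against $\mc{L}_t$ for every $t\in L$, which is far too large. I would resolve this with two independent amortization arguments. First, the loop performs at most $N$ iterations, because every non-terminating iteration deletes exactly one cluster occurrence from $\mc{L}_s$ in Line 27, the deleted occurrences are pairwise distinct, and there are only $N=|\mc{C}|$ occurrences in total; an early \emph{FALSE} return can only shorten the loop, so this bound is unconditional and does not presuppose that the input is a hierarchy. Since selecting $s$ in Line 16 scans the $n$ head elements once, at $O(n)$ per iteration, the total selection cost is $O(nN)$.

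Second, I would charge the expensive inner work---the element-by-element comparison of $\mc{L}_s$ with $\mc{L}_t$ preceding the assignment $\mc{L}_t\gets\emptyset$ in Line 24---to the event of emptying a list. The key observation is that a given $\mc{L}_t$ is set to $\emptyset$ at most once and, once empty, is skipped forever by the guard $\mc{L}_t\neq\emptyset$ in Line 18. Hence across the \emph{entire} execution there are at most $n$ comparison-and-empty events, one per list, and each comparison touches at most $|\mc{L}_s|\leq N$ positions, giving a total inner-comparison cost of $O(nN)$. The repeated occurrences of one cluster, which arise because $\mc{C}$ is a multiset, are handled automatically: after the first occurrence of a cluster triggers the emptying of the relevant $\mc{L}_t$, each subsequent identical occurrence sitting at the head of $\mc{L}_s$ is removed in Line 27 with no comparison at all, so it contributes only to the $O(nN)$ selection bound and not to the comparison bound.

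Combining the three phases, each is $O(nN)$, and therefore Algorithm \ref{alg:hierarchy} runs in time $O(nN)$, as claimed. In the full write-up the two points that need careful justification are exactly the two amortization claims above: that the single deletion in Line 27 forces the iteration count below $N$ even on non-hierarchy inputs, and that maintaining head pointers makes the selection of $s$ in Line 16 genuinely $O(n)$ while the guard in Line 18 ensures each list is compared-and-emptied at most once.
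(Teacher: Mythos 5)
Your proposal is correct and follows essentially the same argument as the paper's proof: the same three-phase decomposition ($O(nN)$ preprocessing, $O(nN)$ list construction, and a \texttt{while}-loop bounded by at most $N$ iterations whose inner comparisons are amortized over the at most $n$ list-emptying events, each costing $O(N)$). Your write-up is in fact somewhat more careful than the paper's, since you explicitly account for the $O(n)$ per-iteration cost of selecting $s$ in Line 16 and for multiset duplicates, both of which the paper leaves implicit.
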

\begin{proof}
  Computing the identifier $id$ for each cluster in $\mc{C}$ (Line 2) takes
  time $O(N)$, computing the bit string representation for each cluster in
  $\mc{C}$ (Line 3) takes time $O(n N)$, and sorting the clusters of
  $\mc{C}$ (Line 4) using bucket sort with $n$ buckets takes time $O(N+n)$.
  The for-loop from Line 5 to Line 15 runs in time $O(n N)$, as there are
  $O(N)$ clusters in $\mc{C}$ which possibly have to be removed in Line 12
  from the respective lists $\mathcal L_i$.  The while-loop (Lines 16 to
  28) is executed at most $O(N)$ times, as in each iteration one of the $N$
  clusters is removed from all the lists $\mathcal L_i$ that contain it
  (Line 24 and 27).  The for-loop from Line 19 to Line 26 is executed for
  all of the $O(n)$ many elements $t \in L$.  However, as in each execution
  of the inner loop (Lines 20 to 25) one of the $n$ lists $\mathcal L_i$
  gets empty, Lines 20 to 25 are executed $n$ times in total and each
  execution takes $O(N)$ time.  Hence, the time that Algorithm
  \ref{alg:hierarchy} spends on computing Lines 20 to 25 is bounded by
  $O(n N)$.  This sums up to a total running time of $O(n N)$ for Algorithm
  \ref{alg:hierarchy}.  \qed\end{proof}

Finally, we show the time complexity of $O(n^2)$ for Algorithm \ref{alg:recog}.

\begin{lemma}
\label{lem:complex-recog}
  For a given 2-structure $g=(V,\Upsilon,\vp)$ with $n=|V|$, Algorithm \ref{alg:recog} runs in time $O(n^2)$.
\end{lemma}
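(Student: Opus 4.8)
The plan is to charge the running time of Algorithm \ref{alg:recog} line by line, exploiting the fact that the per-label work, although it may be as large as $O(n^2)$ for a single dense monochromatic subgraph, aggregates to only $O(n^2)$ over \emph{all} labels, because the subgraphs $G_i(\rev(g))$ partition the arc set $\Virr$. Write $k'\coloneqq |\Upsilon_{\rev(g)}|$ and $m_i\coloneqq |E_i|$ for the number of arcs of $G_i(\rev(g))$. First I would note that computing $\rev(g)$ with Algorithm \ref{alg:rev-g} (Line 2) is a double loop over $V\times V$ and hence runs in $O(n^2)$, and that the label set $\Upsilon_{\rev(g)}$ is assembled in the process, so the test in Line 3 is decided in constant time. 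Once the algorithm survives this test we may assume $k'\le 2(n-1)=O(n)$. Building the monochromatic subgraphs with Algorithm \ref{alg:subgraphs} (Line 6) is again a double loop over $V\times V$ and costs $O(n^2)$.

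The heart of the argument is the main for-loop, and the key observation there is that the arc sets $E_i$ partition $\Virr$, so that $\sum_i m_i = n(n-1) = O(n^2)$. Each iteration computes the modular decomposition of $G_i(\rev(g))$ in time $O(n+m_i)$ via the algorithm of McConnell and de Montgolfier \cite{McConnell:05} (Line 9), runs the di-cograph test of Algorithm \ref{alg:cograph} (Line 10) in time $O(n+m_i)$ by Lemma \ref{lem:complex-cograph}, and extracts the $1$-clusters with Algorithm \ref{alg:cluster} (Line 11); this last step merely traverses the inclusion tree of $\Ms(G_i(\rev(g)))$, which has $O(n)$ nodes and is built in $O(n)$ time by Lemma \ref{lem:inclusion-tree}, performing one $O(1)$ arc test per node, and so also costs $O(n+m_i)$. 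Summing over the $k'=O(n)$ iterations yields
\[
  \sum_i O(n+m_i) \;=\; O(k'\cdot n) + O\!\Big(\textstyle\sum_i m_i\Big) \;=\; O(n^2)+O(n^2) \;=\; O(n^2).
\]
Merging the $1$-clusters into the multiset $\mc{C}$ (Line 15) adds $|\mc{C}^1_i|=O(n)$ entries per iteration, again $O(n^2)$ altogether, so that $|\mc{C}|=O(n^2)$ when the loop terminates.

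It then remains to treat the size check in Line 17 and the final hierarchy test in Line 20. Evaluating $|\mc{C}|$ and comparing it with $2(n-1)$ costs $O(|\mc{C}|)=O(n^2)$, and if the algorithm proceeds we have $N\coloneqq |\mc{C}|\le 2(n-1)=O(n)$, whence Algorithm \ref{alg:hierarchy} runs in $O(nN)=O(n^2)$ by Lemma \ref{lem:complex-hirarchy}. Summing all contributions gives the claimed $O(n^2)$ bound. I expect the only subtle point to be exactly the amortization used above: one must resist charging $O(n^2)$ for \emph{each} of the $O(n)$ labels. This is precisely what the partition property prevents, since every per-label cost has the form $O(n+m_i)$ and the two resulting aggregates $k'\cdot n$ and $\sum_i m_i$ are both $O(n^2)$.
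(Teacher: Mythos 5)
Your proposal is correct and follows essentially the same route as the paper's own proof: a line-by-line accounting in which the decisive step is the amortization $\sum_i m_i = n(n-1)$ over the monochromatic subgraphs, combined with the label-count and cluster-count bounds $2(n-1)$ and the quoted complexity lemmas (Lemmas \ref{lem:inclusion-tree}, \ref{lem:complex-cograph}, \ref{lem:complex-hirarchy} and the McConnell--de Montgolfier decomposition). The only cosmetic differences are that you bound Algorithm \ref{alg:cluster} by $O(n+m_i)$ where the paper notes the sharper $O(n)$, and that you spell out the cost of the size checks in Lines 3 and 17 slightly more explicitly; neither changes the argument.
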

\begin{proof}
  Computing the reversible refinement of $g$ in Line 2
  takes $O(n^2)$ time using Algorithm \ref{alg:rev-g}.
  In Line 3 it is assured that there are at most $2(n-1)$ labels
  and hence $N = |\Upsilon_{rev(g)}| < 2(n-1)$ monochromatic subgraphs 
  $G_i(rev(g))$.
  Computing those $O(n)$ subgraphs at once
  using Algorithm \ref{alg:subgraphs}
  in Line 6 takes $O(n^2)$ time.
  The for-loop from Line 8 to Line 16 runs for each of the $O(n)$
  many digraphs $G_i(rev(g))$.
  As already stated, there is an $O(n+m)$ time complexity algorithm for
  computing the modular decomposition of a digraph (Line 9) given in
  \cite{McConnell:05}. 	
  By Lemma \ref{lem:complex-cograph} Algorithm \ref{alg:cograph} (Line 10)
  has also a time complexity of $O(n+m)$.
  Algorithm \ref{alg:cluster} (Line 11) has a time complexity of $O(n)$,
  as by Lemma \ref{lem:inclusion-tree}
  constructing the inclusion tree within Line 3 of Algorithm \ref{alg:cluster}
  takes time $O(n)$ as there are at most $O(n)$ strong modules within $G_i$.
  Hence, all procedures within the for-loop (Lines 8 to 16) have a time 
  complexity of $O(n+m)$. Precisely, the time complexity is 
  $O(n+m_i)$ with $m_i=|E(G_i(rev(g)))|$ the number of arcs of $G_i(rev(g))$.
  The total running time of the for-loop therefore is
  $O(n+m_1) + O(n+m_2) + \ldots O(n+m_N) = O(n^2 + \sum_{i=1}^N m_i)$.
  As each arc $(x,y)$ occurs	in exactly one of the
  digraphs $G_i(rev(g))$ it follows that $\sum_{i=1}^N m_i = n(n-1)$,
  which leads to a running time of $O(n^2)$ for Line 8 to 16.
  Line 17 assures that the multiset $\mc{C}$ contains at most $2(n-1)$ clusters.
  Hence, $|\mc{C}| \in O(n)$.
  Therefore, and by Lemma \ref{lem:complex-hirarchy}
  Algorithm \ref{alg:hierarchy} runs in time $O(n^2)$.
  This leads to a time complexity of $O(n^2)$ for Algorithm \ref{alg:recog}.
  \qed
\end{proof}

We finish this section, by showing how the tree-representation of a \unp\
2-structure $g=(V,\Upsilon, \vp)$ can be computed.  By the preceding
results, we can compute the hierarchy $\mc{C}^1(\rev(g))$ in $O(|V|^2)$
time.  By Lemma \ref{lem:unp-1clust_new},
$\mc{C}^1(\rev(g))=\Ms(\rev(g))$. By Theorem
\ref{thm:points}(\ref{item:same-modules}), $\M(g)=\M(\rev(g))$ and hence,
$\Ms(\rev(g))=\Ms(g)$, which implies that $\mc{C}^1(\rev(g))=\mc{C}^1(g)$.
By Theorem \ref{A:thm:hierarchy}, the number of clusters contained in
$\mc{C}^1(g)$ is bounded by $2|V| - 1$.  Thus, we can compute the inclusion
tree $T_g$ of $\mc{C}^1(g)$ in $O(|V|)$ time by means of Lemma
\ref{lem:inclusion-tree}. In order to get the correct labeling $t_g$ we
proceed as follows.
We traverse $T_g$ via breadth-first search, starting with the root $v$
of $T_g$ that represents $V$. Take any two children $u_1, u_2$ of $v$
and any two vertices $x\in L(u_1)$, $y\in L(u_2)$ and check the labeling
of the arcs $(xy)$ and $(yx)$. Assume that $\vp(xy)=i$ and 
$\vp(yx)=j$, where $i\leq j$. Then set $t_g(v)=(i,j)$ and place
$u_1$ left of $u_2$ in $T_g$. An ordering of the children of $v$
is not necessary if $i=j$. This step has to be repeated for all
pairs of children of $v$ and thus has
total time complexity of $O(\deg(v)^2)$, where 
$\deg(v)$ denotes the number of children of $v$. 
After this we proceed with a child of $v$, playing now the role of $v$. 
Hence, the ordering of the tree $T_g$ and the labeling $t_g$ can be computed in 
$\sum_{v \in V(T_g)} \deg(v)^2 \leq \sum_{v \in V(T_g)} |V|\deg(v) = |V|\cdot 2(|V|-1)$ and 
thus, in $O(|V|^2)$ time. 

Taken the latter together with the preceding results, we obtain the
following main result of this section.
\begin{theorem}
  For a given 2-structure $g=(V,\Upsilon, \vp)$ it can be verified in 
  $O(|V|^2)$ time whether $g$ is \unp\ or not, and in the positive
  case, the tree representation $(T_g,t_g)$ can be computed in 
  $O(|V|^2)$ time.
  \label{thm:algALL}
\end{theorem}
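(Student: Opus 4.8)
The plan is to decompose the theorem into its two assertions---recognition and construction---and to assemble each from the correctness and complexity lemmas already in hand. For the recognition part I would simply invoke Lemma \ref{lem:correct-recog}, which guarantees that Algorithm \ref{alg:recog} returns \emph{TRUE} precisely when $g$ is \unp, together with Lemma \ref{lem:complex-recog}, which bounds its running time by $O(|V|^2)$. These two results already do all the heavy lifting, so the only point worth stressing is that the intermediate data computed during recognition---the reversible refinement $\rev(g)$, the di-cograph modular decompositions, and the multiset of $1$-clusters---are exactly what feed the construction phase, so no recomputation is incurred.

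For the construction part I would argue as follows. Assume $g$ is \unp. By Theorem \ref{thm:charactALL} the collection $\mc{C}^1(\rev(g))$ is a hierarchy with $\mc{C}^1(\rev(g)) = \Ms(g)$; together with Theorem \ref{thm:points}(\ref{item:same-modules}) this identifies the hierarchy with the strong modules of $g$ itself. Since a hierarchy on $V$ contains at most $2|V|-1$ clusters by Theorem \ref{A:thm:hierarchy}, Lemma \ref{lem:inclusion-tree} lets me build the (unordered) inclusion tree $T_g$ from $\mc{C}^1(\rev(g))$ in $O(|V|)$ time. What then remains is to install the left-to-right ordering of children and the labeling $t_g$.

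To obtain $t_g$ I would traverse $T_g$ by breadth-first search and process each inner vertex $v$ in turn. For every pair $u_1,u_2$ of children of $v$ I pick representatives $x\in L(u_1)$ and $y\in L(u_2)$ and read off $\vp(xy)$ and $\vp(yx)$; because $L(v)$ is a strong module whose quotient $g[L(v)]/\Mmax(g[L(v)])$ is linear or complete, every pair of representatives drawn from the same two children carries the same ordered label pair, so a single probe per pair of children is correct. I then set $t_g(v)=(i,j)$ with $i\le j$, placing $u_1$ to the left of $u_2$ exactly when $i\ne j$ (no ordering being needed when $i=j$). Processing $v$ costs $O(\deg(v)^2)$ probes, and since $\deg(v)\le |V|$ the total is $\sum_{v}\deg(v)^2 \le |V|\sum_{v}\deg(v) \le |V|\cdot 2(|V|-1)$, i.e.\ $O(|V|^2)$.

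I expect the main obstacle to be the labeling step rather than the tree construction: one must justify that sampling a single representative pair per pair of children genuinely determines $t_g(v)$. This rests on the ``generic'' nature of strong modules in \unp\ structures---the fact that each quotient is linear or complete, as recorded in the definition of the tree-representation---so that no prime node can force different representatives to disagree. By contrast, the complexity bookkeeping is routine once the summation $\sum_v \deg(v)^2 = O(|V|^2)$ is noted, since the number of strong modules, and hence of inner vertices of $T_g$, is linear in $|V|$.
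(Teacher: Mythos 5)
Your proposal is correct and follows essentially the same route as the paper's own proof: recognition via Lemmas \ref{lem:correct-recog} and \ref{lem:complex-recog}, then construction by identifying $\mc{C}^1(\rev(g))$ with $\Ms(g)$, building the inclusion tree via Lemma \ref{lem:inclusion-tree}, and installing the ordering and labeling $t_g$ by a breadth-first traversal with one representative probe per pair of children, bounded by $\sum_{v}\deg(v)^2 \le |V|\cdot 2(|V|-1)$. Your explicit justification that a single probe per pair of children suffices (since children of an inner vertex are disjoint modules and the quotient at each inner vertex of a \unp\ structure is linear or complete) is a point the paper leaves implicit, but it does not change the argument.
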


\subsection{Tree-representable Sets of Relations, Complexity Results and ILP}

From the practical point of view, 2-Structures $g=(V,\Upsilon,\vp)$ with
$\Upsilon=\{0,1,\dots,k\}$ can be used to represent sets of disjoint
relations $R_1,\dots,R_k$, that is, $\vp$ is chosen s.t.\
\begin{align*}
  \vp(xy) = \begin{cases}
    i\neq 0,  & \text{ if } (x,y)\in R_i\\
    0 & \text{ else, i.e., } (x,y)\not\in R_i, 1\leq i\leq k
  \end{cases}
\end{align*}

Moreover, 2-structures can even be used to represent \emph{non-disjoint}
relations as follows.  Assume that we have arbitrary binary relations
$R_1,\dots,R_k$ over some set $V$. For two vertices $x,y\in V$ let
$I_{xy}\subseteq \{1,\dots,k\}$ be the inclusion-maximal subset s.t.\
$(x,y)\in R_i$ for all $i\in I_{xy}$. Furthermore, let $b_{xy}$ be the
(unique) integer encoded by the bit vector $b_1b_2\dots b_k$ with $b_i=1$
iff $i\in I_{xy}$.  Clearly, $g=(V,\mathbb{N},\vp)$ with $\vp(xy)=b_{xy}$
is a 2-structure, and the disjoint relations can be represented in a tree
if and only if $g$ is \unp.

Those relations might represent the evolutionary relationships between
genes, i.e., genes $x,y$ are in relation $R_i$ if the lowest common
ancestor $\lca(x,y)$ in the corresponding gene tree was labeled with a
particular event $i$, as e.g.  speciation, duplication, horizontal-gene
transfer, retro-transposition, and others.  By way of example, methods as
\texttt{ProteinOrtho} \cite{Lechner:11a,Lechner:14} allow to estimate pairs
of orthologs without inferring a gene or species tree.  Hence, in practice
such relations represent often only estimates of the true relationship
between genes.  Thus, in general the 2-structure of such estimates
$R_1\dots,R_k$ will not be \unp\ and hence, there is no tree-representation
of such estimates. One possibility to attack this problem is to optimally
edit the estimate $g = \{V,\Upsilon, \vp\}$ to a \unp\ 2-structure $g^* =
\{V,\Upsilon, \vp^*\}$ by changing the minimum number of colors assigned by
$\vp$.

We first consider the problem to rearrange a symmetric map $d$ to obtain a
symmetric symbolic ultrametric $\delta$.

\begin{problem}\textsc{Symmetric Symbolic Ultrametric  
    Editing/Deletion/Completion [SymSU-E/D/C]}
\vspace*{0.3em}
\begin{tabular}{lll}
  \emph{Input:}
  &\multicolumn{2}{l}{Given a symmetric map $d : \Virr \to \Upsilon$, 
    a fixed symbol $\ast\in \Upsilon$ and }\\
  & an integer $k$. & \\ 
  \emph{Question} & \multicolumn{2}{l}{Is there a symmetric symbolic 
    ultrametric $\delta : \Virr \to \Upsilon$, s.t.\ } \\[0.1cm]
  & $\bullet$ $|D| \leq k$ & \emph{(Editing)} \\ 
  & $\bullet$ if $d(x,y)\neq \ast$, 
    then $\delta(x,y)=d(x,y)$; and  $|D| \leq k$ &\emph{(Completion)}\\
  & $\bullet$ $\delta(x,y)=d(x,y)$ or $\delta(x,y) = \ast$; and $|D|\leq k$ 
  &\emph{(Deletion)}\\[0.1cm]
  &\multicolumn{2}{l}{where $D = \{(x,y)\in X\times X \mid d(x,y) 
                      \neq \delta(x,y)\}$. } 				
\end{tabular}
\end{problem}

The editing problem is clear. The completion problem is motivated by
assuming that we might have an reliable assignment $d$ on a subset $W$ of
$\Virr$, however, the assignment for the pairs $(x,y)\in \overline{W} =
\Virr \setminus W$ is unreliable or even unknown. For those pairs we use an
extra symbol $\ast$ and set $d(x, y)=\ast$ for all $(x,y)\in \overline{W}$.
Since we trust in the assignment $d$ for all elements in $W$ we aim at
changing the least number of non-reliable estimates only, that is, only
pairs $(x,y)$ with $d(x,y)=\ast$ are allowed to be changed so that the
resulting map becomes a symmetric symbolic ultrametric.  Conversely, the
deletion problem asks to change a minimum number of assignments $d(x, y)
\neq \ast$ to $\delta(x, y) = \ast$.

The following result was given in \cite{HW:16}. 
\begin{theorem}
\textsc{SymSU-E/D/C} is NP-complete.
\end{theorem}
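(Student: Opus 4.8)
The plan is to show membership in NP first, and then to establish NP-hardness of each of the three variants by a reduction from the corresponding edge-modification problem for cographs, exploiting the fact that on two-label instances a symmetric symbolic ultrametric is nothing but a cograph.

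For membership, observe that a candidate map $\delta\colon\Virr\to\Upsilon$ can be certified to be a symmetric symbolic ultrametric in polynomial time. Symmetry (U3) and the side constraint of the respective variant are checked pair-by-pair in $O(|V|^2)$; the Triangle-Condition (U2') is checked for all $O(|V|^3)$ triples; and (U1') amounts to testing that every color class $G_i(\delta)$ is $P_4$-free, i.e.\ a cograph, which is doable in linear time per color and hence $O(|V|^2)$ in total. Computing $|D|$ and comparing it with $k$ is immediate. Thus all three decision problems lie in NP.

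For hardness, the key observation is that restricting to two labels collapses the notion: fix $\Upsilon=\{0,1\}$. Then (U2') holds vacuously, since no triple can exhibit more than two labels; and the forbidden configuration of (U1') on $\{x,y,u,v\}$ is precisely an induced monochromatic $P_4$, which occurs in color $1$ exactly when $G_1(\delta)$ has an induced $P_4$ and in color $0$ exactly when $G_0(\delta)=\overline{G_1(\delta)}$ does. Since $P_4$ is self-complementary, both cases are equivalent to $G_1(\delta)$ failing to be a cograph. Hence a two-label symmetric map $\delta$ is a symmetric symbolic ultrametric if and only if $G_1(\delta)$ is a cograph. Using this dictionary, given a graph $G=(V,E)$ I would define the symmetric two-label map $d$ by $d(xy)=1$ iff $\{x,y\}\in E$ and $d(xy)=0$ otherwise. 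As both $d$ and the sought $\delta$ are symmetric, modifying a single unordered pair flips two ordered arcs, so $|D|=2\,|E\symdiff E(H)|$ with $H=G_1(\delta)$. Consequently: (i) with no side constraint, $d$ edits to a symmetric symbolic ultrametric with $|D|\le 2k'$ iff $G$ edits to a cograph with at most $k'$ edge modifications, so \textsc{SymSU-E} is hard by NP-completeness of \textsc{Cograph Editing} (Liu, Wang, Guo and Chen); (ii) taking $\ast=0$, the deletion constraint $\delta(xy)\in\{d(xy),0\}$ forces non-edges to remain while permitting edge removals, so \textsc{SymSU-D} with $k=2k'$ is exactly \textsc{Cograph (edge-)Deletion}; (iii) again with $\ast=0$, the completion constraint fixes the edges and frees the non-edges, so \textsc{SymSU-C} with $k=2k'$ is \textsc{Cograph Completion}. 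All three maps are computable in polynomial time, giving NP-hardness in each case.

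The main obstacle is not the combinatorics of the reduction, which is routine once the two-label dictionary is in place, but the need to anchor the two monotone variants in the right external hardness results, since the deletion and completion problems are less commonly quoted than editing. I would make the argument self-contained by deriving completion from deletion through complementation: adding edges to $G$ to reach a cograph is the same as deleting edges from $\overline{G}$ to reach a cograph ($P_4$ being self-complementary), so \textsc{Cograph Completion} and \textsc{Cograph Deletion} are polynomially equivalent. The remaining dependency is the NP-completeness of $P_4$-free edge-deletion, which I would obtain from the El-Mallah--Colbourn framework for $H$-free edge-deletion problems, or, failing a direct citation, by adapting the gadget reduction for \textsc{Cograph Editing} so that it never requires edge additions. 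A final point to verify carefully is the factor-of-two bookkeeping between unordered edge edits and the ordered arcs counted by $D$, together with the fact that instances with $|\Upsilon|=2$ are legitimate instances of the general problem, so that hardness of the restriction propagates to \textsc{SymSU-E/D/C}.
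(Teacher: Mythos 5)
Your proof is correct, but there is an important point of comparison to flag: this paper contains no proof of the statement at all --- the theorem is imported verbatim from \cite{HW:16} (the sentence immediately preceding it reads ``The following result was given in \cite{HW:16}''), so the only in-paper ``proof'' is that citation, and your reconstruction follows essentially the same reduction strategy as that reference. Your key dictionary checks out: with $\Upsilon=\{0,1\}$, condition (U2') is vacuous, and the forbidden configuration of (U1') is exactly an induced $P_4$ in $G_1(\delta)$ (self-complementarity of $P_4$ making the two color classes interchangeable), so a two-label symmetric map is a symmetric symbolic ultrametric if and only if $G_1(\delta)$ is a cograph. The three side constraints with $\ast=0$ then specialize precisely to Cograph Editing, Cograph Deletion, and Cograph Completion, and the factor-two bookkeeping is exact because the problem definition requires $\delta$ to satisfy (U3), so every modified unordered pair contributes exactly two ordered pairs to $D$. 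The external anchors you invoke are all legitimate: NP-completeness of Cograph Editing (Liu, Wang, Guo, Chen), NP-completeness of $P_4$-free edge deletion (El-Mallah and Colbourn), and your complementation argument $\overline{G+F}=\overline{G}-F$ correctly transfers hardness from Deletion to Completion. The only caveat is that your hardness argument rests on these cited results rather than being fully self-contained, but that is standard practice and no more of a gap than the paper's own deferral to \cite{HW:16}.
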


We can use this result in order to show that the analogous problems for
2-structures are NP-complete, as well.

\begin{problem}\textsc{\unp\ 2-Structure  Editing/Deletion/Completion [u2s-E/D/C]  \\}
\vspace{0.3em}
\begin{tabular}{lll}
	\emph{Input:}&  \multicolumn{2}{l}{Given a 2-structure $g= (V, \Upsilon, \vp_g)$, a fixed symbol $\ast\in \Upsilon$ and }\\
				& an integer $k$. & \\ 
	\emph{Question} & \multicolumn{2}{l}{Is there a \unp\ 2-structure $h= (V, \Upsilon, \vp_h)$, s.t.\ } \\[0.1cm]
					 & $\bullet$  $|D| \leq k$ & \emph{(Editing)} \\ 
					 & $\bullet$ if $\vp_g(x,y)\neq \ast$, then $\vp_h(x,y)=\vp_g(x,y)$; and  $|D| \leq k$ &\emph{(Completion)}\\
						& $\bullet$ $\vp_h(x,y)=\vp_g(x,y)$ or $\vp_h(x,y) = \ast$; and  $|D| \leq k$ &\emph{(Deletion)}\\[0.1cm] 
	&\multicolumn{2}{l}{where $D = \{(x,y)\in X\times X \mid \vp_g(x,y) \neq \vp_h(x,y)\}$.} 
				
\end{tabular}
\end{problem}

\begin{theorem}
\textsc{ u2s-E/D/C } is NP-complete.
\label{thm:npc}
\end{theorem}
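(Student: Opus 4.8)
The plan is to prove membership in \textbf{NP} and \textbf{NP}-hardness separately, reducing from \textsc{SymSU-E/D/C}. For membership, given a candidate 2-structure $h=(V,\Upsilon,\vp_h)$ one verifies in polynomial time that $h$ is \unp{} by the recognition procedure of Theorem \ref{thm:algALL} (running in $O(|V|^2)$ time), that $|D|\le k$, and that the variant-specific side constraint (keeping all non-$\ast$ arcs for Completion, or relabelling changed arcs only to $\ast$ for Deletion) holds. Hence each of the three variants lies in \textbf{NP}.

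For hardness I would map an instance $(d,\ast,k)$ of \textsc{SymSU-E/D/C}, with $d:\Virr\to\Upsilon$ symmetric, to the instance $(g,\ast,k)$ of \textsc{u2s-E/D/C} where $g=(V,\Upsilon,d)$. Since $d$ is symmetric, $g$ is a symmetric, hence reversible, 2-structure, so Theorem \ref{thm:charactALL} applies: $g$ is \unp{} iff $\delta_g$ is a symbolic ultrametric. Both problems count changed \emph{ordered} pairs via the same quantity $|D|$, so the budget $k$ transfers verbatim. The easy direction is immediate: a symmetric symbolic ultrametric $\delta$ with $|D|\le k$ satisfies (U1)--(U3), in particular (U1) and (U2), so $(V,\Upsilon,\delta)$ is \unp{} by Theorem \ref{thm:charactALL}, and it respects the Completion/Deletion constraints whenever $\delta$ does. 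Thus a \textsc{yes} for \textsc{SymSU} gives a \textsc{yes} for \textsc{u2s}.

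The hard part, and the crux of the reduction, is the converse: turning an arbitrary \unp{} solution $h$ for $g$ into a \emph{symmetric} symbolic ultrametric without increasing the number of edits. My plan is a symmetrisation argument on the tree-representation. Since $h$ is \unp{} it has a tree-representation $(T_h,t_h)$, and every pair $\{x,y\}$ with $\lca(x,y)=v$ carries the label $t_h(v)=(i,j)$. I would build a symmetric map $\delta'$ by relabelling each inner vertex $v$ with a single symbol $s(v)$ that minimises the disagreement $\sum_{\lca(x,y)=v}[\,s\ne d(x,y)\,]$ over the cross pairs below $v$. Any symmetric labelling of a fixed tree is automatically a symmetric symbolic ultrametric (three leaves have at most two distinct least common ancestors, giving (U2), while each $G_i(\delta')$ is an undirected cograph and thus a di-cograph, giving (U1), and (U3) holds by construction), so $\delta'$ is feasible for \textsc{SymSU}. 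For the cost, observe that $h$ spends $\sum_{\lca(x,y)=v}\big([\,i\ne d(x,y)\,]+[\,j\ne d(x,y)\,]\big)$ edits below $v$, which is at least $2\min_s\sum_{\lca(x,y)=v}[\,s\ne d(x,y)\,]$, i.e.\ at least the cost of $\delta'$ below $v$; summing over all $v$ shows $\delta'$ uses no more edits than $h$.

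Finally I would check that this symmetrisation respects the Completion and Deletion side constraints, which is where the argument must be run with care. The key observation is that feasibility of $h$ already rigidifies each vertex: symmetric vertices of $h$ are already symmetric, so I leave them untouched with $s(v)=i=j$ at identical cost; and an order vertex ($i\ne j$) forces, under Completion, that no cross pair below $v$ carries a non-$\ast$ value, and, under Deletion, that all cross pairs below $v$ share a single non-$\ast$ value $c$ with $\{i,j\}=\{c,\ast\}$. In either case choosing $s(v)=\ast$ (Completion) respectively $s(v)=c$ (Deletion) keeps $\delta'$ admissible for the corresponding variant and costs no more than $h$ below $v$. The main obstacle is precisely establishing this symmetrisation lemma and verifying its compatibility with all three variants; everything else follows directly from Theorem \ref{thm:charactALL} and the \textbf{NP}-completeness of \textsc{SymSU-E/D/C}.
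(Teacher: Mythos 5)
Your proof is correct, and at the top level it is the same argument as the paper's: membership in NP via the polynomial-time recognition of Theorem \ref{thm:algALL}, and hardness via the identity-style reduction sending the \textsc{SymSU-E/D/C} instance $d$ to the \textsc{u2s-E/D/C} instance $g=(V,\Upsilon,d)$ with the same budget $k$. Where you genuinely go beyond the paper is in the soundness direction of this reduction. The paper's hardness argument is one sentence: it invokes Theorem \ref{thm:charact} ($\delta$ is a symbolic ultrametric iff $(V,\Upsilon,\delta)$ is \unp) and concludes. That settles only the easy direction, since a \unp\ solution $h$ for $g$ may be asymmetric, in which case $\delta_h$ is a symbolic ultrametric but not a \emph{symmetric} one, and so is not directly feasible for \textsc{SymSU}; the paper leaves this to the reader. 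Your symmetrisation lemma is exactly the missing piece: working on the tree-representation $(T_h,t_h)$, you replace each order label $(i,j)$ by a single symbol, and the per-vertex cost comparison (with $N$ cross pairs below $v$ of which $n_c$ carry input label $c$, the cost $2N-n_i-n_j$ of an order label is at least the best symmetric cost $2N-2\max_c n_c$) shows the number of edited ordered pairs does not increase; moreover your feasibility analysis for the variants is sound, since under Completion an order vertex can only cover pairs with $d=\ast$ (so $s(v)=\ast$ is admissible and free), and under Deletion an order vertex forces $\{i,j\}=\{c,\ast\}$ for one common non-$\ast$ value $c$ on all its cross pairs (so $s(v)=c$ is admissible and cheaper). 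In short: same reduction, but your version supplies the lemma that actually justifies its correctness, whereas the paper trades that rigor for brevity.
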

\begin{proof}
  Since we can test whether a 2-structure is \unp\ with Algorithm 
  \ref{alg:recog} in polynomial time, \textsc{ u2s-E/D/C } $\in NP$.
  
  To show NP-hardness, we simply reduce the instance $d:\Virr\to\Upsilon$
  of \textsc{SymSU} to the instance $g= (V, \Upsilon, d)$ of
  \textsc{u2s}. By Theorem \ref{thm:charact}, $\delta$ is a symbolic
  ultrametric if and only if $h= (V, \Upsilon, \delta)$ is \unp\ and thus,
  we obtain the NP-hardness of \textsc{u2s-E/D/C}.  
  \qed
\end{proof}

The latter proof in particular implies that \textsc{u2s-E/D/C} is even
NP-complete in the case that $\vp_g(xy)=\vp_g(yx)$ for all distinct $x,y\in
V$.

We showed in \cite{Hellmuth:15a} that the cograph editing problem and in
\cite{HW:16} that the symmetric symbolic ultrametric
editing/completion/deletion is amenable to formulations as Integer Linear
Program (ILP). We will extend these results here to solve the symbolic
ultrametric editing/completion/deletion problem.

Let $d:\Virr\to \Upsilon$ be an arbitrary map with
$\Upsilon=\{\ast,1,\ldots,n\}$ and $K_{|V|}=(V,E=\Virr)$ be the
corresponding complete di-graph with arc-coloring s.t.\ each arc $(x,y)\in
E$ obtains color $d(x,y)$.

For each of the three problems and hence, a given symmetric map $d$ we
define for each distinct $x,y\in V$ and $i\in \Upsilon$ the binary
constants $\mathfrak{d}^i_{x,y}$ with $\mathfrak{d}^i_{x,y}=1$ if and only
if $d(x,y)=i$.  Moreover, we define the binary variables $E^{ij}_{xy}$ for
all $i,j \in \Upsilon$ and $x,y \in V$ that reflect the coloring of the
arcs in $K_{|V|}$ of the final symbolic ultrametric $\delta$, i.e.,
$E^{ij}_{xy}$ is set to $1$ if and only if $\delta(x,y)=i$ and
$\delta(y,x)=j$.  In the following we will write $E^i_{xy}$ as a shortcut
for $\sum_{j \in \Upsilon}E^{ij}_{xy}$.  Note that $E^i_{xy} \in \{0,1\}$
and $E^i_{xy}=1$ if and only if $\delta(x,y)=i$.

In order to find the closest symbolic ultrametric $\delta$, 
the objective function is to minimize the symmetric difference of
$d$ and $\delta$ among all different symbols $i\in \Upsilon$:
\begin{equation}
\min \sum_{i \in \Upsilon} \Bigg( \sum_{(x,y) \in \Virr}
(1-\mathfrak{d}^i_{xy}) E^{i}_{xy} 
+ \sum_{(x,y) \in \Virr} \mathfrak{d}^i_{xy}(1-E^{i}_{xy}) \Bigg)
\end{equation}
The same objective function can be used for
the symbolic ultrametric completion and deletion problem. 

\smallskip
For the symbolic ultrametric completion
we must ensure that $\delta(x,y)=d(x,y)$ for all $d(x,y)\neq \ast$.
Hence we set for all $x,y$ with $d(x,y)=i\neq \ast$:
\begin{align}
E^{i}_{xy} =1. 
 \label{eee1}
\end{align}

For the symbolic ultrametric deletion we must ensure that
$\delta(x,y)=d(x,y)$ or $\delta(x,y)= \ast$.  In other words, for all
$d(x,y)=i\neq \ast$ it must hold that for some $j \in \Upsilon$ either
$E^{ij}_{xy}=1$ or $E^{\ast j}_{xy}=1$.  Hence, we set for all $(x,y)\in
\Virr$:
\begin{align}
  E^{\ast}_{xy}=1, \mbox{ if } d(x,y) = \ast \text{, and }  
  E^{i}_{xy}+E^{\ast}_{xy}=1,  \text{ else.} \tag{\ref{eee1}'}
  \label{eee2}
\end{align}

For the cograph editing problem we neither need Constraint \ref{eee1} nor
\ref{eee2}.  However, for all three problems we need the following.

\smallskip 
Each tuple $(x,y)$ with $x \neq y$ has exactly one pair of values $(i,j)
\in \Upsilon \times \Upsilon$ assigned to it, such that $E^{ij}_{xy} =
E^{ji}_{yx}$. Hence, we add the following constraints for all distinct
$(x,y) \in \Virr$ and $(i,j) \in \Upsilon \times \Upsilon$.
\begin{equation}
  \sum_{i,j \in \Upsilon} E^{ij}_{xy}=1  \text{ and } E^{ij}_{xy} = E^{ji}_{yx}.
\end{equation}

In order to satisfy Condition (U2) and thus, that all induced triangles
have at most two color-pairs we need to add the following constraints:
\begin{equation}
E^{ij}_{xy}+E^{kl}_{yz}+E^{rs}_{zx}\leq 2
\end{equation}
for all (not necessarily distinct) colors $i,j,k,l,r,s\in \Upsilon$ with
pairwise distinct $\{i,j\}$, $\{k,l\}$, and $\{r,s\}$ and for all distinct
$x,y,z\in V$.

Finally, in order to satisfy Condition (U1) and thus, that each
mono-chromatic subgraph comprising all arcs with fixed color $i$ is a
di-cograph, we need the a couple of constraints that encode forbidden
subgraphs. Since these conditions are straightforward to derive, we just
give two example constraints to avoid induced $P_4$'s and $\overline{N}$'s.
\begin{equation} 
E^{i}_{ab} + E^{i}_{ba} + E^{i}_{bc} + E^{i}_{cb} + 
E^{i}_{cd} + E^{i}_{dc} - E^{i}_{ac} - E^{i}_{ca} - 
E^{i}_{ad} - E^{i}_{da} - E^{i}_{bd} - E^{i}_{db} \leq 5
\label{fsg1}
\end{equation}
\begin{equation}
E^{i}_{ac} + E^{i}_{ca} + E^{i}_{ad} + E^{i}_{da} + 
E^{i}_{bd} + E^{i}_{db} + E^{i}_{ba} + E^{i}_{bc} + 
E^{i}_{dc} - E^{i}_{ab} - E^{i}_{cb} - E^{i}_{bd} \leq 8
\tag{\ref{fsg1}'}
\label{fsg2}
\end{equation}
for all $i \in \Upsilon$ and all ordered tuple $(a,b,c,d)$ of distinct
$a,b,c,d \in V$.

It is easy to verify that the latter ILP formulation needs
$O(|\Upsilon|^2|V|^2)$ variables and $O(|\Upsilon|^6|V|^3+|\Upsilon||V|^4)$
constraints.

\section{Concluding Remarks} 

From an applications point of view, the main result of this contribution is
a classification of those relationships between genes that can be derived
from a gene phylogeny and the knowledge of event types assigned to interior
\emph{nodes} of phylogenetic tree.  All such relations necessarily have
co-graph structure. Fitch's version of the orthology and paralogy relations
are the most important special cases. However, this class of relations is
substantially more general and also include non-symmetric relations. These
can account in particular for pairs of genes that are related by an
ancestral horizontal transfer event and keep track of the directionality of
the transfer. It remains an open question to what extent this
``lca-xenology'' relation can be inferred directly from sequence similarity
data similar to the orthology and paralogy relations. 

The most commonly used definition of the xenology relation, however, is
based on the presence of one or more horizontal transfer events along the
unique path in the gene tree that connects two genes. It cannot be
expressed in terms labels at the lowest common ancestor only. This raises the
question whether edge labeled phylogenetic trees given rise to similar
systems of relations on the leaf set. 

\section*{Acknowledgements}
  We thanks Maribel Hern{\'a}ndez-Rosales for discussions.  This work was
  funded by the German Research Foundation (DFG) (Proj.\ No.\ MI439/14-1 to
  P.F.S. and N.W.).

\bibliographystyle{plain}
\bibliography{biblio}
\end{document}